\def\special{1}
\def\specialproof{1}
\def\specialdefinition{1}
\def\specialremark{1}
\def\highlight{0}
\def\sidebar{1}
\theoremstyle{plain}
\newtheorem{thm1}{Theorem}[section]
\theoremstyle{remark}
\newtheorem{pthm1}[thm1]{Theorem}
\theoremstyle{plain}
\newtheorem{lem1}[thm1]{Lemma}
\theoremstyle{plain}
\newtheorem{obs1}[thm1]{Observation}
\theoremstyle{plain}
\newtheorem{inv1}[thm1]{Invariant}
\theoremstyle{plain}
\newtheorem{cor1}[thm1]{Corollary}
\theoremstyle{definition}
\newtheorem{defn1}[thm1]{Definition}
\theoremstyle{plain}
\newtheorem{fact1}[thm1]{Fact}
\theoremstyle{remark}
\newtheorem{rem1}[thm1]{Remark}
\theoremstyle{plain}
\newtheorem{prop1}[thm1]{Proposition}
\theoremstyle{plain}
\newtheorem{asmp1}[thm1]{Assumption}
\crefname{thm1}{Theorem}{Theorems}
\crefname{pthm1}{Theorem}{Theorems}
\crefname{lem1}{Lemma}{Lemmas}
\crefname{obs1}{Observation}{Observations}
\crefname{inv1}{Invariant}{Invariants}
\crefname{cor1}{Corollary}{Corollaries}
\crefname{defn1}{Definition}{Definitions}
\crefname{fact1}{Fact}{Facts}
\crefname{rem1}{Remark}{Remarks}
\crefname{prop1}{Proposition}{Propositions}
\crefname{asmp1}{Assumption}{Assumptions}
\crefname{subsection}{Subsection}{Subsections}
\newenvironment{proof}[1][\protect\proofname]{\par
\normalfont\topsep6\p@\@plus6\p@\relax
\trivlist
\itemindent\parindent
\item[\hskip\labelsep\scshape #1]\ignorespaces
}{%
\endtrivlist\@endpefalse
}
\providecommand{\proofname}{Proof}
\newenvironment{thm}[1][]{%
\begin{thm1}[#1]%
}{\end{thm1}%
}
\newenvironment{lem}[1][]{%
\begin{lem1}[#1]%
}{\end{lem1}%
}
\newenvironment{obs}[1][]{%
\begin{obs1}[#1]%
}{\end{obs1}%
}
\newenvironment{inv}[1][]{%
\begin{inv1}[#1]%
}{\end{inv1}%
}
\newenvironment{fact}[1][]{%
\begin{fact1}[#1]%
}{\end{fact1}%
}
\newenvironment{rem}[1][]{%
\begin{rem1}[#1]%
}{\end{rem1}%
}
\newenvironment{pthm}[1][]{%
\begin{pthm1}[#1]%
}{\end{pthm1}%
}
\newenvironment{cor}[1][]{%
\begin{cor1}[#1]%
}{\end{cor1}%
}
\newenvironment{defn}[1][]{%
\begin{defn1}[#1]%
}{\end{defn1}%
}
\newenvironment{asmp}[1][]{%
\begin{asmp1}[#1]%
}{\end{asmp1}%
}
\newenvironment{prop}[1][]{%
\begin{prop1}[#1]%
}{\end{prop1}
}%
        \renewenvironment{thm}[1][]{%
        \begin{mdframed}[nobreak=false,backgroundcolor=Aquamarine!60]\begin{thm1}[#1]%
        }{\end{thm1}\end{mdframed}%
        }
        \renewenvironment{lem}[1][]{%
        \begin{mdframed}[nobreak=false,backgroundcolor=YellowGreen!60]\begin{lem1}[#1]%
        }{\end{lem1}\end{mdframed}%
        }
        \renewenvironment{obs}[1][]{%
        \begin{mdframed}[nobreak=false,backgroundcolor=Salmon!60]\begin{obs1}[#1]%
        }{\end{obs1}\end{mdframed}%
        }
        \renewenvironment{prop}[1][]{%
        \begin{mdframed}[backgroundcolor=Goldenrod!60]\begin{prop1}[#1]%
        }{\end{prop1}\end{mdframed}%
        }
\let\expandafter\oldproof\csname\string\proof\endcsname
        \let\oldendproof\endproof
        \renewenvironment{proof}[1][\proofname]{%
        \begin{mdframed}[nobreak=false,backgroundcolor=lightgray!60]\oldproof[#1]%
        }{\oldendproof\end{mdframed}}
        \renewenvironment{defn}[1][]{%
        \begin{mdframed}[innerbottommargin=0.1cm,innertopmargin=0.1cm,backgroundcolor=Apricot!60]\begin{defn1}[#1]%
        }{\end{defn1}\end{mdframed}%
        }
\LinesNumbered \RestyleAlgo{boxruled}
\date{}
\newcommand{\algorithmfootnote}[2][\footnotesize]{%
\let\old@algocf@finish\@algocf@finish% Store algorithm finish macro
\def\@algocf@finish{\old@algocf@finish% Update finish macro to insert "footnote"
\leavevmode\rlap{\begin{minipage}{\linewidth}
#1#2
\end{minipage}}%
}%
}
\definecolor{darkred}{RGB}{200,0,0}
\newcommand{\opt}{\operatorname{OPT}}
\newcommand{\alg}{\operatorname{ALG}}
\newcommand{\prob}{{\sc sppt}\xspace}
\newcommand{\pr}[1]{\mleft(#1\mright)}
\newcommand{\pb}[1]{\mleft[#1\mright]}
\newcommand{\pc}[1]{\mleft\{#1\mright\}}
\newcommand{\ps}[1]{\mleft|#1\mright|}
\newcommand{\ceil}[1]{\mleft\lceil#1\mright\rceil}
\newcommand{\floor}[1]{\mleft\lfloor#1\mright\rfloor}
\newcommand{\IR}[2]{\mleft[#1,#2\mright)}
\newcommand{\cset}[2]{\pc{#1\middle|#2}}
\newcommand{\base}[2]{X(#1,#2)}
\newcommand{\wbase}[2]{\beta(#1,#2)}
\newcommand{\cls}[1]{\ell_{#1}}
\newcommand{\pt}[1]{p_{#1}}
\newcommand{\pts}[1]{p(#1)}
\newcommand{\wt}[1]{w_{#1}}
\newcommand{\wts}[1]{w(#1)}
\NewDocumentCommand{\Wtgen}{s!D[]{}!D<>{}!d()}{
    \IfBooleanTF{#1}{
        \IfNoValueTF{#4}{
            W^{*#3}_{#2}
        }{
            W^{*#3}_{#2}\pr{#4}
        }
    }{
        \IfNoValueTF{#4}{
            W^{#3}_{#2}
        }{
            W^{#3}_{#2}\pr{#4}
        }
    }
}
\NewDocumentCommand{\Vgen}{s!D[]{}!D<>{}!d()}{
    \IfBooleanTF{#1}{
        \IfNoValueTF{#4}{
            V^{*#3}_{#2}
        }{
            V^{*#3}_{#2}\pr{#4}
        }
    }{
        \IfNoValueTF{#4}{
            V^{#3}_{#2}
        }{
            V^{#3}_{#2}\pr{#4}
        }
    }
}
\NewDocumentCommand{\reqsgen}{s!D[]{}!D<>{}!d()}{
    \IfBooleanTF{#1}{
        \IfNoValueTF{#4}{
            Q^{*#3}_{#2}
        }{
            Q^{*#3}_{#2}\pr{#4}
        }
    }{
        \IfNoValueTF{#4}{
            Q^{#3}_{#2}
        }{
            Q^{#3}_{#2}\pr{#4}
        }
    }
}
\newcommand{\reqs}[1]{\reqsgen(#1)}
\NewDocumentCommand{\Wt}{sm}{
    \IfBooleanTF{#1}{
        \Wtgen<*>(#2)
    }{
        \Wtgen(#2)
    }
}
\newcommand{\rpt}[2]{y_{#1}(#2)}
\newcommand{\rpts}[2]{y(#1,#2)}
\newcommand{\rptso}[2]{y^*(#1,#2)}
\newcommand{\ept}[1]{\tilde{p}_{#1}}
\newcommand{\SC}[1][]{{\mathrm{SC}}_{#1}}
\newcommand{\ope}[1][]{1+\epsilon_{#1}}
\newcommand{\xbar}[1]{#1\in \mathbb{R}^+\cup \pc{0}}
\NewDocumentCommand{\xpt}{om}{\IfNoValueTF{#1}{x_{#2}}{x_{#2}(#1)}}
\NewDocumentCommand{\score}{om}{\IfNoValueTF{#1}{\Lambda_{#2}}{\Lambda_{#2}({#1})}}
\newcommand{\rpto}[2]{y^*_{#1}(#2)}
\newcommand{\cont}[4][]{\gamma^{#1}_{#2}(#3,#4)}
\NewDocumentCommand{\cov}{omm}{\IfNoValueTF{#1}{B\pr{#2,#3}}{B_{#1}\pr{#2,#3}}}
\newcommand{\cbc}{\theta}
\newcommand{\cbp}{\theta'}
\newcommand{\p}[2][]{\pi_{#2}^{#1}}
\newcommand{\diff}[1]{\,\mathrm{d}#1}
\newcommand{\dstr}{\mu}
\newcommand{\bround}{\lambda}
\begin{document}
\title{Flow Time Scheduling with Uncertain Processing Time}
\author{%
%	\hfill
%	\begin{tabular}[width = 0.3\textwidth]{c}
%		Yossi Azar\tabularnewline
%		\textsf{\small{}azar@tau.ac.il}\tabularnewline
%		{\small{}Tel Aviv University}\tabularnewline
%	\end{tabular}\hfill
%	\begin{tabular}[width = 0.3\textwidth]{c}
%		Stefano Leonardi\tabularnewline
%		\textsf{\small{}leonardi@diag.uniroma1.it}\tabularnewline
%		{\small{}Sapienza University of Rome}\tabularnewline
%	\end{tabular}\hfill
%	\begin{tabular}[width = 0.3\textwidth]{c}
%		Noam Touitou\tabularnewline
%		\textsf{\small noamtouitou@mail.tau.ac.il}\tabularnewline
%		{\small{}Tel Aviv University}\tabularnewline
%	\end{tabular}
\begin{tabular}{>{\centering\arraybackslash}p{4.5cm}>{\centering\arraybackslash}p{4.5cm}>{\centering\arraybackslash}p{4.5cm}}
Yossi Azar\thanks{Supported in part by the Israel Science Foundation (grant No. 2304/20 and grant No. 1506/16).} & Stefano Leonardi\thanks{Supported by the ERC Advanced Grant 788893 AMDROMA ``Algorithmic and Mechanism Design Research in Online Markets'' and MIUR PRIN project ALGADIMAR ``Algorithms, Games, and Digital Markets''.} & Noam Touitou\tabularnewline
\textsf{\small{}azar@tau.ac.il} & \textsf{\small{}leonardi@diag.uniroma1.it} & \textsf{\small noamtouitou@mail.tau.ac.il}\tabularnewline
{\small{}Tel Aviv University} & {\small{}Sapienza University of Rome} & {\small{}Tel Aviv University}\tabularnewline
\end{tabular}
}
\maketitle

\begin{abstract}
We consider the problem of online scheduling on a single machine in order to minimize weighted flow time.
The existing algorithms for this problem (STOC '01, SODA '03, FOCS '18) all require exact knowledge of the processing time of each job.
This assumption is crucial, as even a slight perturbation of the processing time would lead to polynomial competitive ratio.
However, this assumption very rarely holds in real-life scenarios.

In this paper, we present the first algorithm for weighted flow time which do \emph{not} require exact knowledge of the processing times of jobs.
Specifically, we introduce the Scheduling with Predicted Processing Time (\prob) problem, where the algorithm is given a prediction for the processing time of each job, instead of its real processing time.
For the case of a constant factor distortion between the predictions and the real processing time, our algorithms match \emph{all} the best known competitiveness bounds for weighted flow time -- namely $O(\log P), O(\log D)$ and $O(\log W)$, where $P,D,W$ are the maximum ratios of processing times, densities, and weights, respectively.
For larger errors, the competitiveness of our algorithms degrades gracefully.
\end{abstract}
\thispagestyle{empty} \clearpage \setcounter{page}{1}

\section{Introduction}
\label{sec:Intro}
%! Author = NoamTouitou
%! Date = 16/07/2020  

The field of online scheduling focuses on efficient processing of jobs by machines, where the jobs are not known in advance but are released over time.
An algorithm in this setting must assign each job to a machine, which must then process the job for some specific amount of time (called the \emph{processing time} of the job).

In a classic setting for online scheduling, the algorithm is given a single machine, and aims to minimize the \emph{total flow time}: the sum over jobs of the time that the job was pending in the algorithm (i.e. the difference between its completion time and its release time).
In this setting, we usually allow \emph{preemption} of a currently-processed job, which is pausing its processing until some future time, in which the processing will resume from the state in which it was paused.
A classic result by Smith from 1956~\cite{NAV:NAV3800030106} shows that this problem can be solved optimally using the SRPT (shortest remaining processing time) schedule.

A natural generalization of the total flow time metric is total \emph{weighted} flow time, in which each job also has a \emph{weight}.
The goal is therefore to minimize the weighted sum of flow times, where the weight of a job is the weight of its flow time in the goal function.
While natural, this problem has proved much harder than minimizing unweighted flow time: the best known algorithms~\cite{DBLP:journals/talg/BansalD07,DBLP:conf/focs/AzarT18} have competitive ratios with logarithmic dependence on various parameters of the input; such parameters include the ratio $P$ of the largest to smallest processing time in the input, the ratio $W$ of the largest to smallest job weight in the input, and the ratio $D$ of the largest to smallest density in the input (where the density of a job is the ratio of its weight to its processing time).
As shown by Bansal and Chan~\cite{DBLP:conf/soda/BansalC09}, a dependence on these parameters is necessary (without added leniency such as speed augmentation).

However, these algorithms for both weighted and unweighted flow time make the assumption that the processing time of a job becomes known to the algorithm upon the release of the job.
This assumption almost never holds in real-world scenarios, as nearly all computer programs of some complexity have varying running times.
In the case of weighted flow-time, this assumption is ubiquitous -- it is crucial to all known algorithms of sub-polynomial competitiveness~\cite{DBLP:conf/stoc/ChekuriKZ01,DBLP:journals/talg/BansalD07,DBLP:conf/focs/AzarT18}, as replacing the real processing time with the predicted processing time in those algorithms would yield polynomial competitive ratios (e.g. $\Omega(P)$) even for a slight misprediction by a factor of $1+\epsilon$ (compare this to the original polylogarithmic competitiveness of those algorithms).

\paragraph{Predicted Processing Time and Robustness.} While knowing the exact processing time is infeasible in a real-world scenario, one could hold a \emph{prediction} for the processing time of a job, of varying accuracy.
Such predictions might be obtained from machine-learning algorithms, simple heuristics, or any other source.
%A desirable goal for an algorithm provided with such predictions is to achieve great competitiveness when the prediction is accurate, and degrade gracefully with the error of the prediction.
A good algorithm in this setting would be able to perform well given only the prediction, rather than the actual procesing time, upon the release of a job.

The fact that the known guarantees for scheduling problems apply only for algorithms that know the processing time completely creates a gap between theory and practice.
In this paper, we address this gap.
Namely, we introduce the model of \emph{scheduling with predicted processing time} (\prob), in which upon the release of a job, we are provided with a prediction of the job's processing time.
An instance of the problem is characterized by a \emph{distortion parameter}, which is the maximum ratio over jobs of the job's real processing time and its predicted processing time.

In this paper, we present competitive algorithms which are \emph{$\dstr$-robust}; that is, they maintain their competitive ratio guarantees for all inputs with distortion parameter at most $\dstr$.
The competitive ratios have a polynomial dependency on $\dstr$ (contrast  with previously-known algorithms, where the competitive ratio immediately degrades to e.g. $\Omega(P)$  even when $\dstr = 1+\epsilon$ for a small $\epsilon$).
We assume that a value for the single parameter $\dstr$ that bounds the distortion of the input can be learned, in order to apply the correct robust algorithm; the case in which this assumption does not hold seems surprisingly tricky, and is discussed in \cref{sec:Disc}.

%which withstand an input with distortion parameter $\dstr$, while only losing a $\textrm{poly}(\dstr)$ factor in competitiveness; we call an algorithm with these properties a \emph{$\dstr$-robust} algorithm.

At this point, one might ask whether the definition of the distortion parameter as the \emph{maximum} distortion is indeed the correct measure for the error of the prediction; for example, could guarantees in some average of errors, e.g. geometric, be given?
However, our choice of maximum distortion is not overly-conservative, but rather prescribed by the nature of the flow-time metric -- specifically, due to its sensitivity to bad local competitive ratio.
Consider for example, a scheduling instance which initially releases jobs with very erroneous, which would cause the algorithm to perform badly at some point in time.
The adversary could then maintain this bad perform by releasing a stream of very short jobs; this well-known ``bombardment'' technique would guarantee a bad (global) competitive ratio for the algorithm.
However, the processing time of these short ``bombardment'' jobs could be predicted with full accuracy, yielding a geometric average of errors which is arbitrarily close to $1$ as the ``bombardment'' continues; this would also be the case for every other reasonable average.

%The algorithms we produce for the {\prob} model are {$\dstr$-Robust}: they can withstand a distortion of $\dstr$ in the processing time of jobs, and lose only a $\textrm{poly}(\dstr)$ factor in competitiveness; Thus, a good choice of $\dstr$ would be the maximum

%(in lieu of its real processing time), as well as multiplicative error bounds for the prediction, such that the real processing time falls in the range $\big(\text{prediction} \cdot \text{lower error bound} , \text{prediction} \cdot \text{upper error bound} \big)$. This pair of prediction + error bounds is a common output of machine learning algorithms in practice, which motivates this model.

\subsection{Our Results}
\paragraph{The {\prob} Problem.} We present algorithms for {\prob} for minimizing weighted flow time.
An algorithm, coupled with some competitiveness guarantee, is called \emph{$\dstr$-robust} if its competitiveness guarantee is achieved for any instance with distortion parameter at most $\dstr$ (and not only for inputs where the prediction matches the real processing time).

%over the best known algorithm for the original problem (with the real processing times).
%Given a distortion parameter of at most $\dstr$, these algorithms achieve performance that is a $\textrm{poly}(\dstr)$ factor from the performance of the
%We call these algorithms \emph{$\dstr$-Robust}

Without loss of generality, we assume one-sided error: that is, that the real processing time of each job is at least the predicted processing time, and at most $\dstr$ times the predicted processing time.
This one-sided error can be trivially obtained from a two-sided error with distortion parameter $\dstr'$ by dividing each prediction by the distortion parameter $\dstr'$; the new distortion parameter $\dstr$ of the new one-sided-error instance would be at most $(\dstr')^2$.

Recall that $P,D,W$ are the ratios of maximum to minimum processing time, density and weight, respectively.
For any $\dstr$, our results are:

\begin{enumerate}
    \item A $\dstr$-robust algorithm for {\prob} with weighted flow time, which is simultaneously $O(\dstr^3 \log(\dstr P))$-competitive and $O(\dstr^3 \log (\dstr  D))$-competitive.

    \item A $\dstr$-robust algorithm for {\prob} with weighted flow time, which is $O(\dstr^2 \log W)$-competitive.
\end{enumerate}

In the process of designing the $O(\dstr^2 \log W)$-competitive algorithm, we also design a $\dstr$-robust algorithm for {\prob} with \emph{unweighted} flow time, which is $2\ceil {{\dstr}^2}$-competitive.
Note that, somewhat surprisingly, even for $\dstr$ arbitrarily close to $1$, there exists a lower bound of $2$ on the competitiveness of any deterministic $\dstr$-robust algorithm, which we show in \cref{sec:UWLB}.

For the weighted setting, and for a constant $\dstr$, our algorithms match the best known results in terms of all three parameters: the $O(\log W)$-competitive algorithm of Bansal and Dhamdhere~\cite{DBLP:journals/talg/BansalD07}, and the $O(\log P)$ and $O(\log D)$ competitive algorithms in \cite{DBLP:conf/focs/AzarT18}.
Even for distortion significantly larger than constant, e.g. $\dstr = \Theta(\log P)$, our algorithms still obtain polylogarithmic competitiveness.

%%%%%%%%%%%%%%%
A special case of {\prob} is the problem of semiclairvoyant scheduling~\cite{DBLP:conf/soda/BenderMR02,DBLP:journals/tcs/BecchettiLMP04}.
In this problem, instead of getting the processing time $\pt{q}$ of a  job $q$, we are given its \emph{class}, which is $\ell_q = \floor{\log_{\rho}\pt{q}}$ for some constant $\rho > 1$.
Applying our algorithms for \prob we obtain the first semiclairvoyant algorithms for weighted flow time, which match the best known guarantees for the \emph{clairvoyant} setting.
As a side result, we also obtain an improvement to the best known result for unweighted flow time.
The exact statement of our results for the semiclairvoyant setting are given in \cref{sec:SCS}.

\paragraph{Paper Structure.} The $\dstr$-robust algorithm with logarithmic dependency on either $P$ or $D$ is presented in \cref{sec:CLP}.
The $\dstr$-robust algorithm with logarithmic dependency on $W$ is shown in \cref{sec:LogW}. \Cref{sec:LogW} comprises two subsections: \cref{subsec:UW} presents a $\dstr$-robust algorithm for the unweighted setting, while \cref{subsec:LogW} uses the algorithm of \cref{subsec:UW} to construct the algorithm for the weighted setting.
\Cref{sec:UWLB} shows a lower bound of $2$-competitiveness for $\dstr$-robust algorithms in the unweighted setting, even for vanishingly small distortion.
\Cref{sec:SCS} discusses the application of the algorithms for {\prob} to the semiclairvoyant setting.

\subsection{Our Techniques}
In developing robust algorithms for \prob, we have used orthogonal methods and analyses for the algorithm for $P$ and $D$ and the algorithm for $W$.
Curiously, it seems that the techniques of each single algorithm could not be applied to obtain the results of the other algorithm.

\textbf{The $O(\dstr^3\log (\dstr P))$-competitive algorithm for \prob}, which is also $O(\dstr^3 \log (\dstr D) )$-competitive, is a simply-stated and novel algorithm.
The flavor of its analysis is somewhat reminiscent of the $O(\log^2 P)$-competitive algorithm of Chekuri \emph{et al.}~\cite{DBLP:conf/stoc/ChekuriKZ01} (though, of course, our analysis shows an improved competitiveness, as our guarantee is $O(\log P)$ when $\dstr$ is constant).

The algorithm begins by rounding the weights of jobs up to powers of $\Theta(\dstr)$; this is the only place in the algorithm where $\dstr$ is used.
In weighted flow time, an algorithm must choose whether to prioritize high-weight jobs or cost-effective jobs (high density).
The algorithm contends with this issue in the following way: it chooses the maximum weight $w$ of a living job, and then chooses a job from the highest density class such that this class contains at least $w$ weight.
Inside this density class, the algorithm always chooses the highest weight job.

In the analysis of this algorithm, we show that the maximum weight in the algorithm never exceeds the \emph{total} weight in the optimal solution.
As for the lower weight classes which the algorithm \emph{does} use, a volume-based analysis shows that the total weight in those classes is bounded.

\textbf{The $O(\dstr^2   \log W)$-competitive algorithm for {\prob} with weighted flow time} is completely different in both methods and analysis from the previous algorithm, instead using the analysis framework introduced by Bansal and Dhamdhere \cite{DBLP:journals/talg/BansalD07}.

The algorithm is constructed in two steps.
First, we introduce a $\dstr$-robust algorithm for minimizing \emph{unweighted} flow time.
This algorithm uses two bins to maintain pending jobs, a \emph{full} bin and a \emph{partial} bin, which contain roughly the same number of jobs.
Only jobs in the partial bin are processed by the algorithm, so that the jobs in the full bin retain their original processing time.
Thus, the full bin provides a ``counterweight'' of full jobs to the partial bin; the intuition for this is that the algorithm wants to limit the fraction of partially-processed jobs at any point in time.
Indeed, naive processing without attempting to limit the number of partially-processed jobs would result in unbounded competitive ratio, as observed in~\cite{DBLP:journals/tcs/BecchettiLMP04}.
The competitiveness proof of this algorithm bears similarities to the proof framework of Bansal and Dhamdhere~\cite{DBLP:journals/talg/BansalD07}, also found in~\cite{DBLP:conf/focs/AzarT18}.
However, both of those papers use SRPT as a crucial component in their algorithms and analyses, which is infeasible in {\prob}.
Thus, our algorithms are designed to bypass this requirement.

When designing the priority of jobs in the full bin, we need to overcome the fact the algorithm does not know the remaining processing times of jobs (prohibiting strategies such as SRPT).
Instead, the algorithm identifies and eliminates \emph{violations}, which are pairs of jobs \emph{provably} not ordered according to SRPT; this occurs when their predicted processing times differ by more than $\dstr$.
The algorithm solves violations using a \emph{rotation} operation, which identifies all violations involving a job and rotates their priorities.
The resulting priority sequence, which contains no violations, is \emph{``quasi-SRPT''} -- that is, the upper bound for the processing time of a low-priority job is always higher than the lower bound for the processing time of a high-priority job.

Interestingly, the priority among the jobs in the \emph{partial} bin is LIFO, which is perhaps counterintuitive.
While we are not sure that LIFO is the only option for achieving competitiveness, our proof crucially requires this feature.

We use the robust algorithm for unweighted flow time as a component in the $O(\dstr^2 \log W)$-competitive algorithm for {\prob} with weighted flow time.
This algorithm is composed of multiple full-partial bin pairs, one for each of the $\log W$ weight classes.
Each bin pair behaves as an instance of the unweighted algorithm, and the algorithm chooses the heaviest bin pair to process at any given time (more specifically, the bin pair with the heaviest partial bin is processed).
The proof framework used for the unweighted algorithm is now utilized, as it naturally supports multiple bin pairs.

%Note that this extension from unweighted to weighted is \textbf{not} black-box, and crucially depends on the structure of the unweighted algorithm -- it is quite possible that multiple good algorithms for the unweighted case exist, but their extension using weight-class bins would fail, as the time of the algorithm would be divided poorly between the different bins.
%\alert{the last sentence is due to some reviewers saying that the extension using bins is ``standard''; it is not.}

One could hope that \emph{any} $\dstr$-robust algorithm for the unweighted setting could be extended (at a loss of $O(\log W)$) to the weighted setting through binning.
However, this is not known to be the case for weighted flow time -- the binning technique is \textbf{not} black-box, and previous algorithms using binning (\cite{DBLP:journals/talg/BansalD07,DBLP:conf/focs/AzarT18}) demanded specific properties from the algorithm used in each bin.
The main focus in designing the unweighted component of our algorithm is for it to ``play nice'' with the binning technique; this fact requires a delicate assignment of job priorities.

\subsection{Related Work}
The first algorithm with polylogarithmic guarantee for minimizing weighted flow time on a single machine was presented by Chekuri \emph{et al.}~\cite{DBLP:conf/stoc/ChekuriKZ01}, which was $O(\log^2 P)$-competitive.
Bansal and Dhamdhere~\cite{DBLP:journals/talg/BansalD07} gave an $O(\log W)$-competitive algorithm.
Bansal and Chan~\cite{DBLP:conf/soda/BansalC09} then showed that any deterministic algorithm for the problem must be \scalebox{0.9}{$\Omega\pr{\min\pc{\sqrt{\frac{\log W}{\log\log W}},\sqrt{\frac{\log \log P}{\log \log \log P}}}}$}-competitive, showing that dependence on at least one parameter is necessary.
In~\cite{DBLP:conf/focs/AzarT18}, $ O(\log P) $-competitive and $O(\log D)$-competitive algorithms were given, where $D$ is the maximum ratio of densities of jobs (where density is processing time over weight).
For more than a single machine, the problem is essentially intractable -- Chekuri \emph{et al.}~\cite{DBLP:conf/stoc/ChekuriKZ01} show a lower bound of \scalebox{0.9}{$\Omega\pr{\min\pc{\sqrt{P},\sqrt{W},(n/m)^{\frac{1}{4}}}}$}.
With $(1+\epsilon)$ speed augmentation, this problem is significantly easier~\cite{DBLP:journals/ipl/KimC03a,DBLP:journals/talg/BansalD07,DBLP:journals/mst/ZhuCL15}.

The field of combining online algorithms with machine-learned predictions has seen significant interest in the last few years, with some works involving predictions in scheduling.
For example, Purohit \emph{et al.}~\cite{purohit2018improving} studied minimizing job completion time with predicted processing time.
Other forms of prediction are also used; for example, Lattanzi \emph{et al.}~\cite{LattanziLMV20} studied the case of restricted assignment to multiple machines to minimize makespan, and used an algorithm-specific prediction (machine weights).
Scheduling with predictions has also been studied in the queueing theory setting~\cite{mitzenmacher:LIPIcs:2020:11699,DBLP:journals/corr/abs-2006-15463}.
Additional work on algorithms with predictions can be found in \cite{DBLP:conf/icml/LykourisV18, DBLP:conf/nips/MedinaV17, mitzenmacher2020algorithms}

The case in which no prediction is given (and thus the algorithm knows nothing about the processing time of jobs) is called the \emph{nonclairvoyant} model, and was studied in~\cite{Motwani1994,Becchetti2001,DBLP:journals/algorithmica/BansalDKS04,Im2014,Im2017}.
For minimizing unweighted flow time on a single machine, an $O(\log n)$-competitive randomized algorithm is given in~\cite{Becchetti2001}, where $n$ is the number of jobs.
This matches the lower bound of $\Omega(\log n)$ for randomized algorithms in~\cite{Motwani1994}.
Note that randomization is needed, and deterministic algorithms cannot get a sub-polynomial guarantee~\cite{Motwani1994}.
We are not aware of any nonclairvoyant results for minimizing weighted flow time.

Bechetti \emph{et al.}~\cite{DBLP:journals/tcs/BecchettiLMP04} presented a $13$-competitive algorithm for the \emph{semiclairvoyant} model, with the goal of minimizing unweighted flow time.
The semiclairvoyant setting was also considered by Bender \emph{et al.}~\cite{DBLP:conf/soda/BenderMR02} for minimizing stretch (where the stretch of a job is the ratio of its flow time to its length)

\section{Preliminaries}
\label{sec:Prelim}

We first formalize the scheduling problem we consider in this paper.
Then, we formalize the {\prob} model, and the prediction given to the algorithm.

\paragraph*{The Scheduling Problem.}
In the scheduling problem we consider, jobs arrive over time.
Each job has its own processing time, which is the time it must be processed by the machine in order to be completed.
The algorithm may choose at any point in time which job to process, and is allowed preemption.

An input consists of a set of jobs $Q$.
Each job $q\in Q$ has the following properties:
\begin{enumerate}[itemsep=0.1em,topsep=0.1em]%\setlength\itemsep{0.1em} 
    \item A \emph{processing time} $\pt{q}>0$.

    \item A \emph{weight} $\wt{q}>0$.

    \item A \emph{release time} $r_q$.
\end{enumerate}

The goal of the algorithm $\alg$ is to minimize the weighted flow time  $F^{\alg} := \sum_{q\in Q} \wt{q}\cdot \pr{c_q^{\alg} - r_q}$
%\[
%    F^{\alg} = \sum_{q\in Q} \wt{q}\cdot \pr{c_q^{\alg} - r_q}
%\]
where $c_q^{\alg}$ is the time in which request $q$ is completed in the algorithm.
An equivalent definition, which is the dominant one in this paper, is $F^{\alg} = \int_0^{\infty} \pr{\sum_{q\in Q^{\alg}(t)}\wt{q}} \diff{t}$
%\[
%    F^{\alg} = \int_0^{\infty} \pr{\sum_{q\in Q^{\alg}(t)}\wt{q}} \diff{t}
%\]
where $Q^{\alg}(t)$ is the set of pending jobs in the algorithm at time $t$.

\paragraph*{The {\prob} Model.}

In the {\prob} model, the previous scheduling problem has the following modification.
When a job $q$ arrives, the algorithm does not become aware of $\pt{q}$.
Instead, it is given $\ept{q}$, the predicted processing time of the job.

An algorithm for \prob, coupled with a competitiveness guarantee, is called $\dstr$-robust if it maintains its competitiveness guarantee for inputs in which
$\ept{q} \le \pt{q} <\dstr \ept{q}$  for every job $q$ in the input.

%and some error parameter $\epsilon_q \ge 0$, with the one-sided-error guarantee that $\pt{q} \in \IR{\ept{q}}{ (\ope[q])\cdot \ept{q}}$.

\paragraph*{Notation.}
We denote by $\rpt{q}{t}$ the \emph{remaining processing time} of $q$ at $t$ -- that is, $\pt{q}$ minus the amount of time already spent processing $q$ until time $t$.
This amount is also called the \emph{volume} of $q$ at $t$.

For a set of jobs $Q'$, we define:
\begin{itemize}[itemsep=0.1em,topsep=0.1em]
    \item $\wts{Q'} := \sum_{q\in Q'} \wt{q}$.
    \item $\pts{Q'} := \sum_{q\in Q'} \pt{q}$.
    \item $\rpts{Q'}{t} := \sum_{q\in Q'} \rpt{q}{t} $.
\end{itemize}

%We define  and  in a similar way.

For any time $t$:
\begin{itemize}[itemsep=0.1em,topsep=0.1em]%\setlength\itemsep{0.1em}
    \item We define $\reqsgen(t)$ to be the set of pending jobs at $t$ in the algorithm.
    \item We define $\delta(t):=|\reqsgen(t)|$ to be the number of pending jobs at $t$ in the algorithm.
    \item We define $\Vgen(t):= \rpts{\reqsgen(t)}{t}$ to be the total volume of pending jobs at $t$ in the algorithm.
    \item We define $\Wtgen(t):= \wts{\reqsgen(t)}$ to be the total weight of pending jobs at $t$ in the algorithm.
\end{itemize}

%We use $\reqsgen(t)$ to refer to the set of jobs pending in the algorithm at time $t$.
%As a shorthand, we define $\Vgen(t) := \rpts{\reqsgen(t)}{t}$ (the total volume of pending jobs at $t$), $\delta(t) := |\reqsgen(t)|$ (the number of pending jobs at $t$)
%and $\Wtgen(t) := \wts{\reqsgen(t)}$ (the total weight of pending jobs at $t$).
%
%As a shorthand, we also define $\Vgen(t)$ to be the total volume of pending jobs at time $t$ in the algorithm.
%That is, $V(t) = \rpts{Q(t)}{t}$ where $Q(t)$ is the set of jobs pending in the algorithm at time $t$.

Fixing the optimal solution $\opt$ for the given input, we refer to the attributes of $\opt$ using the superscript~* (e.g. $\rpto{q}{t},\reqsgen*(t),\delta^{*}(t),\Vgen*(t),\Wtgen*(t)$).
%Specifically, for every time $t$:
%\begin{itemize}\setlength\itemsep{0.1em}
%    \item For each job $q$, we define $\rpto{q}{t}$ to be the remaining processing time of $q$ at time $t$ in $\opt$.
%    \item We define $\reqsgen*(t)$ to be the set of pending jobs at $t$ in $\opt$.
%    \item We define $\delta^{*}(t)$ to be the number of pending jobs at $t$ in $\opt$.
%    \item We define $\Vgen*(t)$ to be the total volume of pending jobs at $t$ in $\opt$.
%    \item We define $\Wtgen*(t)$ to be the total weight of pending jobs at $t$ in $\opt$.
%\end{itemize}
%or each job $q$ and time $t$, we define $\rpto{q}{t}$ to be the remaining processing time of $q$ at time $t$ in the optimal solution.
%We also define $V^*(t)$, which is the total volume of pending jobs at $t$ in the optimal solution.
%We denote by $Q^*(t)$ the set of jobs pending in the optimal solution at time $t$, and write $\delta^*(t) = \ps{Q^*(t)}$.
%We also define $\Wt*{t} := \wts{Q^*(t)}$ to be the total weight of pending jobs at $t$ in the optimal solution.

\section{Weighted Flow Time -- Ratio of Processing Times}
\label{sec:CLP}

\newcommand{\lastt}[1]{t_{#1}}

\NewDocumentCommand{\reqp}{smmm}{
    \IfBooleanTF{#1}{
        \reqsgen*[#2,#3](#4)
    }{
        \reqsgen[#2,#3](#4)
    }
}

\newcommand{\nclass}{\Lambda}

\NewDocumentCommand{\Wtp}{smmmO{}}{
    \IfBooleanTF{#1}{
        \Wtgen[#2,#3]<*#5>(#4)
    }{
        \Wtgen[#2,#3]<#5>(#4)
    }
}

%\IfNoValueTF{#4}{W_{#1,#2}(#3)}
 %                                               {W^{#4}_{#1,#2}(#3)}}

%\NewDocumentCommand{\WtpP}{mmmo}{\IfNoValueTF{#4}{W^{\textrm{p}}_{#1,#2}(#3)}
%                                                {W^{{\textrm{p}}{#4}}_{#1,#2}(#3)}}

\NewDocumentCommand{\WtpP}{mmmo}{\Wtp{#1}{#2}{#3}[\textrm{p}]}

\NewDocumentCommand{\WtpF}{mmmo}{\Wtp{#1}{#2}{#3}[\textrm{f}]}

%\NewDocumentCommand{\WtpF}{mmmo}{\IfNoValueTF{#4}{W^{\textrm{f}}_{#1,#2}(#3)}
%                                                    {W^{{\textrm{f}}{#4}}_{#1,#2}(#3)}}

%\NewDocumentCommand{\Wtop}{mmmo}{\IfNoValueTF{#4}{W^*_{#1,#2}(#3)}
%                                                {W^*{#4}_{#1,#2}(#3)}}

%\NewDocumentCommand{\Vp}{mmmo}{\IfNoValueTF{#4}{V_{#1,#2}(#3)}
%                                                {V^{#4}_{#1,#2}(#3)}}

\NewDocumentCommand{\Vp}{smmmO{}}{
    \IfBooleanTF{#1}{
        \Vgen[#2,#3]<*#5>(#4)
    }{
        \Vgen[#2,#3]<#5>(#4)
    }
}

%\NewDocumentCommand{\Vop}{mmmo}{\IfNoValueTF{#4}{V^*_{#1,#2}(#3)}
%                                            {V^*{#4}_{#1,#2}(#3)}}

%\newcommand{\dc}[1]{\operatorname{dc}\pr{#1}}

\newcommand{\edn}[1]{\tilde{\text{id}}_{#1}}

%this is now the ID class
\newcommand{\edc}[1]{\operatorname{idc}\pr{#1}}

\newcommand{\wc}[1]{\operatorname{wc}\pr{#1}}

\newcommand{\qw}[1]{u_{#1}}

%current stuff
\newcommand{\cwc}[1]{\operatorname{cwc}\pr{#1}}

\newcommand{\clw}[1]{\operatorname{clw}\pr{#1}}

\newcommand{\id}{{\sc ei}-density\xspace}

\NewDocumentCommand{\owc}{o}{\IfNoValueTF{#1}{i^*}{i^* \pr{#1}}}

In this section, we present a $\dstr$-robust algorithm which is $O(\dstr^3\log (\dstr P))$-competitive.
%We assume that $\dstr$ is known in advance, and thus that $\ept{q} \le \pt{q} \le \dstr \ept{q}$.

The main result of this section is the following theorem.
\begin{thm}
    \label{thm:CLP_Competitiveness}
    For every $\dstr$, there exists a $\dstr$-robust algorithm for the \prob problem that is $O(\dstr^3\log (\dstr P))$-competitive.
\end{thm}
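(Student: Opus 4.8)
The plan is to analyze the algorithm sketched in the introduction, which I first make precise. Fix $\alpha=\Theta(\dstr)$. The algorithm rounds every weight up to the nearest power of $\alpha$ (the only place $\dstr$ enters), and partitions jobs into \emph{density classes}: class $i$ consists of the jobs whose \emph{predicted density} $\wt{q}/\ept{q}$ lies in $[\alpha^i,\alpha^{i+1})$. Within each class it keeps the pending jobs ordered by decreasing (rounded) weight, breaking ties toward the most-processed job. At time $t$, let $w(t)$ be the largest rounded weight of a pending job, let $i^*(t)$ be the largest index with pending class-$i^*(t)$ weight at least $w(t)$, and process the first job of class $i^*(t)$. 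Since rounding weights up by a factor at most $\alpha$ inflates the objective of every schedule by at most $\alpha=\Theta(\dstr)$, it suffices to bound the rounded-weight flow time of the algorithm by $O(\dstr^3\log(\dstr P))\cdot F^{\opt}$, where $F^{\opt}$ uses true weights, losing only an $O(\dstr)$ factor overall.

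Write $F^{\alg}=\int_0^\infty \Wtgen(t)\,\diff t=\int_0^\infty\sum_i W_i(t)\,\diff t$, where $W_i(t)$ is the rounded pending weight of class $i$, and split the classes at each $t$ into the ``dense side'' (classes at least as dense as $i^*(t)$) and the ``sparse side'' (strictly less dense). For the dense side the crux is the claim that the maximum pending weight in the algorithm never exceeds, up to $O(\dstr)$, the total pending weight of $\opt$, i.e.\ $w(t)=O(\dstr)\,\Wtgen*(t)$. The argument: if $q$ is a heaviest pending job at $t$, then $q$ has been pending throughout $[r_q,t]$, so $w(\cdot)\ge \wt{q}$ there and the algorithm has worked continuously on jobs of predicted density $\ge \wt{q}/\ept{q}$ drawn from classes that carry at least $\wt{q}$ weight; accounting the volume so processed against the volume $\opt$ must still owe on jobs of that density or higher yields the bound, where the prediction error costs one factor of $\dstr$ because a job the algorithm treats as dense may truly be a factor $\dstr$ less dense. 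Given this, every class strictly denser than $i^*(t)$ carries less than $w(t)$ weight by the choice of $i^*(t)$, and only $O(\log_\alpha(\dstr P))$ classes can be nonempty at once, so the dense side (excluding the active class) contributes $O(\log_\alpha(\dstr P))\cdot w(t)=O(\dstr\log_\alpha(\dstr P))\,\Wtgen*(t)$ at time $t$; integrating gives $O(\dstr\log(\dstr P)/\log\dstr)\cdot F^{\opt}$.

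The active class $i^*(t)$ together with the sparse side is handled by a per-class volume argument in the style of Chekuri et al.~\cite{DBLP:conf/stoc/ChekuriKZ01}: for a fixed class $i$, $W_i(t)\le \alpha^{i+1}\,V_i(t)$ where $V_i(t)$ is the algorithm's pending class-$i$ volume, and $\int V_i(t)\,\diff t$ can be charged to $\opt$'s weighted flow time on class-$i$ jobs, since whenever class $i$ holds volume but is not being processed the algorithm is instead processing a job that is at least as dense and whose class is at least as heavy, so the algorithm's debt on class $i$ grows no faster than $\opt$'s — modulo the factor-$\dstr$ slack between predicted and true density (a second factor of $\dstr$) and the factor $\dstr$ lost because a job the algorithm believes has small remaining time may actually have up to $\dstr$ times more (the third). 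Summing over the $O(\log_\alpha(\dstr P))$ classes that can simultaneously hold relevant pending volume, and adding the dense-side and rounding losses, yields $F^{\alg}=O(\dstr^3\log(\dstr P))\,F^{\opt}$.

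I expect the main obstacle to be the sparse-side charge, and within it the requirement that only $O(\log_\alpha(\dstr P))$ classes — rather than $O(\log_\alpha(\dstr D))$ — contribute to the pending volume at any fixed time, which is exactly what separates the $\log P$ bound from the $\log D$ bound: one must argue that although densities may span a factor up to $D$, the pending \emph{volume} at any moment is concentrated in a band of $O(\log_\alpha P)$ processing-time scales, and that the ``heaviest job in the densest sufficiently heavy class'' rule never lets an under-served class accumulate more weight than $\opt$ pays for. Keeping the three factors of $\dstr$ (weight rounding, predicted-versus-true density slack, predicted-versus-true remaining-time slack) as the only losses, and threading this $\dstr$-slack consistently through both the dense-side ``max weight'' claim and the sparse-side charge, is the delicate part of the bookkeeping.
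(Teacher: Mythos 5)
Your algorithm is essentially the paper's (round weights to powers of $\Theta(\dstr)$, classify by predicted density, process the densest class that carries at least the maximum pending weight, prefer partially processed jobs), and your two-part plan -- a ``max pending weight $\le O(\dstr)\cdot$ OPT's total pending weight'' claim for the dense side plus a volume charge for the remaining classes -- mirrors the paper's outline. But both pivotal steps are asserted rather than proved, and the one-line justifications you give would fail. For the max-weight claim you argue that while the heaviest pending job $q$ is alive the algorithm ``has worked continuously on jobs of predicted density $\ge \wt{q}/\ept{q}$.'' That is not true of this rule: the chosen class is the densest one whose \emph{total} weight is at least the current maximum pending weight $w(\tau)$, and whenever a job heavier than $q$ is pending (or $q$'s class and all denser classes are light), the algorithm may process jobs strictly sparser than $q$. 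This is exactly why the paper proves the corresponding statement ($\clw{t}\le \owc$) indirectly, by a downward induction over the \id classes at the carefully chosen times $\lastt{j}$ (\cref{prop:CLP_DeltaAndWeightBounds}), interleaved with the weight-to-volume conversion of \cref{lem:CLP_VolumeConversion}; nothing playing that role appears in your sketch, and without it the dense-side bound has no support.

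The second gap is the one you yourself flag: your accounting needs ``only $O(\log_\alpha(\dstr P))$ classes can be nonempty (or contribute pending volume) at once,'' and that statement is simply false -- predicted densities span a factor up to $\dstr D$, so the number of simultaneously nonempty density classes can be $\Theta(\log(\dstr D))$, not $O(\log(\dstr P))$. The paper never bounds the number of nonempty classes; it gets the $\log(\dstr P)$ dependence because a \emph{fixed weight class} can meet at most $\nclass = O(\log(\dstr P))$ \id classes, and then it aggregates over weight classes via geometric sums (inside \cref{lem:CLP_VolumeConversion}, and again when counting how many indices $j$ can have $\clw{\lastt{j}} = \owc - k$ in the proof of \cref{lem:CLP_LocalCompetitiveness}). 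Since you explicitly leave this step as an ``expected obstacle,'' the $\log(\dstr P)$ (as opposed to $\log(\dstr D)$) bound is not established by the proposal. A further, more repairable, issue: the conversion $W_i(t)\le \alpha^{i+1}V_i(t)$ holds only for unprocessed jobs; partially processed jobs can carry full weight with negligible remaining volume, so you must separately bound their weight using the invariant that each (weight class, density class) pair holds at most one partial job -- your tie-breaking rule supports this, but your charge never uses it.
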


\subsection{The Algorithm}

We start by rounding the weights of jobs up to powers of $\bround = \Theta(\dstr)$; specifically, we choose $\bround = 16\dstr + 6$.
From now on, when referring to the weights of jobs, we refer to those rounded weights.
%, and define as a shorthand $\qw{i} := \bround^{i}$ for every integer $i$.
%Next, we define the \emph{estimated inverse density} of a job $q$, sometimes abbreviated as estimated \id, to be $\frac{\ept{q}}{\wt{q}}$.
%$\edn{q}:= \frac{\ept{q}}{\wt{q}}$.

\begin{defn}[weights and ID]
    Throughout this section we use the following definitions:
    \begin{itemize}
        \item We define $\wc{q}:=\log_{\bround}(\wt{q})$ to be the \emph{weight class} of a request $q$ (which is an integer, due to the rounding of weights).
        \item We define $\qw{i} := \bround^{i}$ for every integer $i$.
        \item We define $\edc{q} := \floor{\log_2 \pr{\frac{\ept{q}}{\wt{q}}}}$ to be the \emph{estimated inverse density class} of a job $q$, abbreviated as \id class (the amount $\frac{\ept{q}}{\wt{q}}$ is called the \id of $q$).
%        \item We define the \emph{estimated inverse density} of a job $q$, abbreviated as \id, to be $\frac{\ept{q}}{\wt{q}}$.
%        \item We define the \emph{\id class}, or simply \id class, of a request $q$ to be .
    \end{itemize}
\end{defn}

\begin{defn}[predicated jobs]
    \label{defn:CLP_Qualified}
    For every two predicates $p_1,p_2$ we define
    \[
        \reqp{p_1}{p_2}{t} = \cset{q\in \reqs{t}}{p_1(\wc{q})\text{ and }p_2(\edc{q})}
    \]
    Similarly, we define $\Wtp{p_1}{p_2}{t} = \wts{\reqp{p_1}{p_2}{t}}$ and $\Vp{p_1}{p_2}{t} = \rpts{\reqp{p_1}{p_2}{t}}{t}$.
\end{defn}

%For every two predicates $p_1,p_2$ we define $\reqp{p_1}{p_2}{t} = \cset{q\in \reqs{t}}{p_1(\wt{q})\text{ and }p_2(\edc{q})}$.
For example, $\reqp{\le i}{=j}{t}$ is the set of all living jobs $q$ at $t$ such that $\wc{q} \le i$ and $\edc{q} = j$.
We use $\top$ to denote the predicate that always evaluates to true; it is used when we would like to take jobs of every \id/weight.

%We use the same notation for summing weight and volume.
%That is, for every $p_1,p_2$ we define $\Wtp{p_1}{p_2}{t} = \wts{\reqp{p_1}{p_2}{t}}$.
%Similarly, for volume, we define $\Vp{p_1}{p_2}{t} = \pts{\reqp{p_1}{p_2}{t}}$.

We say that a job is \emph{partial} in the algorithm if it has been processed for any amount of time; otherwise, the job is \emph{full}.

\paragraph{Description of the Algorithm.} The algorithm we describe prefers high weight jobs and high density jobs, and attempts to balance these two considerations.
This is done by processing some job of the maximum possible density, i.e. minimum \id, subject to the total weight of jobs from that \id class exceeding the weight of the heaviest living job.
After choosing the correct \id class in this manner, the algorithm must choose a job from that \id class; it chooses from the heaviest weight class in which there is a living job.
Once the correct \id and weight classes have been chosen, the algorithm prefers partial jobs, thus maintaining the fact that there is at most one partial job of each \id-weight combination.

The formal description of the algorithm appears in \cref{alg:CLP_Algorithm}.

\begin{figure*}[h]
    \begin{center}
        %\hspace*{\fill}
%        \subfloat[][\label{subfig:CLP_State1} A snapshot of the algorithm]{
%            \includegraphics[width=0.7\textwidth]{Figures/CLP_State1.pdf}
%        }
        %\hspace*{\fill}
        \includegraphics[width=0.7\textwidth]{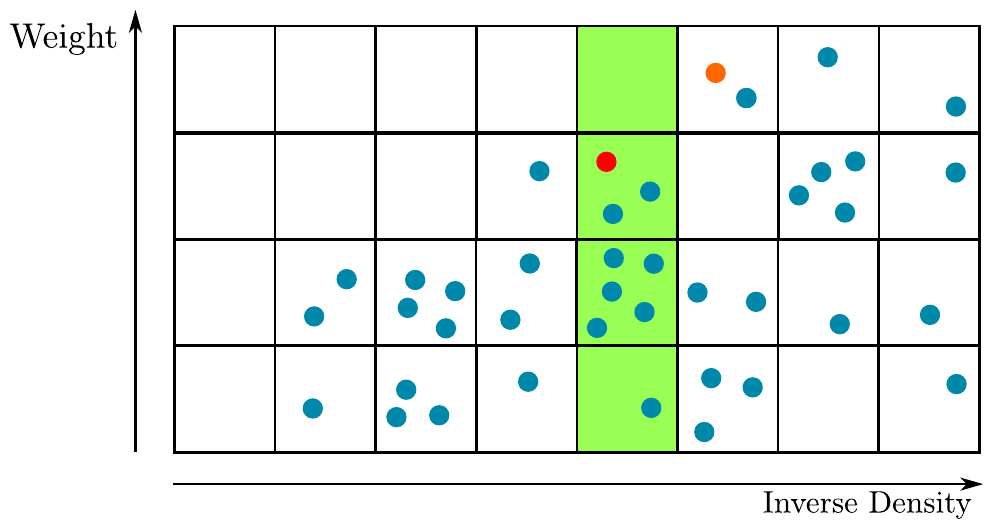}
%        \subfloat[][\label{subfig:CLP_State2}Algorithm's processing choice]{
%
%        }
        %\hspace*{\fill}
    \end{center}
    This figure shows a possible state of the algorithm, visualized as a table of weight classes and inverse density classes.
    Each job appears as a point inside the appropriate entry in the table.
    The algorithm chooses the maximum weight class of a living job -- one such maximum weight job is shown in orange.
    The algorithm then chooses the minimum \id class (i.e. maximum density) in which the total weight of jobs is at least the weight of that orange job -- that chosen \id class is colored green.
    Finally, the algorithm chooses a maximum weight job inside the green \id class for processing, and this job is shown in red.
    \caption[]{\label{fig:CLP_State}The State and Operation of \cref{alg:CLP_Algorithm}}
\end{figure*}

\begin{algorithm}
    \renewcommand{\algorithmcfname}{Algorithm}
    \caption{\label{alg:CLP_Algorithm} Scheduling with Predictions -- Weighted}

    \BlankLine

    \EFn(\tcp*[h]{at any point in time $t$}){\Process{}}{
        Let $i$ be the largest weight class in which a job is alive.
%
%        \BlankLine
%
%        \tcp*[h]{prefer high density jobs, subject to density class being of high total weight}

        Let $j$ be the minimal \id class such that $\Wtp{\top}{=j}{t} \ge \qw{i}$.

        Let $i'$ be the maximum weight class in which a job is alive in $\reqp{\top}{=j}{t}$.

        Process a job from $\reqp{=i'}{=j}{t}$ (preferring a partial job if exists).
    }

\end{algorithm}

\subsection{Analysis}
\label{subsec:CLP_Analysis}

Consider an instance in which the distortion parameter is at most $\dstr$.
Observe that the ratio between the maximum predicted processing time and the minimum predicted processing time is at most $\dstr P$.

Fixing any weight $\qw{i}$, observe that a job of weight $\qw{i}$ can only belong to \id classes in a limited range, which contains at most  $\ceil{\log(\dstr P)} + 1$ classes.
We henceforth define $\nclass := \ceil{\log(\dstr P)} + 1$, the maximum number of \id classes to which jobs of a specific weight can belong.

%For example, $\Vp*{\le i}{=j}{t}$ refers to the total volume in the optimal solution at time $t$ of jobs of weight class at most $i$ and \id class exactly $j$.
%In addition, we use $\Delta$ to denote to volume difference between the algorithm and the optimal solution; for example, $\Delta \Vp{\le i}{=j}{t} = \Vp{\le i}{=j}{t} - \Vp*{\le i}{=j}{t} $.

%Due to the distortion, the maximum ratio of estimated processing times of two jobs is $\dstr P$; Thus, the maximum number of estimated processing time classes is $\nclass =: \ceil{\log (\dstr P)}$.

The main lemma used to prove \cref{thm:CLP_Competitiveness} is the following, which states that the algorithm is locally competitive.

\begin{lem}
    \label{lem:CLP_LocalCompetitiveness}
    At any point in time $t$, it holds that $\Wt{t} \le O(\nclass \dstr^2) \cdot \Wt*{t}$.
\end{lem}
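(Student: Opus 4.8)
The plan is to bound $\Wt{t}$, the total weight of pending jobs in the algorithm at time $t$, by splitting the living jobs into three groups according to how they relate to the heaviest living job and to the \id class $j$ currently selected by \cref{alg:CLP_Algorithm}. Let $\qw{i}$ be the maximum weight of a living job at $t$, and let $j=j(t)$ be the \id class chosen by the algorithm at $t$. The three groups are: (a) jobs in \id classes \emph{strictly below} $j$ (i.e. higher density than the processed class); (b) jobs in \id class exactly $j$; and (c) jobs in \id classes \emph{strictly above} $j$ (lower density). I will bound the weight of each group separately by a term of order $O(\nclass\dstr^2)\cdot\Wt*{t}$.

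First I would handle group (a). By minimality of $j$, for every \id class $j'<j$ we have $\Wtp{\top}{=j'}{t}<\qw{i}$, so these classes jointly contribute less than $\nclass\cdot\qw{i}$ weight (there are at most $\nclass$ relevant \id classes). Thus it suffices to show $\qw{i}=O(\dstr^2)\cdot\Wt*{t}$, i.e. that the maximum weight of a living job in the algorithm never exceeds (up to the rounding factor $\bround=\Theta(\dstr)$ and one more factor coming from one-sided error) the \emph{total} weight in $\opt$. This is the heart of the argument and, I expect, the main obstacle. The intuition stated in the techniques section is that "the maximum weight in the algorithm never exceeds the total weight in the optimal solution." To prove it, I would consider the last time $t_0\le t$ before $t$ at which some job $q$ of weight $\qw{i}$ was released, or at which the algorithm last processed a job of weight less than $\qw{i}$ — whichever is later — so that throughout $(t_0,t]$ the algorithm only processes jobs of weight $\ge\qw{i}$ and $q$ is pending. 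Since the algorithm chose to process those heavier jobs rather than completing $q$, and since a job of weight $\ge\qw{i}$ is only processed when its \id class has total weight $\ge\qw{i}$... actually the cleaner route: at $t_0$ the algorithm chose \id class $j(t_0)$ with $\Wtp{\top}{=j(t_0)}{t_0}\ge\qw{i}$, and I want to charge this weight to $\opt$. The key structural fact is that a job in \id class $j'$ has predicted processing time $\ept{q}\in[2^{j'}\wt{q},2^{j'+1}\wt{q})$, hence real processing time $\pt{q}<\dstr\,2^{j'+1}\wt{q}$; conversely if the algorithm has processed heavy volume in a dense class over a long interval without $\opt$ having that weight pending, then $\opt$ must have completed comparable volume, but $\opt$ pays for that completion against... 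Here I would invoke a volume/weight argument: if $\qw{i}>c\dstr^2\Wt*{t_0}$ for large $c$, then $\opt$'s pending weight is small, so $\opt$ has completed almost all of the weight the algorithm saw; but completing weight $\ge\qw{i}$ worth of jobs in a dense \id class takes $\opt$ little volume, and over the interval $(t_0,t]$ the algorithm processed only this class (or denser), deriving a contradiction via a careful accounting of released-versus-completed volume in that \id class. This is the step where the factor $\dstr^2$ (rounding $\dstr$ times one-sided error $\dstr$) enters.

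For group (b), the jobs in \id class exactly $j$: I would split them by weight class. The algorithm, inside \id class $j$, always processes the heaviest available job and prefers partial jobs, so in each weight class of \id class $j$ there is at most one partial job and the full jobs of weight class $i'<i_{\max}(j)$ are "blocked" only by heavier living jobs. A standard volume argument — comparing $\Vp{\top}{=j}{t}$ with $\Vgen*(t)$ over \id class $j$ and using that every job in \id class $j$ has volume $\Theta(2^j\wt{q})$ up to the $\dstr$ factors — lets me show $\Wtp{\top}{=j}{t}=O(\dstr)\cdot(\qw{i}+\Wt*{t})$, which by group (a)'s bound is again $O(\dstr^2)\Wt*{t}$. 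For group (c), jobs in \id classes above $j$: these have low density, so their \emph{volume} is large relative to their weight; I would argue that if the algorithm carries a lot of weight in a low-density class, $\opt$ either carries comparable weight (done) or has completed that volume, and completing large volume is expensive in a way that forces $\Wt*{\cdot}$ to be large over a preceding interval — more precisely, here I expect to use a direct volume comparison: $\rpts{\reqp{\top}{>j}{t}}{t}\le \rpts{Q^*(t)}{t}$ up to constant/$\dstr$ factors is \emph{false} in general, so instead I bound the weight of group (c) by relating it to the work done by the algorithm versus $\opt$ since these jobs were released. Summing the three bounds, each $O(\nclass\dstr^2)\cdot\Wt*{t}$, gives the lemma; the $\nclass$ factor is only needed in group (a) (summing over \id classes below $j$), and groups (b),(c) only cost $O(\dstr^2)$. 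The main obstacle, to reiterate, is establishing $\qw{i}=O(\dstr^2)\cdot\Wt*{t}$ — i.e. that the algorithm's heaviest living job is cheap for $\opt$ — which requires carefully choosing the reference interval $(t_0,t]$ and running a volume-balance argument within the relevant dense \id class.
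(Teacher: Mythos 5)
There is a genuine gap -- in fact two. First, the crux that you yourself flag, namely that the heaviest living weight in the algorithm satisfies $\qw{i}=O(\dstr^2)\cdot\Wt*{t}$, is exactly the hard part of the paper's argument, and your sketch does not carry it out. The paper proves it via \cref{prop:CLP_DeltaAndWeightBounds}: an induction over \id classes in \emph{decreasing} order, where for each class $j$ one looks at $\lastt{j}$, the last time before $t$ at which the algorithm processed a job of weight class at most $\owc$ and \id class strictly greater than $j$, and proves simultaneously that $\clw{\lastt{j}}\le\owc$ and that $\Delta\Vp{\le\owc}{\le j}{t}\le\dstr\,2^{j+2}\qw{\clw{\lastt{j}}}$; the two claims feed each other through the weight-to-volume conversion of \cref{lem:CLP_VolumeConversion}. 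A single reference time $t_0$ as in your plan cannot work, because the volume comparison must be done separately for each \id-class prefix $\le j$, each with its own reference time $\lastt{j}$ at which the algorithm's choice certifies that the dense classes are light; your ``careful accounting of released-versus-completed volume'' is precisely this missing induction.

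Second, your bound for group (a) is incorrect as stated: $\nclass$ bounds the number of \id classes available to jobs of \emph{one fixed weight class}, not the number of occupied \id classes below $j$. Lighter jobs occupy higher \id windows, so when $W$ is large there can be far more than $\nclass$ occupied classes strictly below $j$, each carrying total weight just under $\qw{i}$, and the claim ``jointly less than $\nclass\cdot\qw{i}$'' does not follow from minimality of $j$. The paper avoids this by bounding partial jobs with $\min\pc{2(j_2-j_1+1),4\nclass}\cdot\qw{i}$ (geometric decay of rounded weights across weight classes) and by bounding full jobs through the $\Delta\Vp{\le\owc}{\le j}{t}$ terms, where a counting argument (each weight-class value can equal $\clw{\lastt{j}}$ for at most $\nclass$ values of $j$, again combined with geometric decay) caps the total at $O(\nclass\dstr)\qw{\owc}$. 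Relatedly, your group (c) is left unresolved by your own admission, and your claim that groups (b) and (c) cost only $O(\dstr^2)$ without an $\nclass$ factor is unsubstantiated -- in the paper the low-density/full-job contribution is exactly where the $\nclass$ factor enters. So while your decomposition has the right flavor and correctly identifies the main obstacle, the proposal does not contain the machinery (per-class reference times, the coupled induction, and the counting argument) needed to close it.
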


The main focus henceforth would be on proving  \cref{lem:CLP_LocalCompetitiveness}.

%In terms of notation, we use an asterisk superscript to refer to the optimal solution.
\begin{defn}[analogue of \cref{defn:CLP_Qualified} for $\opt$]
    For every two predicates $p_1,p_2$ we define
    \[
        \reqp*{p_1}{p_2}{t} = \cset{q\in \reqsgen*(t)}{p_1(\wc{q})\text{ and }p_2(\edc{q})}
    \]
    We define $\Wtp*{p_1}{p_2}{t} = \wts{\reqp*{p_1}{p_2}{t}}$ and $\Vp*{p_1}{p_2}{t} = \rptso{\reqp*{p_1}{p_2}{t}}{t}$.

    In addition, we use $\Delta$ to denote to volume difference between the algorithm and the optimal solution; for example, $\Delta \Vp{\le i}{=j}{t} = \Vp{\le i}{=j}{t} - \Vp*{\le i}{=j}{t} $.
\end{defn}

\begin{defn}[important weight classes]
    We define the following:
    \begin{enumerate}
        \item We define $\cwc{t}$ to be the weight class of the job processed by the algorithm at time $t$.
        \item We define $\clw{t}$ to be the largest weight class of a living job in the algorithm at $t$.
        \item We define $\owc[t]$ to be the minimum weight class such that $\qw{\owc[t]} > \frac{4\dstr}{\bround} \cdot \Wt*{t}$.
    \end{enumerate}
\end{defn}

The following observation is immediate from the above definition.
\begin{obs}
    There is no job of weight class more than $\owc[t]$ alive in $\opt$ at time $t$.
\end{obs}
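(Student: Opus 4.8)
The plan is to obtain this observation directly from the definition of $\owc[t]$, using only the elementary fact that a single job cannot weigh more than all jobs pending in $\opt$ combined. Fix a time $t$ and let $q$ be any job alive in $\opt$ at $t$. Since all (rounded) weights are positive, $\wt{q}$ is a single summand of $\Wt*{t} = \wts{\reqsgen*(t)}$, and therefore $\wt{q} \le \Wt*{t}$; equivalently, $\qw{\wc{q}} \le \Wt*{t}$ (recalling that $\qw{\wc{q}} = \bround^{\log_{\bround}\wt{q}} = \wt{q}$).

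Next I would compare this with the threshold defining $\owc[t]$. By definition $\owc[t]$ is the minimal integer with $\qw{\owc[t]} > \frac{4\dstr}{\bround}\cdot\Wt*{t}$, so for the next weight class we get $\qw{\owc[t]+1} = \bround\cdot\qw{\owc[t]} > \bround\cdot\frac{4\dstr}{\bround}\cdot\Wt*{t} = 4\dstr\cdot\Wt*{t} \ge \Wt*{t}$, where the last inequality uses $\dstr \ge 1$ (in fact $\bround = 16\dstr+6$, so the constants are comfortably on our side — this slack is what later steps of the section rely on). Chaining the two displays, $\qw{\wc{q}} \le \Wt*{t} < \qw{\owc[t]+1}$, and since $\qw{i} = \bround^{i}$ is strictly increasing in $i$, this forces $\wc{q} \le \owc[t]$, which is exactly the claim.

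There is essentially no obstacle here: the statement is a one-line unwinding of definitions, and I expect the ``proof'' to be a couple of sentences at most. The only point that deserves a moment's care is the well-definedness of $\owc[t]$ itself: when $\Wt*{t} > 0$ the set of integers $i$ with $\bround^{i} > \frac{4\dstr}{\bround}\Wt*{t}$ is nonempty (since $\bround^{i}\to\infty$) and bounded below (since $\bround^{i}\to 0$ as $i\to-\infty$), so a minimum exists; and when $\Wt*{t} = 0$ there are no jobs alive in $\opt$ at $t$ at all, so the observation holds vacuously. No further machinery is needed.
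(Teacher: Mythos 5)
Your proof is correct and is precisely the unwinding the paper has in mind: the paper states this observation without proof as ``immediate from the definition,'' and your argument (a job alive in $\opt$ of class above $\owc[t]$ would alone weigh $\qw{\owc[t]+1} = \bround\,\qw{\owc[t]} > 4\dstr\,\Wt*{t} \ge \Wt*{t}$, a contradiction) is exactly that intended one-liner. The extra remarks on well-definedness and the vacuous case are fine but not needed.
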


Henceforth, fix a point in time $t$ for proving \cref{lem:CLP_LocalCompetitiveness}.
For brevity, we also write $\owc$ instead of $\owc[t]$.

The following lemma allows converting weight to volume, and its proof appears in \cref{sec:CLP_Proofs}.

\begin{lem}
    \label{lem:CLP_VolumeConversion}
    Let $i$ be a weight class, and let $j_1,j_2$ be two \id classes such that $j_1 \le j_2$.
    Then
    \[
        \sum_{j=j_1}^{j_2} \Wtp{\le i}{=j}{t}
        \le \min\pc{2(j_2-j_1+1), 4\nclass}\cdot \qw{i} + 2\dstr \Wt*{t} + 2\sum_{j=j_1}^{j_2} \max\pc{0,\frac{\Delta \Vp{\le i}{\le j}{t}}{2^j}}
    \]
\end{lem}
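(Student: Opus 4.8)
The plan is to prove \cref{lem:CLP_VolumeConversion} by a charging argument that relates the weight in each \id class to a volume quantity. The key observation is that for a job $q$ in \id class $j$ (i.e. $\edc{q}=j$), its \emph{estimated} inverse density $\frac{\ept q}{\wt q}$ lies in $[2^j, 2^{j+1})$, so $\ept q \in [2^j \wt q, 2^{j+1}\wt q)$, and by $\dstr$-robustness the real processing time satisfies $\pt q \in [2^j \wt q, \dstr 2^{j+1}\wt q)$. Hence a full job of weight $\qw i$ (or less) in class $j$ carries at least $2^j$ times its weight in volume. This is the basic currency conversion: weight in class $\le j$ is essentially volume in class $\le j$ divided by $2^j$, \emph{provided} the jobs are full.

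\smallskip
\noindent\textbf{Step 1 (fix one \id class).} For a single \id class $j$, I would bound $\Wtp{\le i}{=j}{t}$ by splitting $\reqp{\le i}{=j}{t}$ into full jobs and partial jobs. By the algorithm's rule (it processes a job from $\reqp{=i'}{=j}{t}$ preferring partial jobs), there is at most one partial job per \id-weight combination; since weight classes $\le i$ contributing to $\reqp{\le i}{=j}{t}$ range over at most... actually the cleaner bound is that the partial jobs in class $j$ with weight class $\le i$ have total weight at most $\qw i \cdot (1 + \tfrac1{\bround-1}+\cdots) \le 2\qw i$ (geometric sum of powers of $\bround \ge 2$, one partial job per weight class $\le i$). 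For the full jobs: each full job $q$ in class $j$ with $\wc q \le i$ has $\rpt q t = \pt q \ge 2^j \wt q$, so $\Wtp{\le i}{=j}{t}^{\text{full}} \le \frac{1}{2^j}\Vp{\le i}{=j}{t}^{\text{full}} \le \frac{1}{2^j}\Vp{\le i}{=j}{t}$.

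\smallskip
\noindent\textbf{Step 2 (pass to the volume difference).} The bound $\frac{1}{2^j}\Vp{\le i}{=j}{t}$ is not yet in terms of $\Delta\Vp{\le i}{\le j}{t}$, so I would add and subtract the optimal volume. Write $\Vp{\le i}{=j}{t} \le \Vp{\le i}{\le j}{t} = \Delta\Vp{\le i}{\le j}{t} + \Vp*{\le i}{\le j}{t}$. The optimal term $\Vp*{\le i}{\le j}{t}$ is volume of jobs in $\opt$ with \id class $\le j$; since such a job $q$ (with $\wc q \le i$) has $\pt q < \dstr 2^{\edc q + 1}\wt q \le \dstr 2^{j+1}\wt q$, its current optimal volume is at most $\dstr 2^{j+1}\wt q$, hence $\Vp*{\le i}{\le j}{t} \le \dstr 2^{j+1}\Wt*{t}$. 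Dividing by $2^j$ gives a $2\dstr\Wt*{t}$ contribution — matching the $2\dstr\Wt*{t}$ term in the lemma. The remaining piece is $\frac{1}{2^j}\max\{0,\Delta\Vp{\le i}{\le j}{t}\}$ (taking the positive part since volume bounds are nonnegative), matching the last sum. Summing over $j=j_1$ to $j_2$, the full-job contributions give $\sum_j \frac1{2^j}\max\{0,\Delta\Vp{\le i}{\le j}{t}\}$ per class — but I need the factor $2$, which comes from the slightly lossy way $\Vp{\le i}{=j}{t}$ nests into $\Vp{\le i}{\le j}{t}$ together with the partial-job bookkeeping; and the $2\dstr\Wt*{t}$ term should NOT be multiplied by $(j_2-j_1+1)$ — this is the delicate point, because the naive sum would give $\sum_j 2\dstr\Wt*{t}$.

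\smallskip
\noindent\textbf{Main obstacle.} The hard part is avoiding a factor of $(j_2-j_1+1)$ or $\nclass$ in front of the $2\dstr\Wt*{t}$ term. The resolution must be that the optimal volume $\Vp*{\le i}{\le j}{t}$ should be charged only \emph{once} — i.e. one does not sum $\Vp{\le i}{=j}{t}$ over $j$ naively, but rather telescopes. I would instead bound $\sum_{j=j_1}^{j_2}\Wtp{\le i}{=j}{t}^{\text{full}} = \sum_{j=j_1}^{j_2}\frac{1}{2^j}\Vp{\le i}{=j}{t}^{\text{full}}$ and then, for each $j$, write $\Vp{\le i}{=j}{t} = \Vp{\le i}{\le j}{t} - \Vp{\le i}{\le j-1}{t}$, so $\Delta\Vp{\le i}{=j}{t} = \Delta\Vp{\le i}{\le j}{t} - \Delta\Vp{\le i}{\le j-1}{t}$; using $\Vp{\le i}{=j}{t} \le \Delta\Vp{\le i}{\le j}{t}^+ + \Vp*{\le i}{=j}{t}$ where now $\Vp*{\le i}{=j}{t}$ is only the class-$j$ optimal volume, $\le \dstr 2^{j+1}\Wt*[=j]{t}$ where $\Wt*[=j]{t}$ is optimal weight of exactly class $j$ — and $\sum_j \Wt*[=j]{t} \le \Wt*{t}$, so $\sum_j \frac1{2^j}\cdot \dstr 2^{j+1}\Wt*[=j]{t} = 2\dstr\sum_j \Wt*[=j]{t} \le 2\dstr\Wt*{t}$, giving the term with no dependence on $j_2-j_1$. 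Combined with the partial-job bound from Step 1 (which contributes $\min\{2(j_2-j_1+1),4\nclass\}\cdot\qw i$, using that full jobs also contribute at most $4\nclass\qw i$ when the volume bound is weak — actually this $\qw i$ term must absorb both partial jobs and the $\nclass$-bounded range of \id classes a weight-$\qw i$ job can occupy), and the $\max\{0,\cdot\}$ telescoping sum, this yields exactly the claimed inequality. Finishing cleanly will require being careful about the positive-part operation interacting with telescoping, and about why the coefficient of $\qw i$ can be taken as $\min\{2(j_2-j_1+1),4\nclass\}$ rather than the larger of the two.
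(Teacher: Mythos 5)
Your overall skeleton matches the paper's proof: split $\reqp{\le i}{=j}{t}$ into partial and full jobs, bound the partial jobs by $\min\pc{2(j_2-j_1+1),4\nclass}\qw{i}$ using ``at most one partial job per weight--\id combination'' plus geometric sums, and convert full-job weight to volume at rate $2^{-j}$ via $\pt{q}\ge 2^{j}\wt{q}$. The gap is in the one step you yourself flag as delicate. The per-class inequality you lean on, $\Vp{\le i}{=j}{t} \le \max\pc{0,\Delta \Vp{\le i}{\le j}{t}} + \Vp*{\le i}{=j}{t}$, is false in general: subtracting $\Vp*{\le i}{=j}{t}$ from both sides, it amounts to $\Delta\Vp{\le i}{\le j}{t}-\Delta\Vp{\le i}{\le j-1}{t}\le \max\pc{0,\Delta\Vp{\le i}{\le j}{t}}$, which fails whenever the algorithm has a volume \emph{deficit} in \id classes $\le j-1$. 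Concretely, if within weight classes $\le i$ the algorithm's volume is all in class exactly $j$ (say $V$ units, none below), while $\opt$ has $V/2$ units in classes $\le j-1$ and nothing in class $j$, the left side is $V$ but your right side is $V/2$. Repairing this per class by adding $\Vp*{\le i}{\le j-1}{t}/2^{j}\le \dstr\Wtp*{\le i}{\le j-1}{t}$ for every $j$ reintroduces exactly the $(j_2-j_1+1)\cdot\dstr\Wt*{t}$ blow-up you set out to avoid, and the slack factor $2$ on the $\max$-sum in the statement cannot absorb it, since the deficit term is unrelated in magnitude to $\max\pc{0,\Delta\Vp{\le i}{\le j}{t}}$.

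What is missing is a summation by parts. The paper writes the full-job bound as $\sum_j 2^{-j}\bigl(\Delta\Vp{\le i}{\le j}{t}-\Delta\Vp{\le i}{\le j-1}{t}\bigr)$ plus the per-class optimal volumes (which, as you intended, sum to $2\dstr\sum_j\Wtp*{\le i}{=j}{t}$ over \emph{disjoint} classes), and then regroups the first sum so that each cumulative difference $\Delta\Vp{\le i}{\le j}{t}$ appears once with coefficient at most $2^{-j}$; only the single boundary term $-\Delta\Vp{\le i}{\le j_1-1}{t}/2^{j_1}$ survives, and it is bounded by $\Vp*{\le i}{\le j_1-1}{t}/2^{j_1}\le \dstr\Wtp*{\le i}{\le j_1-1}{t}$, so all optimal-weight contributions together stay below $2\dstr\Wt*{t}$. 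In other words, the interaction between the positive-part operation and the telescoping that you deferred to the end is the entire content of the step, and as proposed it is not resolved. A minor additional point: in the paper the $\min\pc{2(j_2-j_1+1),4\nclass}\qw{i}$ term comes purely from the partial jobs (the $4\nclass$ alternative uses that each weight class occupies at most $\nclass$ \id classes, combined with a geometric sum over weight classes $\le i$); it is not needed to absorb any full-job contribution.
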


\begin{defn}
    Define $\lastt{j}$ to be the last time prior to $t$ in which the algorithm processed a job of weight class at most $\owc$ and \id class strictly more than $j$.
\end{defn}

\Cref{prop:CLP_DeltaAndWeightBounds} shows that the algorithm has low maximum weight at $\lastt{j}$ for every $j$, and bounds differences in volume between the algorithm and the optimal solution (which enables use of \cref{lem:CLP_VolumeConversion}).
The proof of \cref{prop:CLP_DeltaAndWeightBounds} is given in \cref{sec:CLP_Proofs}.

\begin{prop}
    \label{prop:CLP_DeltaAndWeightBounds}
    Let $j$ be an \id class, then it holds that:
    \begin{enumerate}
        \item $\clw{\lastt{j}} \le \owc$
        \item $\Delta \Vp{\le \owc}{\le j}{t} \le \dstr\cdot 2^{j+2} \qw{\clw{\lastt{j}}} $
    \end{enumerate}
\end{prop}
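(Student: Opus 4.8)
The plan is to prove the two assertions in order; throughout, fix the \id class $j$ and abbreviate $\owc := \owc[t]$ and $i_j := \clw{\lastt{j}}$.

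\textbf{Part 1: $\clw{\lastt{j}} \le \owc$.} By definition of $\lastt{j}$, at time $\lastt{j}$ the algorithm processes a job $q$ with $\wc{q} \le \owc$ and $\edc{q} > j$; in particular $\cwc{\lastt{j}} \le \owc$. Suppose for contradiction that $i := \clw{\lastt{j}} > \owc$, i.e.\ some job of weight class $i$ is alive in the algorithm at $\lastt{j}$. Since the algorithm processes the heaviest job of its chosen \id class, and that weight class equals $\cwc{\lastt{j}} \le \owc < i$, the heaviest living job is not in the chosen \id class; writing $j_{\mathrm{ch}}$ for the chosen class, we get $j_{\mathrm{ch}} > j$ (the processed job has \id class $> j$) and, by the minimality in the algorithm's rule, $\Wtp{\top}{=j'}{\lastt{j}} < \qw{i}$ for all $j' < j_{\mathrm{ch}}$. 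Now $\qw{i} \ge \qw{\owc+1} = \bround\cdot\qw{\owc} > 4\dstr\cdot\Wt*{t} > \Wt*{t}$, so the weight-$i$ job alone outweighs $\opt$'s entire living weight at $t$; combined with the observation above (that $\opt$ has no job of weight class $>\owc$ alive at $t$, hence has already completed this weight-$i$ job, which is released by $\lastt{j} < t$) and a volume comparison of the algorithm against $\opt$ on $[\lastt{j},t]$, this should produce the contradiction. This conversion — from ``the algorithm is carrying a weight-$i$ job at $\lastt{j}$ that $\opt$ finishes before $t$'' into a quantitative contradiction with $i > \owc$ — is the step I expect to be the main obstacle.

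\textbf{Part 2, step 1: a volume bound at $\lastt{j}$.} I claim $\Vp{\le\owc}{\le j}{\lastt{j}} \le \dstr\cdot 2^{j+2}\qw{i_j}$. A job $q$ with $\edc{q} = j'$ satisfies $\ept{q} < 2^{j'+1}\wt{q}$, hence $\pt{q} < \dstr\cdot 2^{j'+1}\wt{q}$, so its remaining volume at any time is at most $\dstr\cdot 2^{j'+1}\wt{q}$; summing over living jobs of \id class $j'$ gives $\Vp{\top}{=j'}{\lastt{j}} \le \dstr\cdot 2^{j'+1}\Wtp{\top}{=j'}{\lastt{j}}$. At $\lastt{j}$ the algorithm processes a job of \id class $> j$, so its chosen \id class $j_{\mathrm{ch}}$ is the minimal one with $\Wtp{\top}{=j_{\mathrm{ch}}}{\lastt{j}} \ge \qw{i_j}$ and satisfies $j_{\mathrm{ch}} > j$; hence $\Wtp{\top}{=j'}{\lastt{j}} < \qw{i_j}$ for every $j' \le j$. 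Combining,
\[
\Vp{\le\owc}{\le j}{\lastt{j}} \;\le\; \sum_{j'\le j}\Vp{\top}{=j'}{\lastt{j}} \;\le\; \dstr\sum_{j'\le j} 2^{j'+1}\Wtp{\top}{=j'}{\lastt{j}} \;\le\; \dstr\,\qw{i_j}\sum_{j'\le j} 2^{j'+1} \;\le\; \dstr\cdot 2^{j+2}\,\qw{i_j}.
\]
Since $\Vp*{\le\owc}{\le j}{\lastt{j}} \ge 0$, the same bound holds for $\Delta\Vp{\le\owc}{\le j}{\lastt{j}}$.

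\textbf{Part 2, step 2: propagating to $t$.} It suffices to show $\Delta\Vp{\le\owc}{\le j}{t} \le \Delta\Vp{\le\owc}{\le j}{\lastt{j}}$. The function $\Delta\Vp{\le\owc}{\le j}{\cdot}$ is unaffected by job releases, decreases whenever the algorithm processes a job of weight class $\le\owc$ and \id class $\le j$, and can increase only while $\opt$ processes such a job but the algorithm does not. By definition of $\lastt{j}$, on $(\lastt{j},t]$ the algorithm never processes a job of weight class $\le\owc$ and \id class $>j$; and by the mechanism of Part 1 — again using that $\opt$ has no weight-$(>\owc)$ job alive at $t$ — the algorithm processes no job of weight class $>\owc$ on $(\lastt{j},t]$ either. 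Hence on all of $(\lastt{j},t]$ the algorithm processes only jobs of weight class $\le\owc$ and \id class $\le j$, so $\Delta\Vp{\le\owc}{\le j}{\cdot}$ is non-increasing there, which together with step 1 gives the claim. As in Part 1, the crux is ruling out processing of weight-$(>\owc)$ jobs on $(\lastt{j},t]$; alternatively one can take the last $s \in [\lastt{j},t]$ at which the algorithm does \emph{not} process a job of weight class $\le\owc$ and \id class $\le j$ and run the volume comparison on $(s,t]$, but this again reduces to understanding how the algorithm treats heavy jobs after $\lastt{j}$.
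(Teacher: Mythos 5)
Your Part 2, step 1 does match the paper: the geometric-sum bound $\Vp{\le\owc}{\le j}{\lastt{j}} \le \dstr\cdot 2^{j+2}\qw{\clw{\lastt{j}}}$, derived from the minimality of the chosen \id class at $\lastt{j}$, is exactly how the paper passes from the first claim to the second. But both places you flag as the ``main obstacle'' are where the real content lies, and neither is closed. In Part 2, step 2, you assert that the algorithm processes no job of weight class $>\owc$ during $(\lastt{j},t]$; this does not follow from anything established (and is false in general), since such jobs may be \emph{released} after $\lastt{j}$ and then be processed by the algorithm. The paper instead argues by work accounting: given claim 1, no job of weight class $>\owc$ is alive in the algorithm at $\lastt{j}$, so all of the algorithm's work on such jobs during $[\lastt{j},t)$ is on jobs released inside the interval; since $\opt$ has none of these alive at $t$, $\opt$ performs at least that much work on them within the same interval, hence the algorithm spends at least as much time as $\opt$ on jobs of weight class $\le\owc$ --- and for the algorithm all of that time goes to \id classes $\le j$ --- so $\Delta\Vp{\le\owc}{\le j}{\cdot}$ cannot increase over the interval. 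Your ``no heavy processing'' shortcut would need to be replaced by this accounting.

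The more serious gap is Part 1, which you leave open, and the single-heavy-job observation you start from cannot yield the contradiction: knowing that $\opt$ completed one job of weight $\qw{i}>4\dstr\Wt*{t}$ some time before $t$ places no lower bound on $\Wt*{t}$. The paper proves claim 1 by \emph{downward induction on the \id classes} (base case $j_{\max}$, where $\lastt{j_{\max}}=0$). Assuming claims 1 and 2 for all classes above $j$, suppose $\clw{\lastt{j}}>\owc$ and let $h>j$ be the \id class of the job $q$ processed at $\lastt{j}$. The algorithm's rule forces $\Wtp{\le\wc{q}}{=h}{\lastt{j}}\ge\qw{\clw{\lastt{j}}}$ --- a large mass of jobs all of weight class $\le\owc$ --- and by the definition of $\lastt{j}$ these jobs are untouched on $[\lastt{j},t)$ except possibly $q$ itself, so $\Wtp{\le\owc}{=h}{t}\ge\qw{\clw{\lastt{j}}}-\qw{\owc}$. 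On the other hand, \cref{lem:CLP_VolumeConversion} applied at time $t$ to the single class $h$, combined with the inductive bounds $\clw{\lastt{h}}\le\owc$ and $\Delta\Vp{\le\owc}{\le h}{t}\le\dstr\cdot 2^{h+2}\qw{\clw{\lastt{h}}}$, caps the same quantity by $(8\dstr+2)\qw{\owc}+2\dstr\Wt*{t}$; since $\qw{\clw{\lastt{j}}}\ge\bround\qw{\owc}$ with $\bround=16\dstr+6$, this forces $\Wt*{t}\ge\frac{\bround}{4\dstr}\qw{\owc}$, contradicting the definition of $\owc$. This inductive use of claim 2 for higher \id classes together with \cref{lem:CLP_VolumeConversion} is the missing idea; without it, neither your Part 1 nor the step 2 you build on it goes through.
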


%\begin{prop}
%    \label{prop:CLP_VolumeDeltaBound}
%    Let $j$ be a \id class, and suppose that $\clw{\lastt{j}} \le \owc$.
%    Then
%    $\Delta \Vp{\le \owc}{\le j}{t} \le \dstr\cdot 2^{j+2} \qw{\clw{\lastt{j}}} $
%\end{prop}
%
%
%\begin{prop}
%    \label{prop:CLP_InductiveProposition}
%    For every \id class $j$, it holds that $\clw{\lastt{j}} \le \owc$.
%\end{prop}

%\begin{lem}
%    \label{lem:CLP_UpperBoundingALG}
%    It holds that $\Wt{t} \le \nclass (8\dstr + 4) \owc + 4\Wt*{t}$.
%\end{lem}
\begin{proof}[Proof of \cref{lem:CLP_LocalCompetitiveness}]
    First, we claim that $\clw{t} \le \owc$.
    Assume otherwise, and observe the last time $t'$ in which the algorithm processed a job of weight at most $\owc$.
    It must be that $\clw{t'} > \owc$; this is since the optimal solution has no jobs of weight class more than $\owc$ alive at $t$, and since the algorithm works only on such jobs from $t'$ onwards.
    However, $t'$ is also equal to $\lastt{j}$ for some $j$, which contradicts \cref{prop:CLP_DeltaAndWeightBounds}.
    Thus, it must be that $\clw{t} \le \owc$.

    Let $j_{\min}$ be the minimum \id class in which a job of weight class at most $\owc$ can exist, and let $j_{\max}$ be the maximum possible \id class.
    It holds that
    \begin{align*}
        \Wt{t}
        &= \sum_{j=j_{\min}}^{j_{\max}} \Wtp{\le \owc}{=j}{t} \\
        &\le 4\nclass \qw{\owc} + 2\dstr\Wt*{t} + \sum_{j=j_{\min}}^{j_{\max}} 2\max\pc{0, \frac{\Delta \Vp{\le \owc}{\le j}{t}}{2^j}}\\
        &\le 4\nclass \qw{\owc} + 2\dstr\Wt*{t} + \sum_{j=j_{\min}}^{j_{\max}} 8\dstr\cdot \qw{\clw{\lastt{j}}}
    \end{align*}

    where the equality is due to $\clw{t} \le \owc$, the first inequality is due to \cref{lem:CLP_VolumeConversion} and the second inequality follows from \cref{prop:CLP_DeltaAndWeightBounds}.
    Now, consider that for every \id class $j$, at $\lastt{j}$ we processed a job $q$ of \id class $h$ where $h > j$, such that $\wc{q} \le \clw{\lastt{j}}$.
    There also existed at $\lastt{j}$ a job $q'$ such that $\wc{q'} = \clw{\lastt{j}}$; denote its \id class $h':=\edc{q'}$.
    It holds that $h' \ge h$, otherwise $q'$ would have been chosen for processing at $\lastt{j}$ instead of $q$.
    Thus, it must be the case that an \id class larger than $j$ is one of the $\nclass$ \id classes to which $\clw{\lastt{j}}$ can belong.

    From the preceding argument, $\owc = \clw{\lastt{j}}$ can hold for at most $\nclass$ values of $j$; $\owc-1 = \clw{\lastt{j}}$ can hold for at most $\nclass$ additional values of $j$; and so on.
    It thus holds that:
    \[
        8\dstr\sum_{j=j_{\min}}^{j_{\max}} \qw{\clw{\lastt{j}}} \le 8\dstr \nclass \sum_{k=0}^{\infty} \qw{\owc - k} \le 16\dstr \nclass \qw{\owc}
    \]

    Therefore, we have
    \begin{equation}
        \label{eq:CLP_LocalCompetitivenessBound}
        \Wt{t} \le \nclass (16\dstr + 4) \qw{\owc} + 2\dstr\Wt*{t}
    \end{equation}

    Observe that $\qw{\owc} \le 4\dstr\Wt*{t}$ -- this stems from the definition of $\owc$.
    Plugging into \cref{eq:CLP_LocalCompetitivenessBound},
    we get that 
    \[
        \Wt{t} \le O(\nclass \dstr^2) \Wt*{t} \qedhere
    \]
\end{proof}

\begin{proof}[Proof of \cref{thm:CLP_Competitiveness}]
   Plugging the definition of $\nclass$ into \cref{lem:CLP_LocalCompetitiveness}, we have that
    \[
        \Wt{t} \le O(\dstr^2 \log (\dstr P) ) \Wt*{t}
    \]
    Recall that this is after the rounding of the weights to powers of $\bround = \Theta(\dstr)$; taking this rounding into account, the algorithm is $O(\dstr^3 \log (\dstr P))$-competitive.
\end{proof}

\subsection{Ratio of Densities}
\label{subsec:CLP_Density}

Recall that the density of a job $q$ is $\frac{\wt{q}}{\pt{q}}$.
We show that the dependence on $P$ in the competitiveness of \cref{alg:CLP_Algorithm} (as stated in \cref{thm:CLP_Competitiveness}) can be replaced with a dependence on the parameter $D$, which is the maximum ratio of the densities of two jobs in the input.
Concretely, we show that the following theorem holds.

\begin{thm}
    \label{thm:CLP_LogDCompetitiveness}
    \Cref{alg:CLP_Algorithm} is a $\dstr$-robust, $O(\dstr^3 \log (\dstr D))$-competitive algorithm for the \prob problem.
\end{thm}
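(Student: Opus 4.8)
The plan is to reuse \cref{alg:CLP_Algorithm} verbatim and to show that its analysis goes through with $\nclass$ redefined in terms of $D$ rather than $P$. The only place where $P$ entered the argument was in the bound on the number of \id classes that a job of a fixed weight can occupy: we used that the ratio of the maximum to the minimum predicted processing time is at most $\dstr P$, hence $\ceil{\log(\dstr P)}+1$ consecutive \id classes suffice. For the density version, I would instead argue directly about \id classes. Recall $\edc{q} = \floor{\log_2(\ept{q}/\wt{q})}$; since $\ept{q} \le \pt{q} < \dstr\ept{q}$, the quantity $\ept{q}/\wt{q}$ lies within a factor $\dstr$ of the true inverse density $\pt{q}/\wt{q}$. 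Therefore the ratio of the largest to the smallest value of $\ept{q}/\wt{q}$ over all jobs in the input is at most $\dstr D$, and the total number of distinct \id classes appearing in the entire instance is at most $\ceil{\log(\dstr D)} + 2$. I would set $\nclass := \ceil{\log(\dstr D)} + 2$ (or a similarly small constant adjustment) and observe that this is now a global bound on the number of \id classes, which is even stronger than what we needed for the $P$-version (there it bounded the \id classes reachable by a single weight; now it bounds all \id classes, period).

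The next step is to check that every lemma and proposition feeding into \cref{lem:CLP_LocalCompetitiveness} still holds with this new meaning of $\nclass$. Concretely: (i) \cref{lem:CLP_VolumeConversion} is stated purely in terms of $\nclass$, $\dstr$, and volume differences, and its proof (deferred to \cref{sec:CLP_Proofs}) only uses that $\nclass$ upper-bounds the number of \id classes in play — so it is insensitive to whether $\nclass$ is the $P$-bound or the $D$-bound; (ii) \cref{prop:CLP_DeltaAndWeightBounds} makes no reference to $P$ or $D$ at all; (iii) the final counting argument in the proof of \cref{lem:CLP_LocalCompetitiveness} — that $\owc - k = \clw{\lastt{j}}$ for at most $\nclass$ values of $j$ for each $k$, because an \id class larger than $j$ must be among the $\nclass$ \id classes that $\clw{\lastt{j}}$ can belong to — again only uses that a job of a fixed weight spans at most $\nclass$ \id classes, which our new $\nclass$ certainly satisfies (it bounds the number of \id classes globally). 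Hence \cref{lem:CLP_LocalCompetitiveness} gives $\Wt{t} \le O(\nclass\dstr^2)\Wt*{t}$ with the new $\nclass$.

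Finally I would mirror the proof of \cref{thm:CLP_Competitiveness}: plugging $\nclass = \Theta(\log(\dstr D))$ into the local-competitiveness bound yields $\Wt{t} \le O(\dstr^2\log(\dstr D))\Wt*{t}$ at every time $t$, and since local competitiveness at every time implies global competitiveness for weighted flow time (integrating $\Wt{t}$ and $\Wt*{t}$ over $t$), the algorithm is $O(\dstr^2\log(\dstr D))$-competitive on the rounded weights. Accounting for the rounding of weights up to powers of $\bround = \Theta(\dstr)$ costs an extra factor of $\dstr$, giving the claimed $O(\dstr^3\log(\dstr D))$. I expect no genuinely hard step here — the content is entirely in recognizing that the $P$-dependence was isolated in the definition of $\nclass$ and that the substitution $\dstr P \rightsquigarrow \dstr D$ is justified because $\ept{q}/\wt{q}$ tracks the true inverse density up to the distortion factor. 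The one point requiring a little care is making sure the additive constants in the redefinition of $\nclass$ (the "+2" versus "+1", and whether we need the \id classes reachable by a single weight or globally) are handled cleanly so that \cref{lem:CLP_VolumeConversion} and the geometric-sum step still apply without modification; this is a routine bookkeeping check rather than a real obstacle.
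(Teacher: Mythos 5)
Your proposal is correct and is essentially the paper's own argument: the $P$-dependence is isolated in the bound on the number of \id classes, which in the density version is replaced by a global bound $\nclass' = O(\log(\dstr D))$, after which \cref{lem:CLP_VolumeConversion}, \cref{prop:CLP_DeltaAndWeightBounds} and the local-competitiveness proof go through (the paper's \cref{lem:LogD_LocalCompetitiveness} even gets away with the cruder estimate $\qw{\clw{\lastt{j}}} \le \qw{\owc}$ on each of the at most $\nclass'$ summands, rather than the refined counting you retain). The one bookkeeping point you gloss over is that the \id of a job is defined with the \emph{rounded} weight, so the ratio of estimated inverse densities is $O(\dstr^2 D)$ rather than $\dstr D$; this still yields $\nclass' = O(\log(\dstr D))$, so your conclusion stands.
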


The proof of \cref{thm:CLP_LogDCompetitiveness} appears in \cref{sec:LogD_Proofs}, and is almost immediate from the previous proof of \cref{thm:CLP_Competitiveness}.

%========================
%
%the reader can observe that \cref{lem:CLP_LocalCompetitiveness} in fact holds as long as each weight class can exist in at most $\nclass$ consecutive \id classes.
%To prove competitiveness that depends on $P$, we observed that this holds for $\nclass = \ceil{\log (\dstr P)}+1$.
%
%Now, for competitiveness that depends on $D$, observe that the maximum ratio of \emph{estimated} densities (i.e.~$\frac{\wt{q}}{\ept{q}}$) with respect to the original weight is at most $\dstr D$.
%After rounding up to powers of $\bround = \Theta(\dstr)$, we have that the maximum ratio of densities becomes $O(\dstr^2 D)$.
%Hence, the maximum ratio of estimated \id values is also $O(\dstr^2 D)$.
%Thus, choosing  $\nclass = O(\log (\dstr^2 D) = O(\log (\dstr D))$ and plugging into \cref{lem:CLP_LocalCompetitiveness} yields the desired competitiveness.

\section{Weighted Flow Time -- Ratio of Weights}
\label{sec:LogW}
%! Author = NoamTouitou
%! Date = 16/07/2020
In this section, we present the $O(\dstr^2 \log W)$-competitive $\dstr$-robust algorithm for {\prob} with weighted flow time.
We do so in two steps: first, we introduce $\dstr$-robust algorithm for the unweighted setting, which is $2\ceil {{\dstr}^2}$-competitive.
Then, we show that instances of this unweighted algorithm can be combined and applied to the weighted setting, at an additional loss of an $O(\log W)$ factor in competitiveness.

\subsection{The Algorithm for Unweighted Flow Time in the Prediction Model}
\label{subsec:UW}
%! Author = NoamTouitou
%! Date = 16/07/2020
%blabla

In this subsection, we present an algorithm in the prediction model for the unweighted setting -- that is, where $\wt{q} = 1$ for all $q\in Q$.

\subsubsection{The Algorithm}
The algorithm holds each pending job in one of two bins, a ``full'' bin $F$ and a ``partial'' bin $P$.
The algorithm attempts to divide the total weight of the pending jobs equally between the two bins.
The algorithm refers to the pending jobs in $F$ by $Q_F$ and the pending jobs in $P$ by $Q_P$.
For ease of notation, we also define $\delta_F = \ps{Q_F}$ and $\delta_P = \ps{Q_P}$.

%\begin{defn}
%	We define the set of jobs pending in the algorithm at time $t$ by $Q(t)$. We also define the subsets $Q_F(t),Q_P(t) \subseteq Q(t)$ to be the pending jobs in bins $F$ and $P$ (respectively) at time $t$. In addition, we use $\delta(t),\delta_F(t), \delta_P(t)$ to denote the cardinalities of $Q(t),Q_F(t),Q_P(t)$ respectively.
%\end{defn}

The course a job undergoes in the algorithm consists of being initially assigned to bin $F$, then moving to bin $P$, and finally being processed in bin $P$ until its eventual completion.
The algorithm maintains a bijective priority mapping $\p{F}:Q_F \to \pc{1,2,\cdots,\delta_F}$ from the living jobs $Q_F$ to the natural integers from $1$ to $\delta_F$.
This mapping determines the priority of the jobs of $F$.
That is, whenever moving a job from $F$ to $P$ is required, the algorithm moves the highest priority job ($q$ such that $\p{F}(q)=\delta_F$).

The algorithm also maintains (implicitly) a priority mapping $\p{P}:Q_P \to \pc{1,2,\cdots,\delta_P}$, such that the highest priority job is chosen to be processed.
This priority mapping is simply LIFO according to the time in which the job moved from $F$ to $P$.

%\begin{defn}
%	For every time $t$, we denote by $\topj{F}{t}$ the job $q\in Q_F(t)$ such that $\p{F}{t}(q) = \delta_F(t)$ -- i.e. the job of maximum priority in $F$. We also denote by $\topj{Q}{t}$ the job in $Q_P(t)$ which was moved from $F$ to $P$ last among the jobs of $Q_P(t)$.
%\end{defn}

\begin{defn}[violation]
    Observe two jobs $q_1,q_2 \in Q_F$ in the algorithm.
    We say that the ordered pair $(q_1,q_2)$ is a \emph{violation} if $\p{F}(q_1) > \p{F}(q_2)$ and $\dstr\cdot \ept{q_2} \le \ept{q_1}$.
    Informally, the algorithm assigns priority to $q_1$ over $q_2$ even though it knows that the processing time of $q_2$ (which is strictly smaller than  $\dstr\cdot \ept{q_2}$) is strictly smaller than that of $q_1$.
\end{defn}

\textbf{Algorithm's Description and Intuition.}
The algorithm maintains two properties regarding the priorities in the bin $F$.
The first property is that there exist no violations.
When this holds, the priority is ``\textbf{quasi-SRPT}''; that is, if a job has a higher priority than another job, then the lower bound on the processing time of the first job is no more than the upper bound on the processing time of the second job (noting that in $F$ the remaining processing time is equal to the initial processing time).
This ``quasi-SRPT'' property is used in \cref{prop:UW_RelevantArrival}.
The second property is that for every natural $k$, the total processing time of the $k$ lowest priority jobs never decreases upon the release of a new job, which is used in  \cref{prop:UW_IrrelevantArrival}.

The way in which the algorithm maintains both of those traits for the bin $F$ can be seen in the function $\UponJobRelease{q}$, which is called upon the release of a new job $q$.
The algorithm first gives the newly-released job the maximum priority in $F$, which does not affect the second property, but could possibly break the first property by causing violations.
In order to fix these violations, the algorithm performs a \textbf{rotation} of the new job $q$ and all violations.
That is, numbering the jobs in violation with $q$ as $q_{1},\cdots, q_m$ by order of decreasing priority, the rotation \underline{simultaneously} sets the new priority of $q$ to be the old priority of $q_m$, the new priority of $q_m$ to be the old priority of $q_{m-1}$, and so on (the new priority of $q_1$ is the old priority of $q$, i.e. the maximum priority).
In \cref{alg:UW_Algorithm}, this rotation operation is described as simply composing the mapping $\p{F}$ with the cyclic permutation $\sigma = \pr{\p{F}(q_m), \p{F}(q_{m-1}),\cdots,\p{F}(q_{1}),\p{F}(q)}$, where $\sigma$ maps from the natural numbers $\pb{\delta_F}$ to $\pb{\delta_F}$, and is written in cycle notation.
In the analysis, we show that this rotation is sufficient to solve any violations that occur in the insertion.

\paragraph{Visualization in Figures.} Visualizations of the state of \cref{alg:UW_Algorithm} are given throughout the paper (e.g. in \cref{fig:UW_Structure}).
In these visualizations, each job is visualized as a rectangle, where the height of the rectangle is the weight of the job (which is currently $1$, since we're considering the unweighted setting).
The area of the rectangle is the remaining volume of the job. Consider the transformation of the rectangle as a job is processed: its weight (height) remains the same but its volume (area) decreases.
Hence, the width of the rectangle decreases as the job is processed.

A bin (either $F$ or $P$) is visualized as a ``stack'' of the jobs in that bin.
Those jobs are stacked according to their priority in the bin (the highest priority job is at the top of the stack).
At any point in time, the highest priority job in $P$ is processed -- this is the job at the top of the visualization of $P$.

These visualizations are for the sake of understanding the algorithm and its proof, and are not used in the algorithm itself.
The algorithm is not aware of the details of the visualization, and in particular is not aware of the volume of the jobs (i.e. the area of the rectangles).

The algorithm is shown in \cref{alg:UW_Algorithm}. \cref{fig:UW_Structure,fig:UW_Insertion} visualize \cref{alg:UW_Algorithm}.

\begin{algorithm}[h]
    \renewcommand{\algorithmcfname}{Algorithm}
    \caption{\label{alg:UW_Algorithm} Scheduling with Predictions -- Unweighted}

    \EFn(\tcp*[h]{upon the release of a job $q$ at time $t$}){\UponJobRelease{$q$}}{
    Set $Q_F \gets Q_F \cup \pc{q}$, and set $\p{F}(q) \gets \delta_F$.

    \BlankLine

    \tcp*[h]{rotate the priorities of jobs to fix violations}

    \If{there exists $q'\in Q_F\backslash\pc{q}$ such that $(q,q')$ is a violation}{\label{line:UW_Rotate}
    let $Q' = \pc{q'\in Q_F\backslash\{q\} \middle| (q,q')\text{ is a violation}}$.

    denoting $m = \ps{Q'}$, let $q_1,\cdots,q_m$ be an enumeration of $Q'$ such that $\p{F}(q) > \p{F}(q_1) > \cdots > \p{F}(q_m)$.

    let $\sigma$ be the cyclic permutation $\pr{\p{F}(q_m), \p{F}(q_{m-1}),\cdots,\p{F}(q_{1}),\p{F}(q)}$.

    set $\p{F} \gets \sigma \circ \p{F}$
    }
        }

    \BlankLine

    \EFn(\tcp*[h]{upon $\delta_F>\delta_P$}){\UponHeavyF{}}{
    let $q \in Q_F$ be such that $\p{F}(q) = \delta_F$.

    move $q$ from $F$ to $P$, and set $\p{P}(q) = \delta_P$.

    }

    \BlankLine

    \EFn(\tcp*[h]{at any point in time $t$}){\Process{}}{
    let $q\in Q_P$ be such that $\p{P}(q) = \delta_P$.

    process $q$.
    }

\end{algorithm}

\begin{figure*}[h]
    \begin{center}
        \hspace*{\fill}
        \subfloat[][\label{subfig:UW_Structure} The state of \cref{alg:UW_Algorithm}]{ \includegraphics[width=0.3\textwidth]{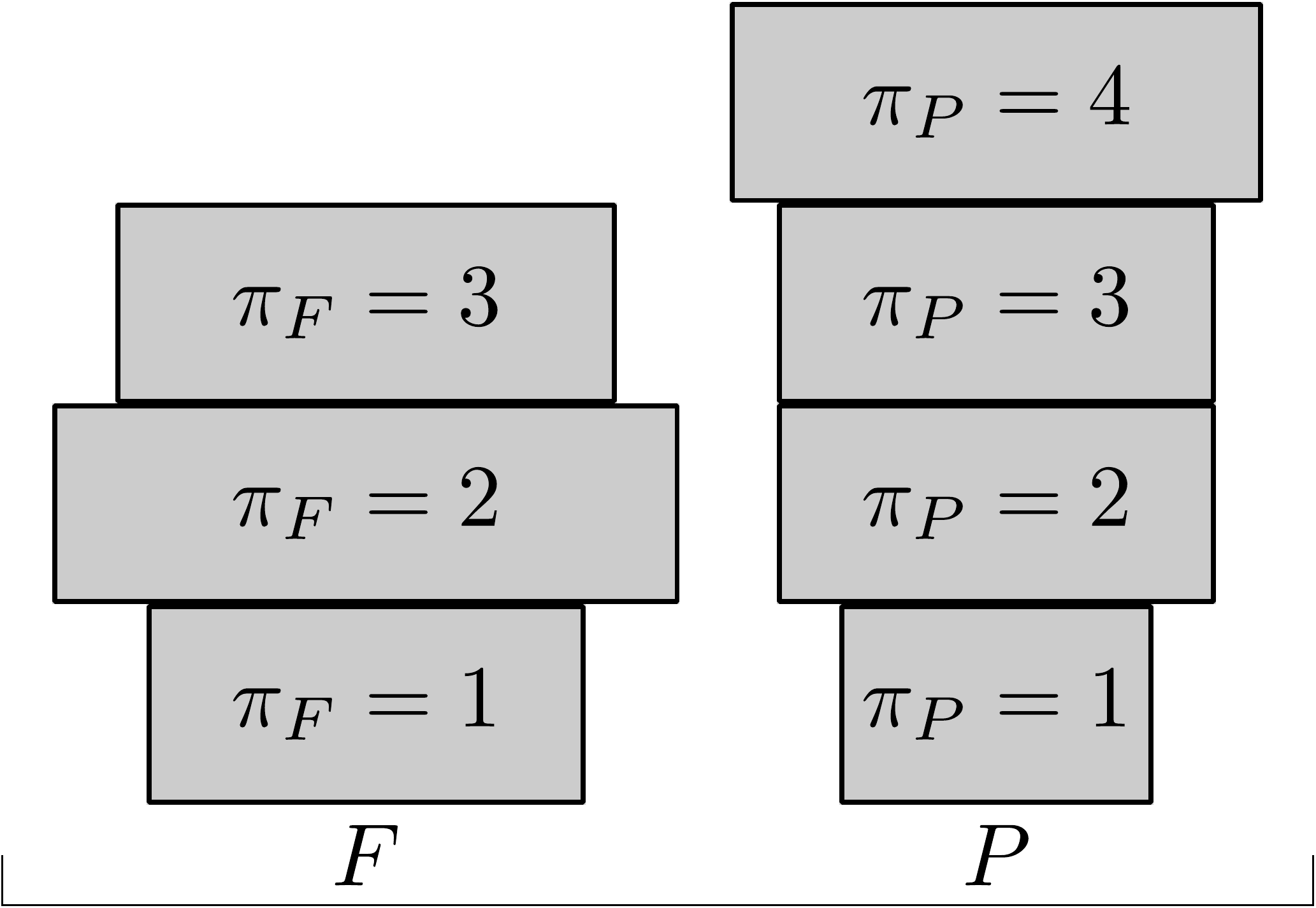}}
        \hspace*{\fill}
        \vline
        \hspace*{\fill}
        \subfloat[][\label{subfig:UW_Transfer1}$ \UponHeavyF $ is called since $\delta_F > \delta_P$.]{\includegraphics[width=0.3\textwidth]{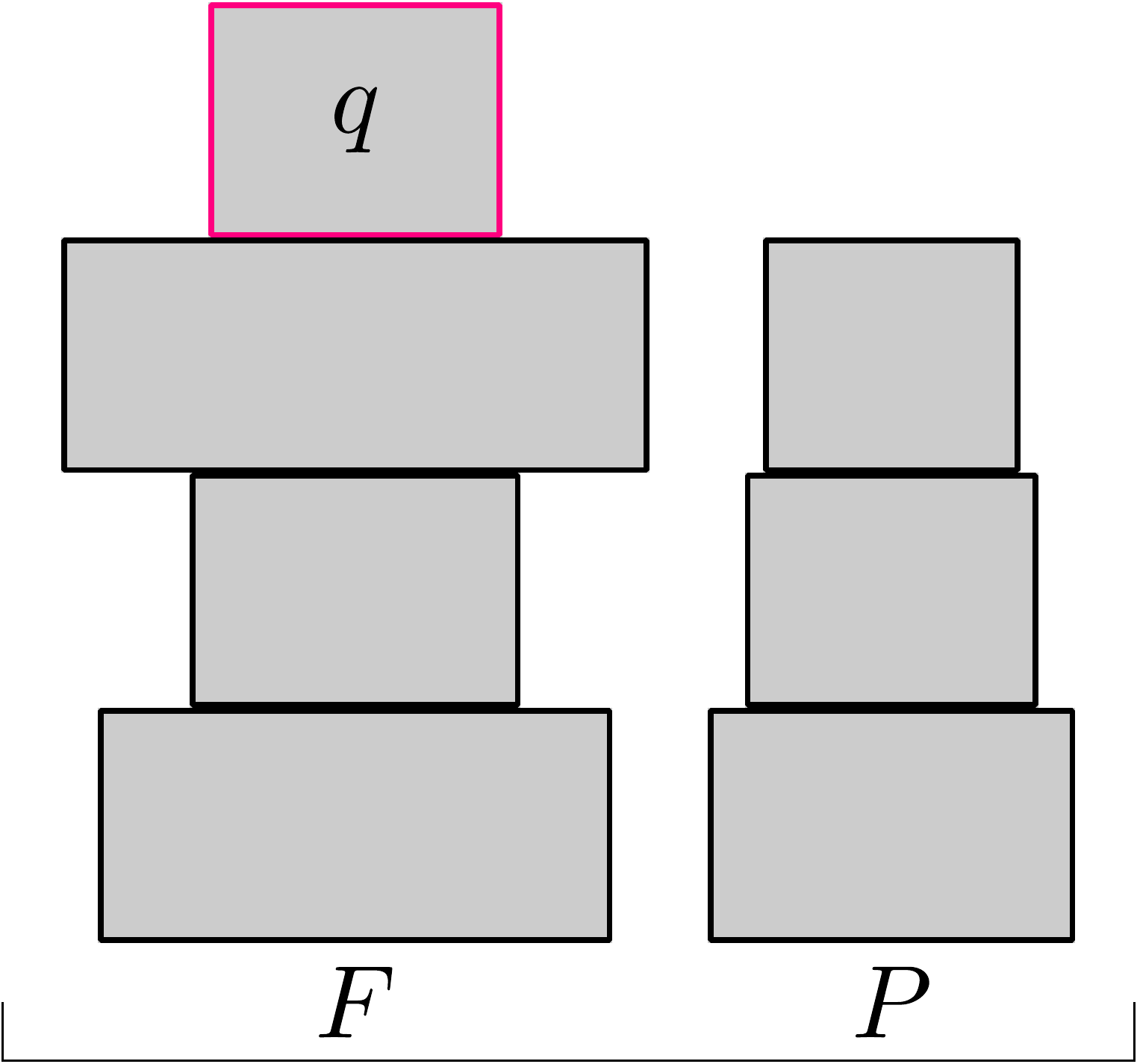}}
        \hspace*{\fill}
        \subfloat[][\label{subfig:UW_Transfer2}Request $q$ receives highest priority, causing violations with $q_1,q_2,q_3$.]{
        \includegraphics[width=0.3\textwidth]{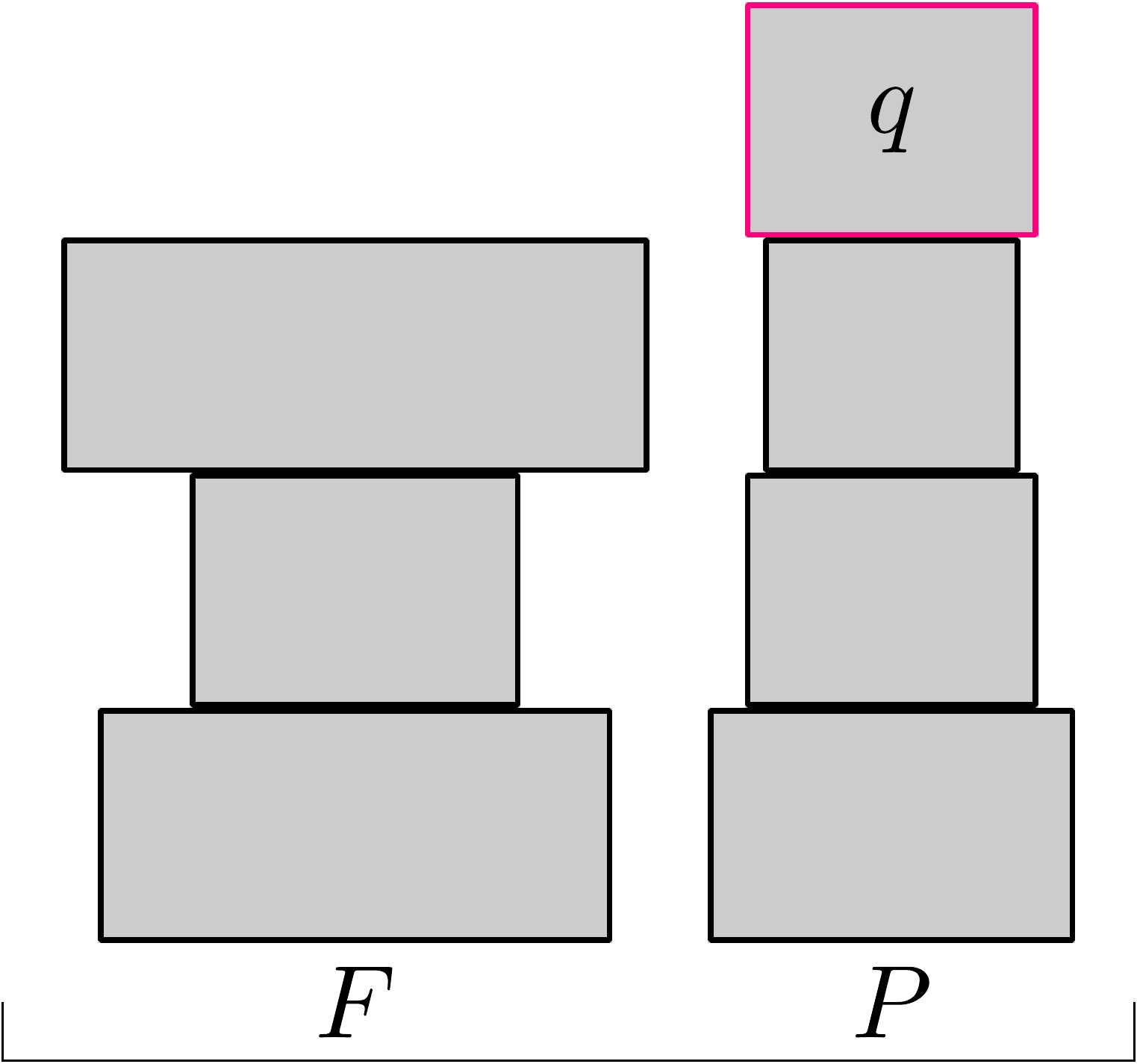}}
        \hspace*{\fill}
    \end{center}
    \Cref{subfig:UW_Structure} shows a possible state of the algorithm at some point in time.
    In the figure, the bin $F$ contains $3$ pending jobs, and the bin $P$ contains $4$ pending jobs.
    In this figure, the job being processed is the job of priority $4$ in $P$.
    \caption[]{\label{fig:UW_Structure}The State of Bins and $\UponHeavyF$ in \cref{alg:UW_Algorithm}}

    \cref{subfig:UW_Transfer1,subfig:UW_Transfer2} show the operation of $\UponHeavyF$.
    At the state of \cref{subfig:UW_Transfer1}, $\delta_F > \delta_P$ and so $\UponHeavyF$ is called.
    The top priority job $q$ in $F$ is then moved to $P$ , where it also has top priority (\cref{subfig:UW_Transfer2}).
\end{figure*}

%\begin{figure*}
%
%    \begin{center}
%        \hspace*{\fill}
%        \subfloat[][\label{subfig:UW_Transfer1}$ \UponHeavyF $ is called since $\delta_F > \delta_P$.]{\includegraphics[height=0.2\textheight]{Figures/UW_TransferDemonstration1}}
%        \hspace*{\fill}
%        \subfloat[][\label{subfig:UW_Transfer2}Request $q$ receives highest priority, causing violations with $q_1,q_2,q_3$.]{
%        \includegraphics[height=0.2\textheight]{Figures/UW_TransferDemonstration2}}
%        \hspace*{\fill}
%    \end{center}
%    This figure shows the operation of $\UponHeavyF$.
%    At the state of \cref{subfig:UW_Transfer1}, $\delta_F > \delta_P$ and so $\UponHeavyF$ is called.
%    The top priority job $q$ in $F$ is then moved to $P$ , where it also has top priority (\cref{subfig:UW_Transfer2}).
%
%    \caption[]{\label{fig:UW_Transfer}$ \UponHeavyF $ in \cref{alg:UW_Algorithm}}
%\end{figure*}

\begin{figure*}

    \begin{center}
        \subfloat[][\label{subfig:UW_Insertion1}Request $q$ is released, and sent to bin $F$]{
        \includegraphics[height=0.3\textheight]{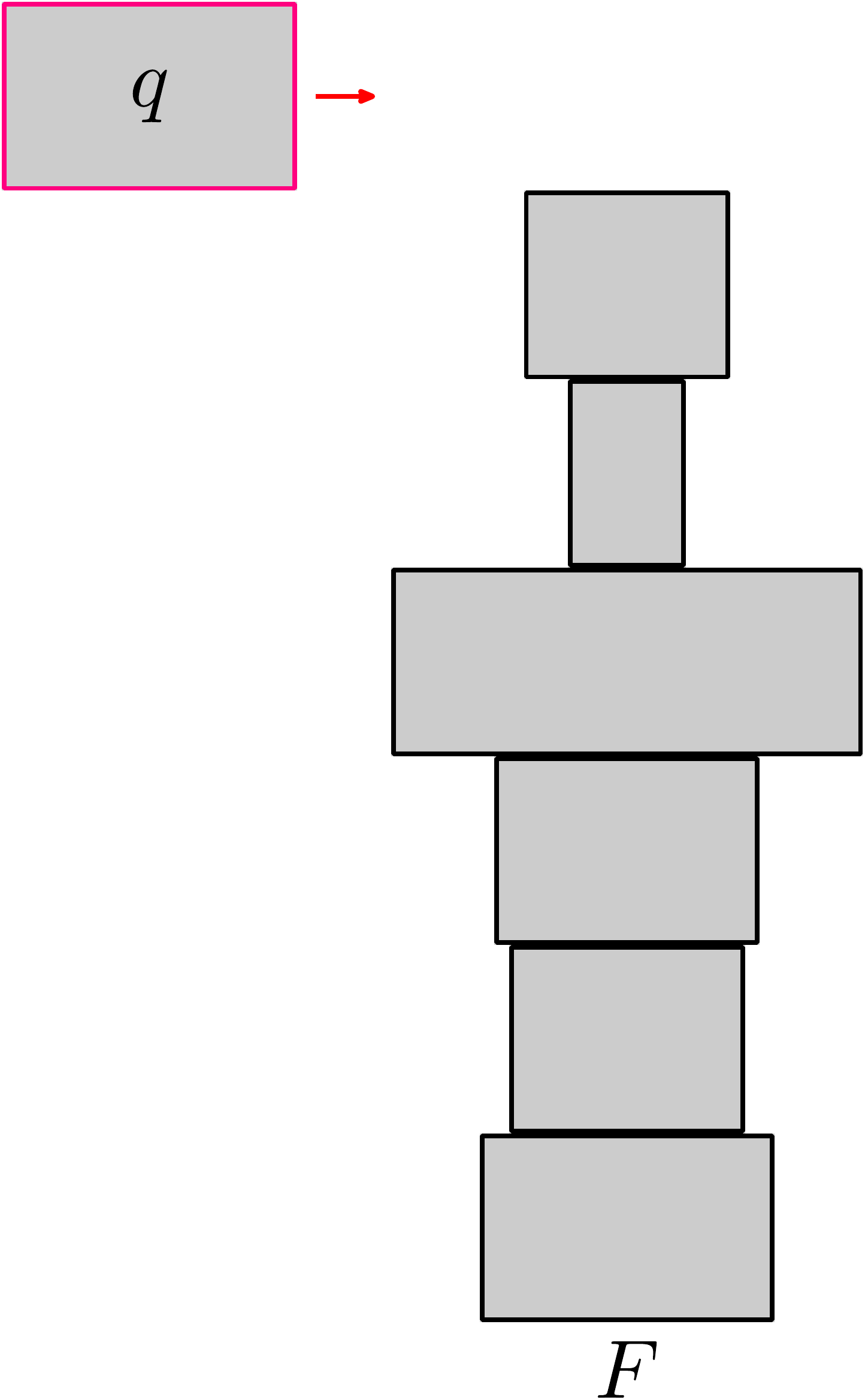}
        }
        \hfill
        \subfloat[][\label{subfig:UW_Insertion2}Request $q$ receives highest priority, causing violations with $q_1,q_2,q_3$.]{
        \hspace{0.1\textwidth}
        \includegraphics[height=0.3\textheight]{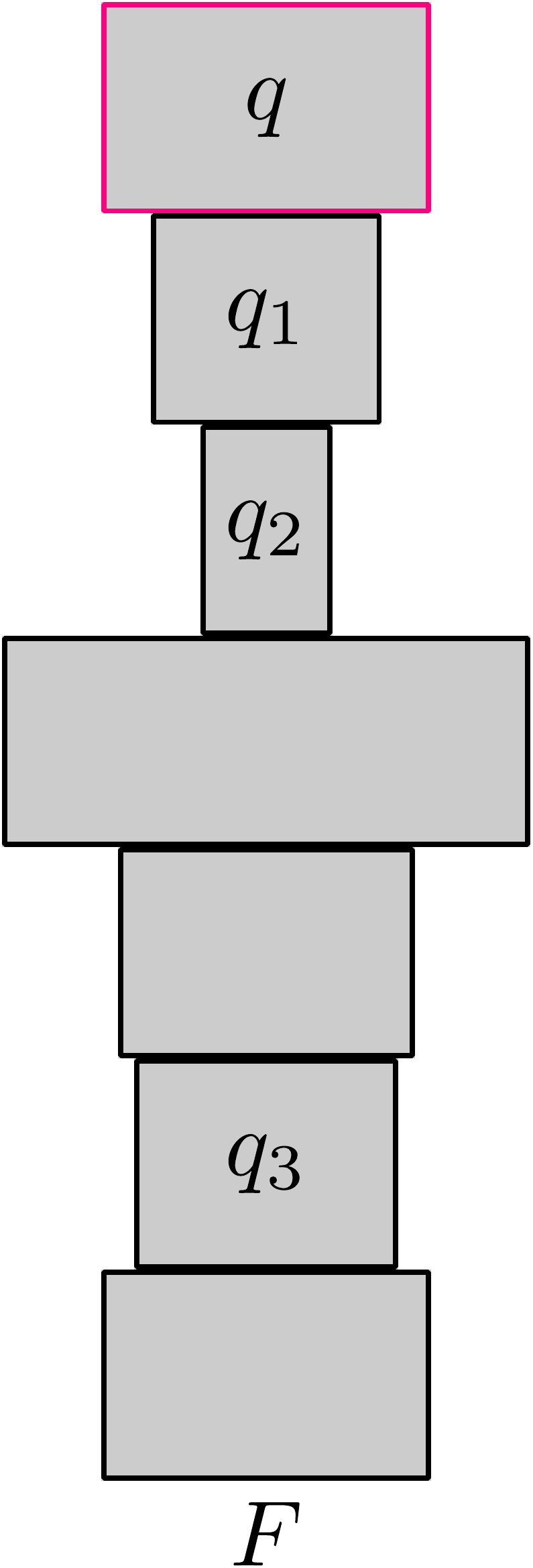}
        \hspace{0.1\textwidth}
        }
        \hfill
        \subfloat[][\label{subfig:UW_Insertion3}A rotation occurs, solving the violations.]{
        \hspace{0.05\textwidth}
        \includegraphics[height=0.3\textheight]{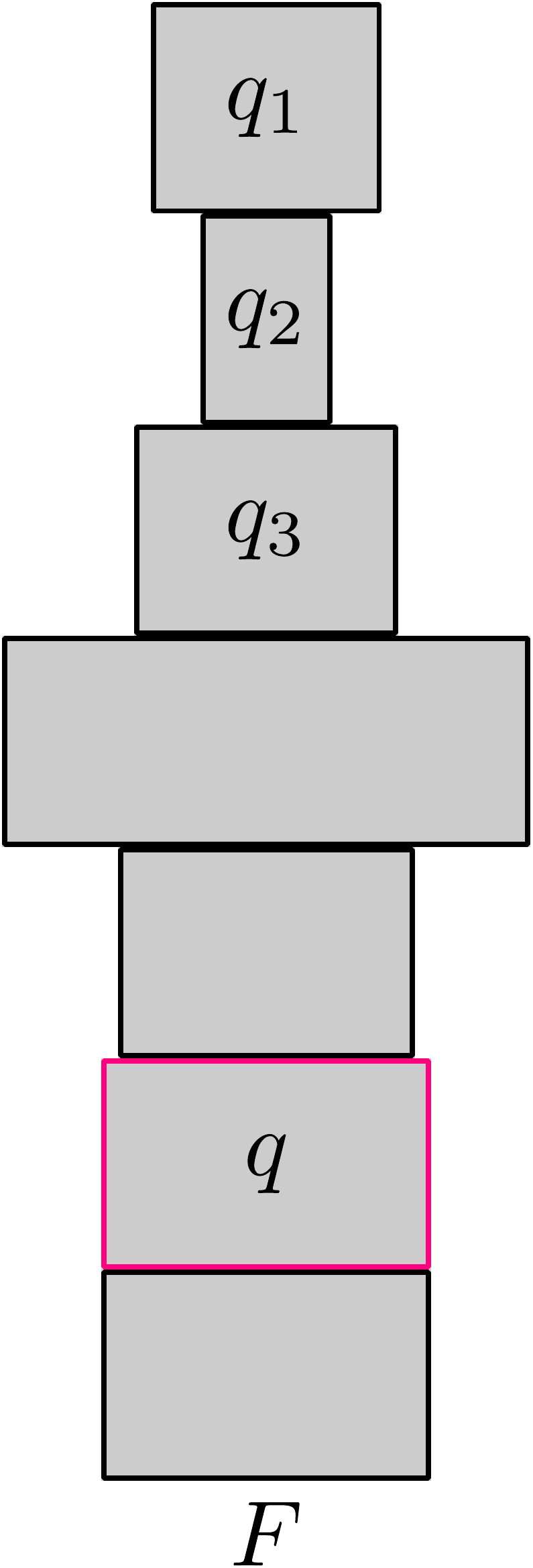}
        \hspace{0.05\textwidth}
        }
    \end{center}

    \caption[]{\label{fig:UW_Insertion}$ \UponJobRelease $ in \cref{alg:UW_Algorithm}}
\end{figure*}

The competitiveness guarantee of \Cref{alg:UW_Algorithm} is given in \cref{thm:UW_Competitiveness}, the proof of which is given in \cref{subsec:LogW_Proofs_UW}.
\begin{thm}
    \label{thm:UW_Competitiveness}
    \Cref{alg:UW_Algorithm} is a $\dstr$-robust, $2\ceil{\dstr^2}$-competitive algorithm for the unweighted \prob problem.
\end{thm}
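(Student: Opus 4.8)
The plan is to follow the Bansal–Dhamdhere-style potential function argument, adapted to the absence of SRPT. Since the problem is unweighted, local competitiveness (bounding $\delta(t)$ against $\delta^*(t)$ pointwise) is hopeless — the classic ``bombardment'' lower bound rules this out — so instead I would design a potential $\Phi(t)$ and show that $\delta(t) + \frac{d}{dt}\Phi(t) \le 2\ceil{\dstr^2}\cdot \delta^*(t)$ at all times where $\Phi$ is differentiable, with $\Phi$ starting and ending at $0$ and with no upward jumps at job releases or bin transfers. Integrating this inequality over time then yields $F^{\alg} = \int_0^\infty \delta(t)\diff t \le 2\ceil{\dstr^2}\cdot \int_0^\infty \delta^*(t)\diff t = 2\ceil{\dstr^2}\cdot F^{\opt}$.

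First I would set up the potential. The natural choice, following the cited framework, is something like $\Phi(t) = c\sum_{q \in Q_F(t)} (\text{rank-weighted term depending on how many algorithm jobs have volume} \le \rpt{q}{t}) - (\text{credit for optimal's volume})$, i.e. a sum over pending algorithm jobs of a function measuring how much ``unfinished structure'' lies below them, offset against the optimal solution's remaining volume in matching volume-ranges. The role of the \emph{full} bin $F$ is crucial here: because jobs in $F$ are never processed, their volume equals their original (predicted-bounded) processing time, so the ``quasi-SRPT'' ordering guaranteed by the violation-free invariant (\cref{prop:UW_RelevantArrival}) lets me compare the $k$-th lowest-priority job of $F$ against the $k$-th smallest-volume job of $\opt$ up to a factor $\dstr$ — this is where the $\ceil{\dstr^2}$ enters. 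The LIFO rule in the partial bin $P$ is what keeps the number of partially-processed jobs controlled and makes the discharging of $\Phi$ during processing go through cleanly.

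The argument then splits into the standard cases. (i) \textbf{Processing by the algorithm}: the algorithm processes the top-priority (most recently transferred) job in $P$; I must show $\Phi$ decreases at rate at least $\delta(t) - 2\ceil{\dstr^2}\delta^*(t)$ — here I'd invoke that $\delta_F$ and $\delta_P$ are balanced (within $1$, maintained by $\UponHeavyF$), so $\delta(t) \approx 2\delta_P(t)$, and charge the potential drop from the volume of the processed partial job against the $\delta_P$ jobs sitting below it, with the $\dstr$-comparison to $\opt$ covering the $\delta^*$ term. (ii) \textbf{Processing by $\opt$}: $\opt$ only helps — its volume decreasing can only decrease (the positive credit part of) $\Phi$ or leave it unchanged; need to check the matching is monotone. (iii) \textbf{Job release} (\cref{prop:UW_IrrelevantArrival} and \cref{prop:UW_RelevantArrival}): a new job enters $F$ at top priority and then a rotation fixes violations; I need that after the rotation $\Phi$ does not increase. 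The two invariants the algorithm maintains are exactly tailored to this: the ``no violations'' property gives quasi-SRPT so the new job is correctly placed relative to $\opt$'s new copy, and the ``total processing time of the $k$ lowest-priority jobs never decreases'' property ensures the rank-weighted sum doesn't jump up. (iv) \textbf{Transfer $F \to P$} via $\UponHeavyF$: the top job of $F$ moves to top of $P$; since it was full and is still full at the instant of transfer, and the LIFO/priority structure is set up consistently, $\Phi$ is continuous across this event.

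\textbf{The main obstacle} I anticipate is verifying the release case — specifically, proving that the rotation operation never increases the potential. The rotation permutes priorities among the new job and its violators, and I must show the rank-weighted structural sum is non-increasing under this specific cyclic shift; this is precisely the content abstracted into the ``$k$ lowest-priority jobs'' invariant, and establishing that invariant (that a rotation preserves, for every $k$, the total processing time of the bottom $k$ jobs of $F$) is the delicate combinatorial heart of the proof. A secondary subtlety is the bookkeeping of the assignment/matching between algorithm jobs and $\opt$ jobs that underlies $\Phi$: one must be careful that this matching is well-defined and monotone under all four event types, and that the quasi-SRPT property (rather than true SRPT) is strong enough to keep the per-job comparison within a factor $\dstr$ — doubling to $\dstr^2$, hence the $2\ceil{\dstr^2}$ bound — rather than leaking an unbounded factor as naive non-SRPT processing would.
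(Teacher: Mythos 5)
Your proposal rests on a premise that points away from how the argument actually goes, and it never supplies its central construction. You dismiss pointwise (local) competitiveness as ``hopeless'' and therefore plan an amortized potential $\Phi$ with $\delta(t)+\frac{d}{dt}\Phi(t)\le 2\ceil{\dstr^2}\,\delta^*(t)$; but a local guarantee is exactly what is true and what the paper proves. The paper defines the covered volume $\cov{x}{t}$ (the total remaining volume of pending jobs whose priority rank within their bin is at most $x$), shows it is preserved under $F\to P$ transfers (\cref{prop:UW_JobMoveRetainsVolume}), non-decreasing under releases (\cref{prop:UW_IrrelevantArrival}), increases by $\pt{q}$ when the bar is raised by $\ceil{\dstr^2}$ at a release (\cref{prop:UW_RelevantArrival}, via the violation-free invariant of \cref{prop:UW_NoViolations}), and can only be depleted by processing when it already equals all of $V(t)$ (\cref{prop:UW_Execution}); an induction over time (\cref{lem:UW_VolumeLemma}) then gives $\cov{\ceil{\dstr^2}\delta^*(t)}{t}=V(t)$ at every time $t$, hence $\delta_F(t)\le\ceil{\dstr^2}\delta^*(t)$ and $\delta_P(t)\le\ceil{\dstr^2}\delta^*(t)$, i.e.\ $\delta(t)\le 2\ceil{\dstr^2}\delta^*(t)$ pointwise, and integration finishes. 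The ``bombardment'' phenomenon does not rule this out --- SRPT itself is locally optimal for unweighted flow time; bombardment is why bad local ratios are fatal, which is a reason to \emph{seek} a local guarantee, not to abandon it. Beyond this misdirected framing, your potential is described only as ``something like'' a rank-weighted sum offset by credit for the optimum, and the two steps you yourself flag as the crux --- that the rotation never increases the potential, and that the algorithm-versus-OPT matching is well-defined and monotone under all events --- are left unproven; these are precisely where the real work lies, so as it stands the proposal is a plan rather than a proof.

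Your accounting of the constants is also off. In the actual argument the $\ceil{\dstr^2}$ arises entirely from comparisons \emph{inside} the full bin: when the rotation pushes a job $q^{\star}$ above the bar $x$, the absence of violations guarantees that each of the $\ceil{\dstr^2}$ jobs newly covered when the bar rises to $x+\ceil{\dstr^2}$ satisfies $\pt{q'_j}\ge\ept{q'_j}\ge\ept{q^{\star}}/\dstr>\pt{q^{\star}}/\dstr^2$, so the lost volume is recovered; the optimum enters only through the bar height $\ceil{\dstr^2}\,\ps{Q^*(t,\tau)}$ in the induction, not through a per-rank matching of the $k$-th lowest-priority job of $F$ with the $k$-th smallest job of $\opt$. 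Likewise the leading factor $2$ comes from the two bins $F$ and $P$ each being bounded by $\ceil{\dstr^2}\delta^*(t)$, not from ``doubling'' $\dstr$ to $\dstr^2$. An amortized potential argument might conceivably be made to work, but your writeup neither defines the potential nor carries out the release/rotation case, so it does not yet constitute a proof of the stated $2\ceil{\dstr^2}$ bound.
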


\subsection{The $O(\dstr^2 \log W)$-competitive Algorithm for Weighted Flow Time}
\label{subsec:LogW}

Having described the unweighted algorithm in the previous subsections, we present the $O(\dstr^2 \log W)$-competitive algorithm for the weighted setting.

\subsection{The Algorithm}
\label{subsec:LogW_Algorithm}

We assume that the weights of incoming jobs are of an integer power of $2$; the algorithm can enforce this by rounding weights, incurring a factor of at most $2$ to its competitive ratio.

For each $i\in \mathbb{Z}$, the algorithm maintains a \emph{superbin} $A^i$ (only nonempty superbins are maintained explicitly).
The superbin is constructed as in \cref{subsec:UW}.
That is, $A^i$ contains two bins $F^i$ and $P^i$ (containing the pending jobs $Q_{F^i}$ and $Q_{P^i}$, respectively), and maintains the priority bijections $\p{F^i}$ and $\p{P^i}$ on the jobs of $F^i$ and $P^i$, respectively.
As in the weighted case, the algorithm refers to $\ps{Q_{F^i}}$ and $\ps{Q_{P^i}}$ by $\delta_{F^i}$ and $\delta_{P^i}$, respectively.

The algorithm sends released jobs of weight $2^i$ (for any $i\in \mathbb{Z}$) to the superbin $A^i$.
The insertion to the superbin $A^i$ is according to $\UponJobRelease$ in \cref{alg:UW_Algorithm}.
That is, the released job $q$ is first inserted to $F^i$, the algorithm sets $\p{F^i}(q) \gets \delta_{F^i}$, and then possibly performs a rotation to solve violations.
For every superbin index $i$, the algorithm also maintains that $\wts{Q_{F^i}}\le \wts{Q_{P^i}}$, by moving jobs from $F^i$ to $P^i$ as in \cref{subsec:UW}, performed by calling $\UponHeavyF$ as defined in \cref{alg:UW_Algorithm}.

As for processing, the algorithm chooses for processing the superbin $A^i$ with the heaviest partial bin $P^i$.
When a superbin $A^i$ is chosen for processing, algorithm makes a call to $\Process$ as defined in \cref{alg:UW_Algorithm}, which chooses the job $q\in Q_{P^i}$ such that $\p{P^i}(q)$ is maximal.

The algorithm is given in \cref{alg:LogW_Algorithm}.
A possible state of \cref{alg:LogW_Algorithm} is visualized in \cref{fig:LogW_Structure}.

\begin{algorithm}[h]
    \caption{\label{alg:LogW_Algorithm} Scheduling with Predictions -- $O(\log W)$-Competitive}

    \EFn(\tcp*[h]{upon the release of a job $q$ at time $t$}){\UponJobRelease{$q$}}{

        let $i$ be such that $\wt{q} = 2^i$.

        call $A^i.\UponJobRelease{q}$
    }

    \BlankLine

    \EFn(\tcp*[h]{at any point in time $t$}){\Process{}}{
        let $i = \arg\max_{i'} \wts{P^{i'}}$.

        call $A^i.\Process{}$.
    }

    \EFn(\tcp*[h]{upon $w(F^i)>w(P^i)$ for some index $i$}){\UponHeavyF{$i$}}{
        call $A^i.\UponHeavyF{}$
    }

\end{algorithm}

\begin{figure}
    \begin{center}
        \includegraphics[height=0.3\textheight]{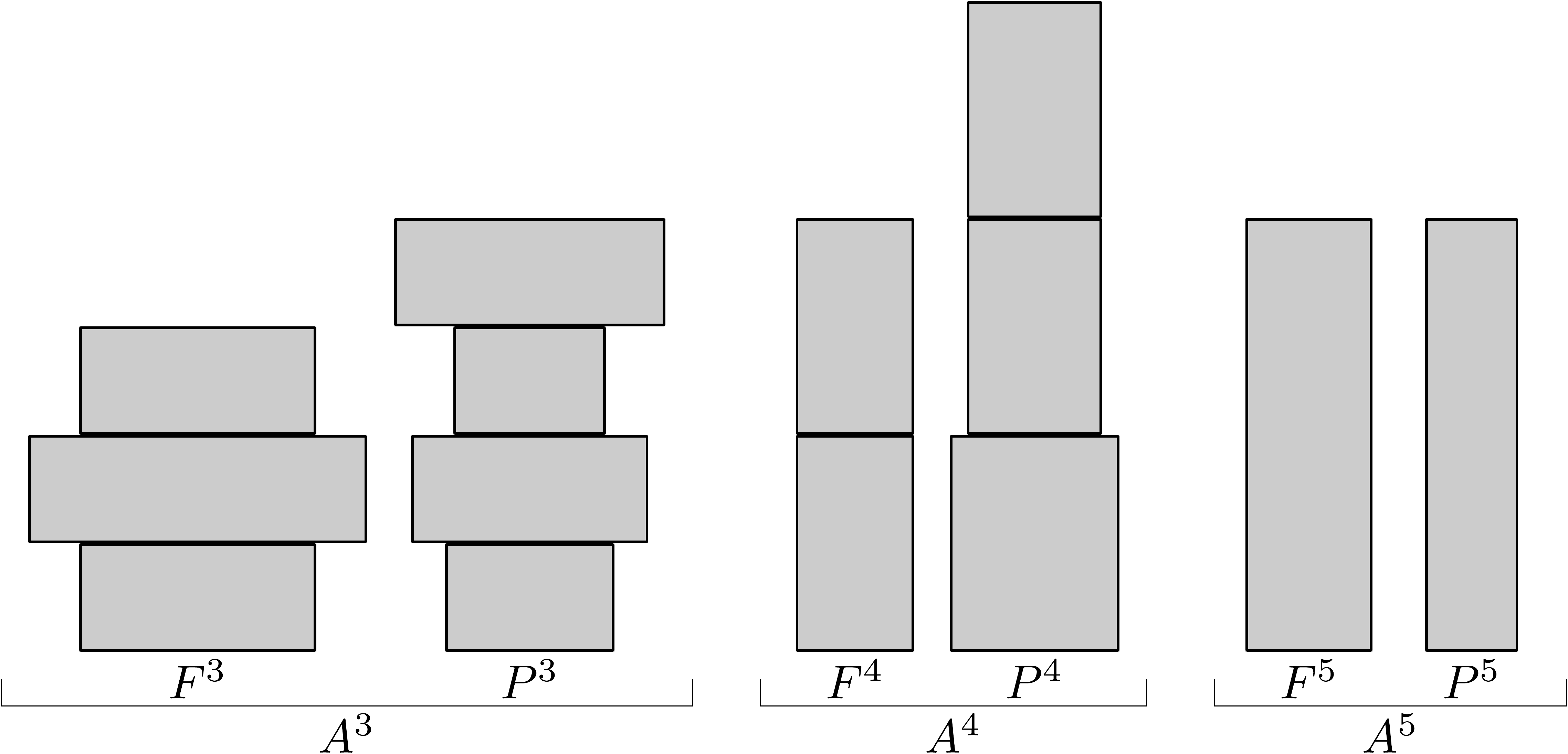}
    \end{center}
    \caption[]{A Possible State of \cref{alg:LogW_Algorithm}}
    \label{fig:LogW_Structure}
\end{figure}

The competitiveness guarantee of \cref{alg:LogW_Algorithm} is given in \cref{thm:LogW_Competitiveness}, the proof of which appears in \cref{subsec:LogW_Proofs_W}.

\begin{thm}
    \label{thm:LogW_Competitiveness}
    \cref{alg:LogW_Algorithm} is a $\dstr$-robust, $O\pr{\dstr^2\log W}$-competitive algorithm for the \prob problem.
\end{thm}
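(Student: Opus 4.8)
\textbf{Proof proposal for \cref{thm:LogW_Competitiveness}.}

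The plan is to lift the analysis of the unweighted algorithm (\cref{thm:UW_Competitiveness}) to the weighted setting via the superbin decomposition, paying an extra $O(\log W)$ factor. The first step is to set up the potential/amortized framework in the style of Bansal and Dhamdhere. For each weight class $i$ (i.e., each superbin $A^i$), I would import the structural invariants proved for \cref{alg:UW_Algorithm}: the ``quasi-SRPT'' property of the priority order in $F^i$ (no violations), the monotonicity property that the total processing time of the $k$ lowest-priority jobs in $F^i$ never decreases on a release, the LIFO discipline in $P^i$, and the balance $w(Q_{F^i}) \le w(Q_{P^i})$ (and, by the $\UponHeavyF$ trigger, $w(Q_{F^i})$ stays within a factor $2$ of $w(Q_{P^i})$ since weights inside a superbin are uniform powers of $2$). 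The key observation is that these invariants are maintained \emph{per superbin} and are unaffected by the global scheduling choice, because processing a superbin only ever moves jobs within it and only ever calls the three event functions of \cref{alg:UW_Algorithm} on that superbin.

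The second step is the heart of the argument: a local-competitiveness-style bound comparing $\Wt{t}$ to $\Wt*{t}$, or rather a charging argument that bounds the total weighted flow time. Following the Bansal--Dhamdhere template (also used in \cite{DBLP:conf/focs/AzarT18}), I would fix a weight class $i$ and argue that the contribution of superbin $A^i$ to $\alg$'s cost is bounded, up to a constant depending on $\dstr$, by $\opt$'s cost \emph{restricted to jobs of weight class $\ge i$} (or within a bounded window of classes), plus lower-order terms. Summing this over the $O(\log W)$ weight classes, with each job of $\opt$ being charged by only $O(1)$ classes (because in the per-class bound each $\opt$ job is charged only by classes near its own), yields the $O(\log W)$ loss. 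Within a single superbin, the analysis mirrors the unweighted proof: the full bin $F^i$ acts as a ``counterweight'' limiting the number (hence weight, since weights are uniform in a superbin) of partially-processed jobs; the quasi-SRPT property lets one argue that when a ``relevant'' job arrives the algorithm's state is dominated by $\opt$'s; and the LIFO rule in $P^i$ controls the cost charged to irrelevant arrivals. The fact that the global rule processes the superbin with the heaviest partial bin is exactly what makes the weighted flow-time objective decompose properly: at any time the total weight pending is $\sum_i w(Q_{P^i}) + w(Q_{F^i}) \le 2\sum_i w(Q_{P^i}) \le O(\log W)\cdot \max_i w(Q_{P^i})$, and the superbin being processed is the maximizer, so progress is being made against the dominant term.

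The main obstacle — and this is flagged in the ``Our Techniques'' section — is that the binning reduction is \emph{not} black-box: one must verify that the particular priority structure of \cref{alg:UW_Algorithm} (rotation-based quasi-SRPT in $F$, LIFO in $P$) interacts correctly with the interleaving of processing across superbins. Concretely, the delicate point is that in the unweighted single-bin analysis, time flows continuously and the bin is always being processed, whereas here a superbin $A^i$ may sit idle for long stretches while other superbins are served; one must show the invariants and the charging argument survive these idle periods. I would handle this by observing that idle periods for $A^i$ only make $\opt$'s comparison easier (the algorithm's state in $A^i$ is frozen while releases may still occur, but releases only preserve the monotonicity invariant), and by defining the charging so that a unit of $\alg$'s cost in $A^i$ during an idle period is charged against $\opt$'s cost from the same period at weight class $i$. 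A second, more technical obstacle is relating the \emph{number} of jobs (the natural quantity controlled by the $\delta_{F^i} \approx \delta_{P^i}$ balance inside a superbin, inherited from the unweighted algorithm) to the \emph{weight} of those jobs; this is immediate because all jobs in $A^i$ have identical weight $2^i$, so $w(Q_{P^i}) = 2^i \delta_{P^i}$ exactly, but it must be stated carefully so the unweighted bound of $2\ceil{\dstr^2}$ transfers with only the stated $O(\dstr^2)$ constant. Finally, one accounts for the factor-$2$ weight rounding, giving the claimed $O(\dstr^2 \log W)$ bound.
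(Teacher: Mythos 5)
Your skeleton (per-superbin import of the unweighted invariants, local competitiveness in the Bansal--Dhamdhere style, the heaviest-partial-bin rule, $O(\log W)$ superbins) matches the paper's, but the step everything hinges on is missing, and the substitute you sketch would not work. What must be shown is that at every time $t$ the weight of \emph{each individual bin} $F^i,P^i$ is at most $\ceil{\dstr^2}\cdot\Wt*{t}$, where $\Wt*{t}$ is $\opt$'s \emph{total} pending weight over all classes; the $\log W$ then comes from the fact that every one of the $O(\log W)$ bins is charged against all of $\opt$. Your charging instead bounds superbin $A^i$ by $\opt$'s cost restricted to classes $\ge i$ (or a bounded window), claims each $\opt$ job is charged by only $O(1)$ classes, and charges $A^i$'s cost during its idle periods to $\opt$'s same-period cost at class $i$. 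This fails at exactly the delicate point you flag: while $A^i$ is starved because some other partial bin is heavier, $\opt$ may be busy completing precisely the class-$i$ jobs, so the backlog accumulating in $A^i$ is not dominated by $\opt$'s class-$i$ (or nearby-class) backlog from that period -- it can only be charged against $\opt$'s weight across \emph{all} classes. The claim that ``idle periods only make $\opt$'s comparison easier'' is backwards at the per-superbin level. Moreover your accounting is internally inconsistent: if each $\opt$ job were really charged by only $O(1)$ classes you would be proving $O(\dstr^2)$-competitiveness with no $\log W$ factor, which is impossible -- a $\dstr$-robust algorithm with $\dstr$ near $1$ in particular handles exactly known processing times, contradicting the Bansal--Chan lower bound.

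The paper never analyzes a superbin in isolation; it runs the bar/covered-volume argument \emph{globally}. The covered volume $\cov{x}{t}$ sums over all superbins; \cref{prop:LogW_Arrivals} shows a release of $q$ never decreases it and that raising the bar by $\ceil{\dstr^2}\wt{q}$ gains $\pt{q}$; and \cref{prop:LogW_Execution} -- the only place the heaviest-partial-bin rule enters -- shows covered volume can decrease through processing only when the bar already covers everything, because the processed partial bin is the heaviest bin in the whole system. The induction of \cref{lem:LogW_VolumeLemma} (over all events, across all superbins simultaneously, using non-laziness of the single machine) then yields $\cov{\ceil{\dstr^2}\,\wts{Q^*(t)}}{t}=V(t)$, hence no pending job has base exceeding $\ceil{\dstr^2}\,\wts{Q^*(t)}$, so each of the at most $2(\ceil{\log W}+1)$ bins carries weight at most $\ceil{\dstr^2}\,\wts{Q^*(t)}$; summing, integrating over $t$, and paying the factor $2$ for weight rounding gives $O(\dstr^2\log W)$. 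To repair your proposal you would need to reconstruct this global induction (or an equivalent global volume-domination argument), not a per-class charging; your per-superbin invariants and the $w(Q_{P^i})=2^i\delta_{P^i}$ observation are fine but are only the easy part.
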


%\section{Weighted Flow Time -- Ratio of Processing Times}
%\label{sec:LogP}  
%\input weightp.tex

\section{Discussion and Open Problems}
\label{sec:Disc}
%! Author = NoamTouitou
%! Date = 16/07/2020

This paper presents algorithms for {\prob} for minimizing weighted flow time which have a polynomial dependency on the distortion $\dstr$ of the given predictions, such that when this distortion is constant the best known competitiveness bounds are matched (namely, $O(\log P)$, $O(\log D)$ and $O(\log W)$).
These are the first algorithms to maintain \emph{any} nontrivial competitiveness when given inaccurate processing times in the weighted flow time setting.

However, in the model which we consider, each such $\dstr$-robust algorithm is tailored to a specific value of $\dstr$.
One could desire an algorithm which works for \emph{any} $\dstr$, while maintaining a competitive ratio which is a function of this $\dstr$.
In many other problems, going from the first model to the second model can be very easily done through a doubling procedure on the parameter $\dstr$; however, applying such a doubling scheme to this scheduling problem does not seem immediate, and we leave its development to future work.

The dependencies of our algorithms on the maximum distortion are $O(\dstr^2)$ and $\tilde{O}(\dstr^3)$.
Improving the dependence on $\dstr$ in the competitiveness is also an interesting problem; we conjecture that a linear dependence on $\dstr$ is the optimal one.

Another problem is determining the exact competitive ratio for $\dstr$-robust algorithm as $\dstr$ approaches $1$ (i.e. small distortion).
This problem is most salient for unweighted flow time, in which we've shown a lower bound of $2$-competitiveness, and our upper bound result only yields $4$-competitiveness.

\bibliographystyle{plain}
\bibliography{bibfile}

\begin{thebibliography}{10}

\bibitem{DBLP:conf/focs/AzarT18}
Yossi Azar and Noam Touitou.
\newblock Improved online algorithm for weighted flow time.
\newblock In Mikkel Thorup, editor, {\em 59th {IEEE} Annual Symposium on
  Foundations of Computer Science, {FOCS} 2018, Paris, France, October 7-9,
  2018}, pages 427--437. {IEEE} Computer Society, 2018.

\bibitem{DBLP:conf/soda/BansalC09}
Nikhil Bansal and Ho{-}Leung Chan.
\newblock Weighted flow time does not admit o(1)-competitive algorithms.
\newblock In {\em Proceedings of the Twentieth Annual {ACM-SIAM} Symposium on
  Discrete Algorithms, {SODA} 2009, New York, NY, USA, January 4-6, 2009},
  pages 1238--1244, 2009.

\bibitem{DBLP:journals/talg/BansalD07}
Nikhil Bansal and Kedar Dhamdhere.
\newblock Minimizing weighted flow time.
\newblock {\em {ACM} Trans. Algorithms}, 3(4):39, 2007.
\newblock also in SODA 2003: 508-516.

\bibitem{DBLP:journals/algorithmica/BansalDKS04}
Nikhil Bansal, Kedar Dhamdhere, Jochen K{\"{o}}nemann, and Amitabh Sinha.
\newblock Non-clairvoyant scheduling for minimizing mean slowdown.
\newblock {\em Algorithmica}, 40(4):305--318, 2004.

\bibitem{Becchetti2001}
Luca Becchetti and Stefano Leonardi.
\newblock Non-clairvoyant scheduling to minimize the average flow time on
  single and parallel machines.
\newblock In {\em Proceedings of the thirty-third annual ACM symposium on
  Theory of computing}, pages 94--103, 2001.

\bibitem{DBLP:journals/tcs/BecchettiLMP04}
Luca Becchetti, Stefano Leonardi, Alberto Marchetti{-}Spaccamela, and Kirk
  Pruhs.
\newblock Semi-clairvoyant scheduling.
\newblock {\em Theor. Comput. Sci.}, 324(2-3):325--335, 2004.

\bibitem{DBLP:conf/soda/BenderMR02}
Michael~A. Bender, S.~Muthukrishnan, and Rajmohan Rajaraman.
\newblock Improved algorithms for stretch scheduling.
\newblock In David Eppstein, editor, {\em Proceedings of the Thirteenth Annual
  {ACM-SIAM} Symposium on Discrete Algorithms, January 6-8, 2002, San
  Francisco, CA, {USA}}, pages 762--771. {ACM/SIAM}, 2002.

\bibitem{DBLP:conf/stoc/ChekuriKZ01}
Chandra Chekuri, Sanjeev Khanna, and An~Zhu.
\newblock Algorithms for minimizing weighted flow time.
\newblock In {\em Proceedings on 33rd Annual {ACM} Symposium on Theory of
  Computing, July 6-8, 2001, Heraklion, Crete, Greece}, pages 84--93, 2001.

\bibitem{Im2017}
Sungjin Im, Janardhan Kulkarni, and Kamesh Munagala.
\newblock Competitive algorithms from competitive equilibria: Non-clairvoyant
  scheduling under polyhedral constraints.
\newblock {\em Journal of the ACM (JACM)}, 65(1):1--33, 2017.

\bibitem{Im2014}
Sungjin Im, Janardhan Kulkarni, Kamesh Munagala, and Kirk Pruhs.
\newblock Selfishmigrate: A scalable algorithm for non-clairvoyantly scheduling
  heterogeneous processors.
\newblock In {\em 2014 IEEE 55th Annual Symposium on Foundations of Computer
  Science}, pages 531--540. IEEE, 2014.

\bibitem{DBLP:journals/ipl/KimC03a}
Jae{-}Hoon Kim and Kyung{-}Yong Chwa.
\newblock Non-clairvoyant scheduling for weighted flow time.
\newblock {\em Inf. Process. Lett.}, 87(1):31--37, 2003.

\bibitem{LattanziLMV20}
Silvio Lattanzi, Thomas Lavastida, Benjamin Moseley, and Sergei Vassilvitskii.
\newblock Online scheduling via learned weights.
\newblock In {\em Proceedings of the Thirty-First Annual {ACM-SIAM} Symposium
  on Discrete Algorithms, {SODA} 2020, New Orleans, LA, USA, January 5 - 8,
  2020.}, 2020.

\bibitem{DBLP:conf/icml/LykourisV18}
Thodoris Lykouris and Sergei Vassilvitskii.
\newblock Competitive caching with machine learned advice.
\newblock In {\em Proceedings of the 35th International Conference on Machine
  Learning, {ICML} 2018, Stockholmsm{\"{a}}ssan, Stockholm, Sweden, July 10-15,
  2018}, pages 3302--3311, 2018.

\bibitem{DBLP:conf/nips/MedinaV17}
Andres~Mu{\~{n}}oz Medina and Sergei Vassilvitskii.
\newblock Revenue optimization with approximate bid predictions.
\newblock In {\em Advances in Neural Information Processing Systems 30: Annual
  Conference on Neural Information Processing Systems 2017, 4-9 December 2017,
  Long Beach, CA, {USA}}, pages 1858--1866, 2017.

\bibitem{DBLP:journals/corr/abs-2006-15463}
Michael Mitzenmacher.
\newblock Queues with small advice.
\newblock {\em CoRR}, abs/2006.15463, 2020.

\bibitem{mitzenmacher:LIPIcs:2020:11699}
Michael Mitzenmacher.
\newblock {Scheduling with Predictions and the Price of Misprediction}.
\newblock In Thomas Vidick, editor, {\em 11th Innovations in Theoretical
  Computer Science Conference (ITCS 2020)}, volume 151 of {\em Leibniz
  International Proceedings in Informatics (LIPIcs)}, pages 14:1--14:18,
  Dagstuhl, Germany, 2020. Schloss Dagstuhl--Leibniz-Zentrum fuer Informatik.

\bibitem{mitzenmacher2020algorithms}
Michael Mitzenmacher and Sergei Vassilvitskii.
\newblock Algorithms with predictions.
\newblock {\em arXiv preprint arXiv:2006.09123}, 2020.

\bibitem{Motwani1994}
Rajeev Motwani, Steven Phillips, and Eric Torng.
\newblock Nonclairvoyant scheduling.
\newblock {\em Theoretical computer science}, 130(1):17--47, 1994.

\bibitem{purohit2018improving}
Manish Purohit, Zoya Svitkina, and Ravi Kumar.
\newblock Improving online algorithms via ml predictions.
\newblock In {\em Advances in Neural Information Processing Systems}, pages
  9661--9670, 2018.

\bibitem{NAV:NAV3800030106}
Wayne~E. Smith.
\newblock Various optimizers for single-stage production.
\newblock {\em Naval Research Logistics Quarterly}, 3(1-2):59--66, 1956.

\bibitem{DBLP:journals/mst/ZhuCL15}
Jianqiao Zhu, Ho{-}Leung Chan, and Tak~Wah Lam.
\newblock Non-clairvoyant weighted flow time scheduling on different
  multi-processor models.
\newblock {\em Theory Comput. Syst.}, 56(1):82--95, 2015.

\end{thebibliography}

\newpage

\appendix

\section{Additional proofs from \cref{sec:CLP}}
\label{sec:CLP_Proofs}
%! Author = noamt
%! Date = 02/11/2020

\begin{proof}[Proof of \cref{lem:CLP_VolumeConversion}]
    Observe that
    $
        \sum_{j=j_1}^{j_2} \Wtp{\le i}{=j}{t}
        = \sum_{j=j_1}^{j_2} \WtpP{\le i}{=j}{t} + \sum_{j=j_1}^{j_2} \WtpF{\le i}{=j}{t}
    $,
    where the superscripts $\textrm{p}$ and $\textrm{f}$ refer to partial and full jobs, respectively.

    The algorithm maintains that in every weight class + \id class combination, there exists at most one partial job of those classes.
    Thus, for every \id class $j$, a geometric sum yields that  $\WtpP{\le i}{=j}{t} \le 2w^j$, where $w^j$ is the maximum weight of a partial job in $\reqp{\le i}{=j}{t}$.
    Since $w^j \le \qw{i}$, we thus have that $\WtpP{\le i}{=j}{t} \le 2\qw{i}$, and thus $\sum_{j=j_1}^{j_2} \WtpP{\le i}{=j}{t} \le 2(j_2-j_1+1) \qw{i} $.
    In addition, each weight class can exist in at most $\nclass$ different \id classes, which yields

    \[
        \sum_{j=j_1}^{j_2} \WtpP{\le i}{=j}{t} \le \sum_{j=j_1}^{j_2} 2w^j \le 2\qw{i}\nclass +  2\qw{i-1}\nclass+\cdots \le 4 \qw{i}\nclass
    \]

    Combining, we have $\sum_{j=j_1}^{j_2} \WtpP{\le i}{=j}{t} \le \min\pc{2(j_2-j_1+1), 4\nclass}\cdot \qw{i}$.

    It remains to bound the weight of the full jobs, namely $\sum_{j=j_1}^{j_2} \WtpF{\le i}{=j}{t}$.
    \begin{align*}
        \sum_{j=j_1}^{j_2} \WtpF{\le i}{=j}{t}
        &\le \sum_{j=j_1}^{j_2} \frac{\Vp{\le i}{=j}{t}}{2^{j}}\\
        &\le \sum_{j=j_1}^{j_2} \frac{\Delta\Vp{\le i}{=j}{t} + \Vp*{\le i}{=j}{t}}{2^{j}} \\
        &\le 2\dstr\sum_{j=j_1}^{j_2}\Wtp*{\le i}{=j}{t} + \sum_{j=j_1}^{j_2} \frac{\Delta\Vp{\le i}{\le j}{t} - \Delta\Vp{\le i}{\le j-1}{t}}{2^{j}}\\
        &\le 2\dstr\sum_{j=j_1}^{j_2}\Wtp*{\le i}{=j}{t} +\frac{\Delta\Vp{\le i}{\le j_2}{t}}{2^{j_2}} +\sum_{j=j_1}^{j_2-1} \frac{\Delta\Vp{\le i}{\le j}{t}}{2^{j+1}} -\frac{\Delta\Vp{\le i}{\le j_1-1}{t}}{2^{j_1}} \\
        &\le 2\dstr\sum_{j=j_1}^{j_2}\Wtp*{\le i}{=j}{t} +\frac{\Delta\Vp{\le i}{\le j_2}{t}}{2^{j_2}} +\sum_{j=j_1}^{j_2-1} \frac{\Delta\Vp{\le i}{\le j}{t}}{2^{j+1}} + \dstr\Wtp*{\le i}{\le j_1-1}{t} \\
        &\le 2\dstr\sum_{j=j_1}^{j_2}\Wtp*{\le i}{=j}{t} +\sum_{j=j_1}^{j_2} \max\pc{0, \frac{\Delta\Vp{\le i}{\le j}{t}}{2^{j}}} + \dstr\Wtp*{\le i}{\le j_1-1}{t} \\
        &\le 2\dstr\Wtp*{\le i}{\le j_2}{t} +\sum_{j=j_1}^{j_2} \max\pc{0, \frac{\Delta\Vp{\le i}{\le j}{t}}{2^{j}}}\\
        &\le 2\dstr\Wt*{t} +\sum_{j=j_1}^{j_2} \max\pc{0, \frac{\Delta\Vp{\le i}{\le j}{t}}{2^{j}}}
    \end{align*}
    where the first inequality is due to the fact that a full job $q$ of \id class $j$ has at least $\wt{q} \cdot 2^j$ remaining volume, the second and fifth inequalities are due to the fact that every job $q$ of \id class at most $j$ in the optimal solution has at most $2^{j+1}\dstr \wt{q}$ volume.
\end{proof}

\begin{proof}[Proof of \cref{prop:CLP_DeltaAndWeightBounds}]
    First, we show that for every $j$, the first claim implies the second claim.
    Then, we prove the first claim by induction on the \id classes, in decreasing order.
    Combining these two proofs, the proposition holds.

    \textbf{The first claim implies the second claim.}
    Assume that $\clw{\lastt{j}} \le \owc$ for some $j$.
    Since the optimal solution has no jobs of weight more than $\qw{\owc}$ alive at time $t$, and since the algorithm did have any such jobs alive at $\lastt{j}$, it must be that the algorithm worked on such jobs during $\IR{\lastt{j}}{t}$ at least as much as the optimal solution.
    In addition, the algorithm did not work on any jobs of weight at most $\qw{\owc}$ and \id class more than $j$ during this interval.
    Thus, it holds that
    \[
        \Delta \Vp{\le \owc}{\le j}{t} \le \Delta \Vp{\le \owc}{\le j}{\lastt{j}} \le   \Vp{\le \owc}{\le j}{\lastt{j}}
    \]

    Thus, it's enough to bound $\Vp{\le \owc}{\le j}{\lastt{j}}$.
    Observe that since the algorithm chose to process a job of \id class more than $j$, the total weight in each \id class $j' \le j$ is at most $\clw{\lastt{j}}$.
    Thus, the total volume at $\lastt{j}$ in \id class $j'$ is at most $\dstr\cdot 2^{j'+1}\qw{\clw{\lastt{j}}}$.
    Summing over all $j' \le j$ yields a geometric sum, which is at most $2^{j+2}\qw{\clw{\lastt{j}}}$, showing that the first claim indeed implies the second claim.

    \textbf{The first claim holds.} We prove the first claim of the proposition by induction on the \id classes, in decreasing order.
    For the base case of the largest \id class $j_{\max}$, we have $\lastt{j_{\max}}=0$ and thus the claim holds.

    Now, let $j$ be any \id class and suppose that the claim holds for all \id classes $j+1$ and above.
    Assume for contradiction that $\clw{\lastt{j}} > \owc$.
    Let $q$ be the job chosen for processing at $\lastt{j}$, and define  $h:= \edc{q}$.
    From the definition of $\lastt{j}$, we know that $\wc{q} \le \owc$ and that $h>j$.

    Since the algorithm always processes the maximum-weight job in the chosen \id class, we know that
    \[
        \Wtp{\le \wc{q}}{=h}{\lastt{j}} = \Wtp{\top}{=h}{\lastt{j}} \ge \qw{\clw{\lastt{j}}}
    \]
    where the inequality is due to the choice of the algorithm.

    Now, since during $\IR{\lastt{j}}{t}$ the algorithm did not process any jobs of weight class at most $\owc$ and \id class more than $j$, it must be that
    \[
        \Wtp{\le \owc}{=h}{t} \ge \Wtp{\le \owc}{=h}{\lastt{j}} = \Wtp{\le \wc{q}}{=h}{t} \ge \qw{\clw{\lastt{j}}} - \qw{\owc}
    \]
    where the minus term is due to the fact that the algorithm could have completed $q$ exactly at $\lastt{j}$.

    Now, using \cref{lem:CLP_VolumeConversion}, we have
    \begin{align*}
        \Wtp{\le \owc}{=h}{t}
        &\le 2\cdot \qw{\owc} + 2\dstr\Wt*{t} + 2\max\pc{0, \frac{\Delta \Vp{\le \owc}{\le h}{t}[]}{2^h}} \\
        &\le 2\cdot \qw{\owc} + 2\dstr\Wt*{t} + 8\dstr\qw{\clw{\lastt{h}}}   \\
        &\le 2\cdot \qw{\owc} + 2\dstr\Wt*{t} + 8\dstr\qw{\owc}   \\
        &= (8\dstr+ 2)\qw{\owc} + 2\dstr\Wt*{t}
    \end{align*}
    where the second inequality uses the induction hypothesis that the first claim holds for $h$, combined with the previous proof that the first claim implies the second claim.
    The third inequality also uses the induction hypothesis.

    Combining, we have $2\dstr\Wt*{t} \ge \qw{\clw{\lastt{j}}} - (8\dstr +3)\qw{\owc} \ge \qw{\owc+1} - (8\dstr +3)\qw{\owc} = \frac{\bround}{2}\cdot \qw{\owc}$.
    Thus, we get that $\Wt*{t} \ge \frac{\bround}{4\dstr} \cdot \qw{\owc}$, in contradiction to the definition of $\owc$.
    This proves the first claim of the proposition.
\end{proof}

\section{Additional proofs from \cref{sec:LogW}}
\label{sec:LogW_Proofs}
\subsection{Proof of \cref{thm:UW_Competitiveness}}
\label{subsec:LogW_Proofs_UW}
In the following analysis, we prove \cref{thm:UW_Competitiveness}.

\begin{defn}
    For every time $t$, we define $Q_F(t)$ and $Q_P(t)$ to be the values of the variables $Q_F$ and $Q_P$ at time $t$.
    We also define $Q(t)$ to be the set of jobs pending in the algorithm at time $t$, where it holds that $Q(t)=Q_F(t)\cup Q_P(t)$.

    In addition, we use $\delta(t),\delta_F(t), \delta_P(t)$ to denote the cardinalities of $Q(t),Q_F(t),Q_P(t)$ respectively.

    We also use $\p[t]{F}$ and $\p[t]{P}$ to denote the values of $\p{F}$ and $\p{P}$ at time $t$, respectively.
\end{defn}

\begin{defn}
    For every time $t$ and job $q\in Q_F(t)$, define $\base{q}{t} =\pc{ q' \in Q_F(t) \middle| \p{F}(q') \le \p{F}(q)}$.
    For ease of notation, we denote $\wts{\base{q}{t}}$ by $\wbase{q}{t}$.
    We also similarly define $\base{q}{t}$ and $\wbase{q}{t}$ for $q\in Q_P(t)$ according to $\p{P}$.
\end{defn}

\begin{defn}[volume of job covered by bar]
    For any $\xbar{x}$ and time $t$, and for any job $q\in Q(t)$, we define
    \[
        \cont{q}{x}{t} =
        \begin{cases}
            \rpt{q}{t} & x\ge \wbase{q}{t}\\
            0 & \text{otherwise}
        \end{cases}
    \]
    called the volume of $q$ covered by $x$ at $t$.
\end{defn}

\begin{defn}[volume covered by bar]
    We define:
    \begin{enumerate}
        \item 	The covered volume of $x$ at $t$, which is $\cov{x}{t} = \sum_{q\in Q(t)} \cont{q}{x}{t}$.

        \item 	$\cov[F]{x}{t} = \sum_{q\in Q_F(t)} \cont{q}{x}{t}$.

        \item 	$\cov[P]{x}{t} = \sum_{q\in Q_P(t)} \cont{q}{x}{t}$.
    \end{enumerate}
    Observe that $\cov{x}{t} = \cov[F]{x}{t} + \cov[P]{x}{t}$
\end{defn}

\begin{figure}
    \begin{center}
        \includegraphics[height=0.25\textheight]{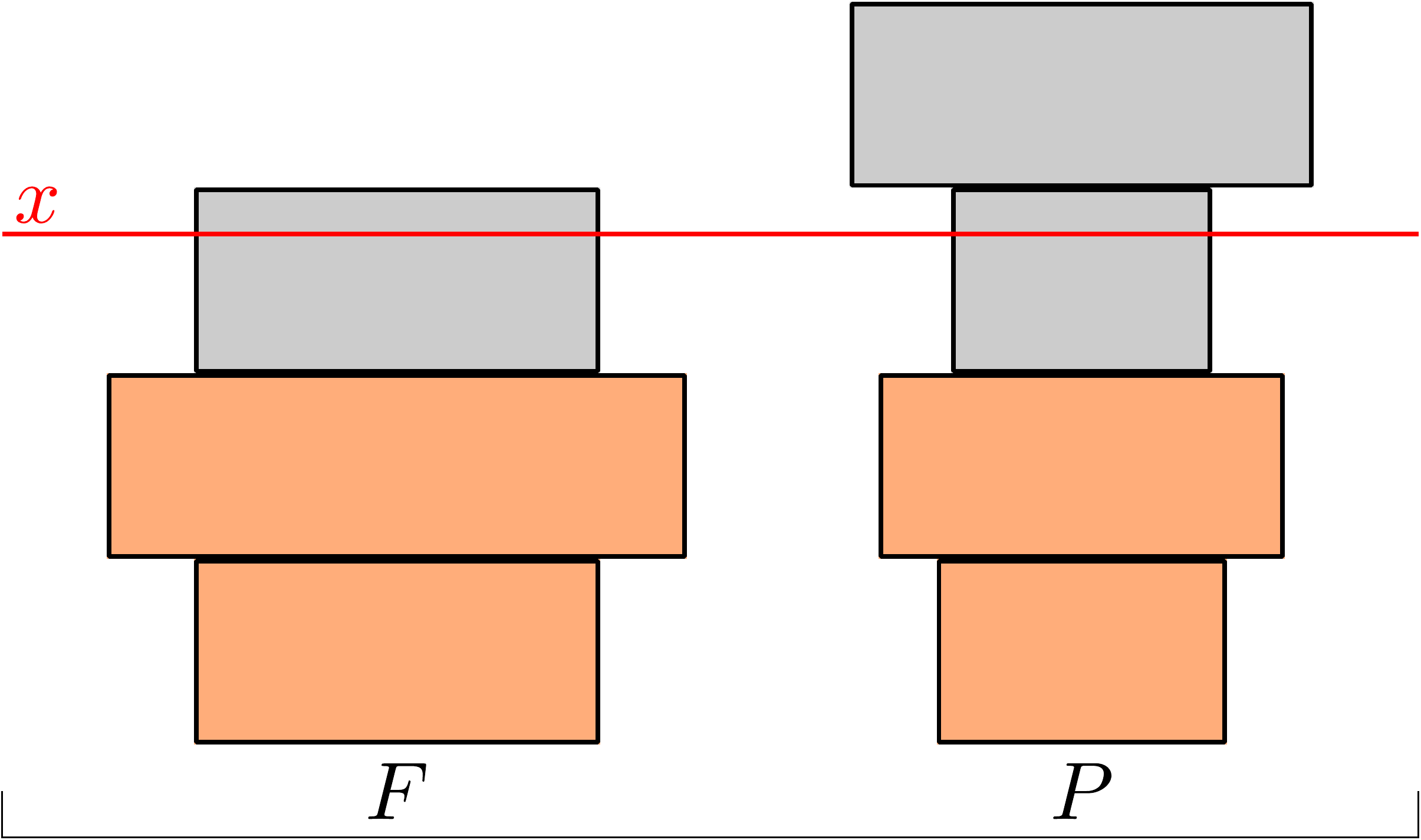}
    \end{center}
    This figure shows the state of the algorithm at some time $t$, and visualizes the volume covered by a bar $\xbar{x}$ at $t$.
    Here, for each job $q$ we have that $\cont{q}{x}{t}$ is the orange-colored area of the job (which is either $0$ or $\rpt{q}{t}$).
    Thus, the total covered volume $\cov{x}{t}$ is the total orange-colored area in the figure.
    \caption{Volume Covered by Bar}
\end{figure}

In the following analysis we consider various properties of the functions of our algorithm.
When considering a function call at time $t$, we use $t^-$ to denote the time immediately before the function call, and use $t$ to denote the time immediately after the function call.

\begin{prop}
    \label{prop:UW_JobMoveRetainsVolume}
    Let $t$ be any time in which $\UponHeavyF$ is called.
    Then for every $\xbar{x}$ we have that
    \[ \cov{x}{t} = \cov{x}{t^-} \]
\end{prop}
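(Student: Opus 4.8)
The plan is to show that the single job move carried out by $\UponHeavyF$ leaves every job's individual contribution $\cont{q'}{x}{t}$ unchanged, so that summing over $Q(t)$ yields $\cov{x}{t}=\cov{x}{t^-}$. First I would note that $\UponHeavyF$ neither releases nor completes a job, hence $Q(t)=Q(t^-)$ and $\rpt{q'}{t}=\rpt{q'}{t^-}$ for every pending job $q'$; therefore it suffices to verify that $\wbase{q'}{t}=\wbase{q'}{t^-}$ for each $q'$, which makes the threshold test $x\ge\wbase{q'}{\cdot}$ agree before and after the call.

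Next I would split according to where $q'$ resides. If $q'\in Q_F(t^-)\setminus\pc{q}$, then since the moved job $q$ had the top priority $\delta_F$ in $F$ we have $q\notin\base{q'}{t^-}$; deleting $q$ does not touch the priorities of the remaining $F$-jobs (they still occupy $1,\dots,\delta_F(t)$), so $\base{q'}{t}=\base{q'}{t^-}$. If $q'\in Q_P(t^-)$, then $q$ is inserted into $P$ with the new top priority $\delta_P(t)=\delta_P(t^-)+1$, so $q\notin\base{q'}{t}$, while the priorities of the jobs already in $P$ are unchanged, so again $\base{q'}{t}=\base{q'}{t^-}$. In both cases $\wbase{q'}{\cdot}$ is unchanged and $\cont{q'}{x}{t}=\cont{q'}{x}{t^-}$.

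The only job whose base set genuinely changes is $q$ itself: before the call $\base{q}{t^-}=Q_F(t^-)$ (as $q$ is top in $F$), so $\wbase{q}{t^-}=\wts{Q_F(t^-)}$, whereas after the call $\base{q}{t}=Q_P(t)=Q_P(t^-)\cup\pc{q}$, so $\wbase{q}{t}=\wts{Q_P(t^-)}+\wt{q}$. To reconcile these I would invoke the invariant that a call of $\UponHeavyF$ occurs precisely when $\delta_F=\delta_P+1$; this follows by a short induction over the algorithm's events, since $\UponJobRelease$ increments $\delta_F$ by one, a completion decrements $\delta_P$ by one, and $\UponHeavyF$ fires as soon as $\delta_F>\delta_P$, so that $\delta_F\le\delta_P$ holds at all other times. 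In the unweighted setting all weights are $1$, so $\wts{Q_F(t^-)}=\delta_F(t^-)=\delta_P(t^-)+1=\wts{Q_P(t^-)}+\wt{q}$, giving $\wbase{q}{t^-}=\wbase{q}{t}$ and hence $\cont{q}{x}{t}=\cont{q}{x}{t^-}$.

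I expect essentially the only obstacle to be this last case: one must carefully establish the relation $\delta_F=\delta_P+1$ at a call of $\UponHeavyF$, because without it the covering threshold for the moved job could shift and the claimed identity would fail. The remaining cases amount to the routine observation that removing the top element of $F$ and adding a new top element of $P$ leaves every other job's priority, and hence every other base set, untouched.
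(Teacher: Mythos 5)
Your proposal is correct and follows essentially the same argument as the paper: all jobs other than the moved job $q$ keep their base sets, and for $q$ itself the identity $\wbase{q}{t^-}=\delta_F(t^-)=\delta_P(t^-)+1=\wbase{q}{t}$ is justified by exactly the invariant the paper uses, namely that $\UponHeavyF$ fires precisely when $\delta_F(t^-)=\delta_P(t^-)+1$ (an earlier call would have corrected any larger imbalance). Your extra detail on the event-by-event induction establishing this invariant only makes explicit what the paper states in one parenthetical remark.
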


\begin{proof}
    Let $q$ be the job moved from $F$ to $P$.
    It holds that $\p[t^-]{F}(q) = \delta_{F}(t^-)$ and that $\p[t]{P}(q)=\delta_{P}(t)$.
    Thus, we have that for every job $q'\in Q(t)\backslash \{q\}$ it holds that $\wbase{q'}{t} = \wbase{q'}{t^-}$, and thus an identical amount of the volume of $q'$ is covered by $x$ at $t^-$ and at $t$.

    As for the volume of $q$, it holds that $\wbase{q}{t^-} = \delta_F(t^-)$.
    Now observe that $\delta_F(t^-) = \delta_P(t^-)+1$, since $\UponHeavyF$ was called (it cannot be that $\delta_F(t^-) > \delta_P(t^-)+1$, since an earlier call to $\UponHeavyF$ would fix this imbalance).
    Thus, after moving $q$ we have that $\delta_P(t) = \delta_F(t^-)$.
    We therefore have that
    \[ \wbase{q}{t} = \delta_P(t)=\delta_F(t^-) = \wbase{q}{t^-} \]

    Overall, we have that $\wbase{q'}{t^-} = \wbase{q'}{t}$ for every $q' \in Q(t)$, and thus $\cov{x}{t} = \cov{x}{t^-}$, as required.
\end{proof}

\begin{prop}
    \label{prop:UW_IrrelevantArrival}
    Consider a call to $\UponJobRelease{q}$ at time $t$.
    For every $\xbar{x}$, it holds that
    \[ \cov{x}{t} \ge \cov{x}{t^-} \]
\end{prop}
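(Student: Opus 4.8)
The plan is to split $\cov{x}{t}=\cov[F]{x}{t}+\cov[P]{x}{t}$ and argue the two bins separately. The call $\UponJobRelease{q}$ modifies only $Q_F$ and $\p{F}$ — it changes neither $Q_P$, nor $\p{P}$, nor any remaining volume — so $\cov[P]{x}{t}=\cov[P]{x}{t^-}$ at once, and it suffices to show $\cov[F]{x}{t}\ge\cov[F]{x}{t^-}$. In the unweighted setting $\wbase{q'}{\cdot}=\p{F}(q')$ for every $q'\in Q_F$ (this base just counts the jobs of priority at most $\p{F}(q')$), so $\cov[F]{x}{t}$ is exactly the total remaining volume of the $\floor{x}$ lowest-priority jobs of $F$; hence it is enough to prove, for every integer $k\ge 0$, that the total volume of the $\min\pc{k,\delta_F(t)}$ lowest-priority jobs of $F$ does not decrease across the call. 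I would use throughout that jobs in $F$ are never processed, so each keeps its volume for as long as it sits in $F$, and the only new volume introduced is $\rpt{q}{t}=\pt{q}>0$.

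Next I would spell out the permutation of priorities that the call induces on $F$. If no rotation fires, $q$ takes the top slot and everything else is unchanged, so for every $k<\delta_F(t)$ the set of the $k$ lowest-priority jobs is literally unchanged, while for $k\ge\delta_F(t)$ it merely gains the positive-volume job $q$ — the claim is then immediate. If a rotation fires, let $q_1,\dots,q_m$ be the violations of $q$ listed by decreasing priority and let $p_i=\p{F}(q_i)$ be their slots right after $q$ is inserted (so $p_1>\cdots>p_m$); recall the rotation cyclically reassigns the slots $\pc{p_1,\dots,p_m,\delta_F(t)}$, sending $q$ down to slot $p_m$, each $q_i$ with $i\ge 2$ up to slot $p_{i-1}$, $q_1$ up to the top slot, and fixing every other slot. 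From this I would read off how the set $L_k$ of jobs of priority at most $k$ changes for an integer $k<\delta_F(t)$: if $k<p_m$ then $L_k$ is unchanged; otherwise, if $p_i$ is the largest violation slot with $p_i\le k$, then $L_k$ loses exactly the job $q_i$ and gains exactly the job $q$, all other members being common before and after.

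Finally I would plug in the robustness bound. In that last case the change in the bottom-$k$ volume equals $\rpt{q}{t}-\rpt{q_i}{t^-}=\pt{q}-\pt{q_i}$, and since $(q,q_i)$ is a violation we have $\dstr\,\ept{q_i}\le\ept{q}$, so $\pt{q_i}<\dstr\,\ept{q_i}\le\ept{q}\le\pt{q}$ by the hypothesis $\ept{\cdot}\le\pt{\cdot}<\dstr\,\ept{\cdot}$; hence the change is strictly positive. For $k\ge\delta_F(t)$ the change is just the insertion of $q$, again positive. Combining with the unchanged $P$-contribution gives $\cov{x}{t}\ge\cov{x}{t^-}$. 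I expect the middle step to be the crux: one must track carefully where the cyclic rotation sends each slot in order to identify the unique job evicted from the bottom-$k$ set and to confirm it is one of $q$'s violations (so that the robustness inequality applies) — once this bookkeeping is pinned down, the inequality itself is a one-liner.
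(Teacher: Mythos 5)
Your proof is correct and follows essentially the same route as the paper's: you isolate the effect of the rotation on the covered (bottom-$k$) set, identify the single violator $q_i$ that is pushed out of the covered region while $q$ enters it, and conclude via $\pt{q_i} < \dstr\ept{q_i} \le \ept{q} \le \pt{q}$, exactly as in the paper's Claims 1 and 2. The only differences are presentational (explicitly splitting off the untouched bin $P$ and phrasing covered volume as bottom-$k$ volume rather than passing through the intermediate state $t_0$).
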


\begin{proof}
    We denote by $t_0$ the ``time'' immediately after adding $q$ to $\p{F}$ -- that is, the state of the algorithm before \cref{line:UW_Rotate} in $\UponJobRelease(q)$.

%	before the call to $\SwapDown$ -- i.e. the state of the algorithm at this point. Let $m$ be the number of iterations of the loop in $\SwapDown$, i.e. the number of swaps (it is possible that $m=0$). Denote by $t_i$ the ``time'' after the $i$'th swap, such that $t_m = t$.

    \textbf{Claim 1: $\cov{x}{t_0} \ge \cov{x}{t^-}$.} Since $\p[t_0]{F}(q) = \delta_F(t_0)$, we have that $\p[t_0]{F}(q) > \p[t_0]{F}(q')$ for every $q'\in Q_F(t^-)$.
    For every such $q'$ we thus have $\wbase{q'}{t_0} = \wbase{q'}{t^-}$, and therefore $\cov{x}{t_0} \ge \cov{x}{t^-}$.

    \textbf{Claim 2: $\cov{x}{t} \ge \cov{x}{t_0}$.} If no violations exist at $t_0$, and thus no rotation is performed, then the claim holds.
    Otherwise, the cyclic permutation $\pr{\p{F}(q_m), \p{F}(q_{m-1}),\cdots,\p{F}(q_{1}),\p{F}(q)}$ is applied, where the jobs $q_1,\cdots,q_m$ are the jobs involved in violations with $q$ at $t_0$, such that $\p[t_0]{F}(q_i) > \p[t_0]{F}(q_{i-1})$ for every $i>0$.

    We write $q_0 = q$, and observe the set of jobs $Q' = \pc{q_0,q_1,\cdots q_m}$ involved in the cyclic permutation.
    Since the only change is in priorities between $t$ and $t_0$ is in $Q'$, it remains to observe the change in the volume of $Q'$ covered by $x$.
    If $\p[t_0]{F}(q') > x$ for every $q'\in Q'$, the volume of $Q'$ covered by $x$ does not change.
    The same holds if $\p[t_0]{F}(q') \le x$ for every $q' \in Q'$.

    It remains to observe the case in which $x$ covers some of $Q'$ but not all.
    In this case, let $i$ be the smallest index such that $\p[t_0]{F}(q_i) \le x$, such that $i>0$.
    After applying the cyclic permutation, we have that the volume covered by $x$ has decreased by $\pt{q_i}$ (since $\p[t]{F}(q_i) = \p[t_0]{F}(q_{i-1})$), but has increased by $\pt{q_0}$ (since $ \p[t]{F}(q_0) = \p[t_0]{F}(q_m)  $).
    Now, recall that $q_0 = q$, and that $(q,q_i)$ was a violation at $t_0$.
    This implies that
    \[ \pt{q_i} < \dstr\ept{q_i} \le \ept{q} \le \pt{q} \]

    proving the second claim.

    Combining the two claims, we have that
    \[ \cov{x}{t}\ge \cov{x}{t^-} \]
    as required.
\end{proof}
\begin{prop}
    \label{prop:UW_NoViolations}
    Outside of function calls, there are no violations in the algorithm.
\end{prop}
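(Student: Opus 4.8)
The plan is to show that the only function that can create a violation in $F$ is $\UponJobRelease$, and that the rotation it performs eliminates all violations it could have introduced — so no violations survive past the end of a function call. First I would observe which functions can affect the state of $Q_F$ and $\p{F}$: $\Process$ only touches $Q_P$ and $\p{P}$, so it is irrelevant; $\UponHeavyF$ removes the top-priority job of $F$ and moves it to $P$, which can only \emph{remove} jobs from $Q_F$ and restricts $\p{F}$ to the remaining jobs while preserving their relative order — removing a job can never create a violation among the survivors, since the violation predicate only quantifies over pairs still in $Q_F$. Hence only $\UponJobRelease$ can possibly create a violation, and it does so only at the intermediate ``time'' $t_0$ right after the new job $q$ is given priority $\delta_F(t_0)$ (before \cref{line:UW_Rotate}).

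So it suffices to prove: if there were no violations at $t^-$, then after the rotation in $\UponJobRelease(q)$ there are again no violations. I would split the analysis by which jobs are involved. Let $Q' = \{q_0 = q, q_1, \dots, q_m\}$ be the jobs that were in a violation with $q$ at $t_0$, enumerated by decreasing priority, and note that after the rotation $\sigma$ the new priorities satisfy $\p[t]{F}(q_0) = \p[t_0]{F}(q_m) < \p[t_0]{F}(q_{m-1}) = \p[t]{F}(q_m) < \cdots$, i.e.\ within $Q'$ the jobs are now ordered by \emph{increasing} estimated processing time (since $\dstr\ept{q_i} \le \ept{q}$ for each $i \ge 1$ means $q$ has the largest estimate among $Q'$, and the $q_i$ retain their mutual order which was violation-free among themselves at $t^-$, hence already non-decreasing in estimate up to factor $\dstr$). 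The cases to check are: (i) a pair $(q_a, q_b)$ both outside $Q'$ — their priorities and estimates are unchanged from $t^-$, where there was no violation; (ii) a pair both inside $Q'$ — handled by the reordering just described, together with the observation that $q_1, \dots, q_m$ were pairwise violation-free at $t^-$; (iii) the mixed case, one job $q_i \in Q'$ and one job $q' \notin Q'$, which is the main obstacle and needs the most care.

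For case (iii) I would argue as follows. A job $q' \notin Q'$ either had higher priority than $q$ at $t_0$ (impossible, since $q$ got the max priority) or had lower priority than $q$ but was \emph{not} in violation with $q$; in the latter case $\dstr \ept{q'} > \ept{q}$, i.e.\ $\ept{q'}$ is ``close to or above'' $\ept{q}$. Now the rotation moves each $q_i$ to a \emph{lower} priority slot (namely $\p[t_0]{F}(q_{i-1})$, which was the old slot of the next job up), and moves $q$ down to the bottom of $Q'$. So for a candidate violation $(q', q_i)$ with $q'$ having higher priority and a much smaller estimate: since at $t_0$ (equivalently, at $t^-$ for jobs $\ne q$) there was no violation between $q'$ and $q_i$, and the rotation only \emph{decreases} the priorities of the $q_i$'s relative to jobs outside $Q'$ while increasing $q$'s position relative to nothing outside $Q'$, any such pair remains violation-free; symmetrically, for $(q_i, q')$ with $q_i$ on top, the estimate of $q_i$ is at most $\ept{q}$ (up to $\dstr$) and $q'$'s estimate is within $\dstr$ of $\ept{q}$, and crucially $q_i$ was below $q$ at $t^-$ while $q'$ was either above or below $q$ without violation — tracking the at-most-one-notch priority shift induced by $\sigma$ shows no new violation is created. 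The one delicate point is the pair $(q, q')$ where $q' \notin Q'$: $q$ is now at the bottom of the rotated block, so for $q$ to be in a violation as the high-priority member is impossible (it has low priority within the block, and below the block nothing changed), and for $q'$ to violate $q$ we would need $\dstr \ept{q} \le \ept{q'}$ with $q'$ above $q$ — but $q'$ above $q$ at $t_0$ is impossible. I expect writing case (iii) cleanly — i.e.\ carefully bookkeeping how $\sigma$ shifts priorities by exactly one slot within $Q'$ and not at all outside — to be the crux; everything else is immediate from ``removing jobs can't create violations'' and ``$\Process$ is irrelevant''.
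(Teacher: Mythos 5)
Your overall skeleton is the same as the paper's: argue that \textsc{Process} is irrelevant and \textsc{UponHeavyF} only removes jobs (so cannot create violations), then induct on events and do a case analysis of pairs after the rotation in \textsc{UponJobRelease}. The easy cases (both jobs outside the rotated block $Q'$, or both among $q_1,\dots,q_m$, or the pair $(q_i,q)$ inside the block) are handled correctly. But the mixed case, which you yourself identify as the crux, is where your bookkeeping is wrong in ways that break the argument. First, the direction of the shift is flipped: since $\p[t_0]{F}(q)>\p[t_0]{F}(q_1)>\cdots>\p[t_0]{F}(q_m)$, the rotation moves each $q_i$ \emph{up} (to the old slot of $q_{i-1}$) and moves $q$ \emph{down} to the old slot of $q_m$ — you state this correctly at the start but reverse it in case (iii). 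Moreover the shift is not ``at most one notch'': between the old slots of $q_{i-1}$ and $q_i$ there can be arbitrarily many jobs of $Q''=Q_F(t_0)\setminus Q'$, so a promoted $q_i$ can leapfrog many unchanged jobs, and $q$ can drop below many of them. Any argument resting on ``the $q_i$ only lose priority relative to jobs outside $Q'$'' or on a one-slot shift therefore does not go through.

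Concretely, the two pairs that must be ruled out are $(q_i,q'')$ with a promoted $q_i$ now above some unchanged $q''\in Q''$, and $(q'',q)$ with $q''$ now above the demoted $q$. For the first, the needed step is to combine the inequalities you only list: $(q,q'')$ was not a violation at $t_0$ while $(q,q_i)$ was, so $\dstr\ept{q''}>\ept{q}\ge\dstr\ept{q_i}>\ept{q_i}$, hence $(q_i,q'')$ cannot be a violation (equivalently, such a violation would force $(q,q'')$ to have been one) — your sketch instead appeals to the false one-notch claim. For the second, you dismiss it by asserting that $q''$ cannot be above $q$; that is true at $t_0$, but the violation must be checked at time $t$, where $q$ occupies $q_m$'s old slot and every $q''$ whose unchanged priority lies inside the block's range \emph{is} above $q$. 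Closing this case requires bringing in $q_m$, exactly as the paper does: if $\p[t]{F}(q'')>\p[t]{F}(q)=\p[t_0]{F}(q_m)$, then $(q'',q_m)$ was not a violation at $t_0$, so $\ept{q''}<\dstr\ept{q_m}\le\ept{q}<\dstr\ept{q}$, and $(q'',q)$ is not a violation. This step is missing from your proposal, so as written the proof has a genuine gap at its crux. (A minor additional slip: after the rotation the block is not ``ordered by increasing estimated processing time'' — $q$, which has the largest estimate, sits at the bottom — though your no-violation conclusion within the block is still correct.)
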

\begin{proof}
    Observe that the only changes to $F$ occur in calls to $\UponJobRelease$ and $\UponHeavyF$.
    We show that if there is no violation before such a function call, then there would be no violation after the function call.

    In the case of $\UponHeavyF$, this trivially holds, since removing a job from $F$ cannot cause a new violation.
    It remains to consider $\UponJobRelease$.

    Suppose that $\UponJobRelease{q}$ is called at time $t$.
    As in the proof of  \cref{prop:UW_IrrelevantArrival}, we denote by $t_0$ the time immediately after the insertion of the job $q$ into $F$, and before the possible rotation of \cref{line:UW_Rotate}.
    By the induction hypothesis, and through the fact that $\p[t_0]{F}(q) = \delta_F(t_0)$, the only violations at $t_0$ are of the form $(q,q')$ for some $q' \in Q_F(t_0)\backslash\{q\}$.
    If there are no such violations then no rotation occurs, and thus the state at time $t$ contains no violations, completing the proof.

    Otherwise, denoting $q_0 = q$, the set $Q' = \{ q_0,q_1,\cdots,q_m \}$ is rotated.
    For ease of notation, we write $Q'' = Q_F(t_0) \backslash Q'$.
    Consider the possible violations at time $t$, after this rotation.

    \begin{enumerate}
        \item \textbf{violations of the form $(q''_1,q''_2)$ for some $q''_1,q''_2 \in Q''$.} Such violations cannot occur in $t$, since they did not exist in $t_0$, and since $\p[t]{F}(q'') = \p[t_0]{F}(q'')$ for every $q'' \in Q''$.

        \item \textbf{violations between $q_i$ and $q''$ for some index $0<i \le m$ and some job $q'' \in Q''$.} Consider that there was no violation between $q_i$ and $q''$ at time $t_0$.
        Since $ \p[t]{F}(q'') = \p[t_0]{F}(q'') $, and since $\p[t]{F}(q_i) \ge  \p[t_0]{F}(q_i) $, it must be that the violation is of the form $(q_i,q'')$.

        Now, observe that $\p[t]{F}(q_i) \le \p[t]{F}(q)$.
        In addition, since $(q,q_i)$ was a violation at $t_0$, it holds that $\ept{q_i} < \dstr\ept{q_i} \le \ept{q}$.
        Thus, if $(q_i,q'')$ is a violation at $t$, it must be that $(q,q'')$ was a violation at $t_0$, in contradiction to $q'' \in Q''$.

        \item \textbf{violations of the form $(q_{i},q_{j})$ for $1\le i,j \le m$.} Observe that there was no violation between $q_i$ and $q_j$ at time $t_0$, and that the order between their priorities did not change between $t_0$ and $t$.
        Thus, there is no violation between them at $t$.

        \item \textbf{violations between $q_i$ and $q$, for some index $i$.}  Consider that since $\p[t]{F}(q) \le \p[t]{F}(q_i)$, the violation must be of the form $(q_i,q)$.
        Now, consider that $(q,q_i)$ was a violation at $t_0$, and thus $\dstr\ept{q} > \ept{q} \ge (\ope[q_i]) \ept{q_i} > \ept{q_i}$.
        Thus, $(q_i,q)$ cannot be a violation at $t$.

        \item \textbf{violations between $q$ and $q''$, for some $q'' \in Q''$.} Since $(q,q'')$ was not a violation at $t_0$, and since $\p[t]{F}(q'') = \p[t_0]{F}(q'')$ and $\p[t]{F}(q) \le \p[t_0]{F}(q)$, we have that $(q,q'')$ is not a violation at $t$.
        It remains to consider the possible violation $(q'',q)$.

        Now, consider that $\p[t]{F}(q) = \p[t_0]{F}(q_m)$.
        In addition, since $(q,q_m)$ was a violation at $t_0$, it holds that $\dstr\ept{q_m} \le \ept{q} \le \dstr\ept{q}$.
        Thus, if $(q'',q)$ is a violation at $t$, it must be that $(q'',q_m)$ was a violation at $t_0$, in contradiction to the fact that all violations at $t_0$ involve $q$.
    \end{enumerate}
    Overall, there are no violations at time $t$.
\end{proof}

For ease of notation, we define $\cbc = \ceil{\dstr^2}$.

\begin{figure*}

    \begin{center}
        \subfloat[][\label{subfig:UW_Relevant1}Time $t^-$.]{
        \includegraphics[height=0.3\textheight]{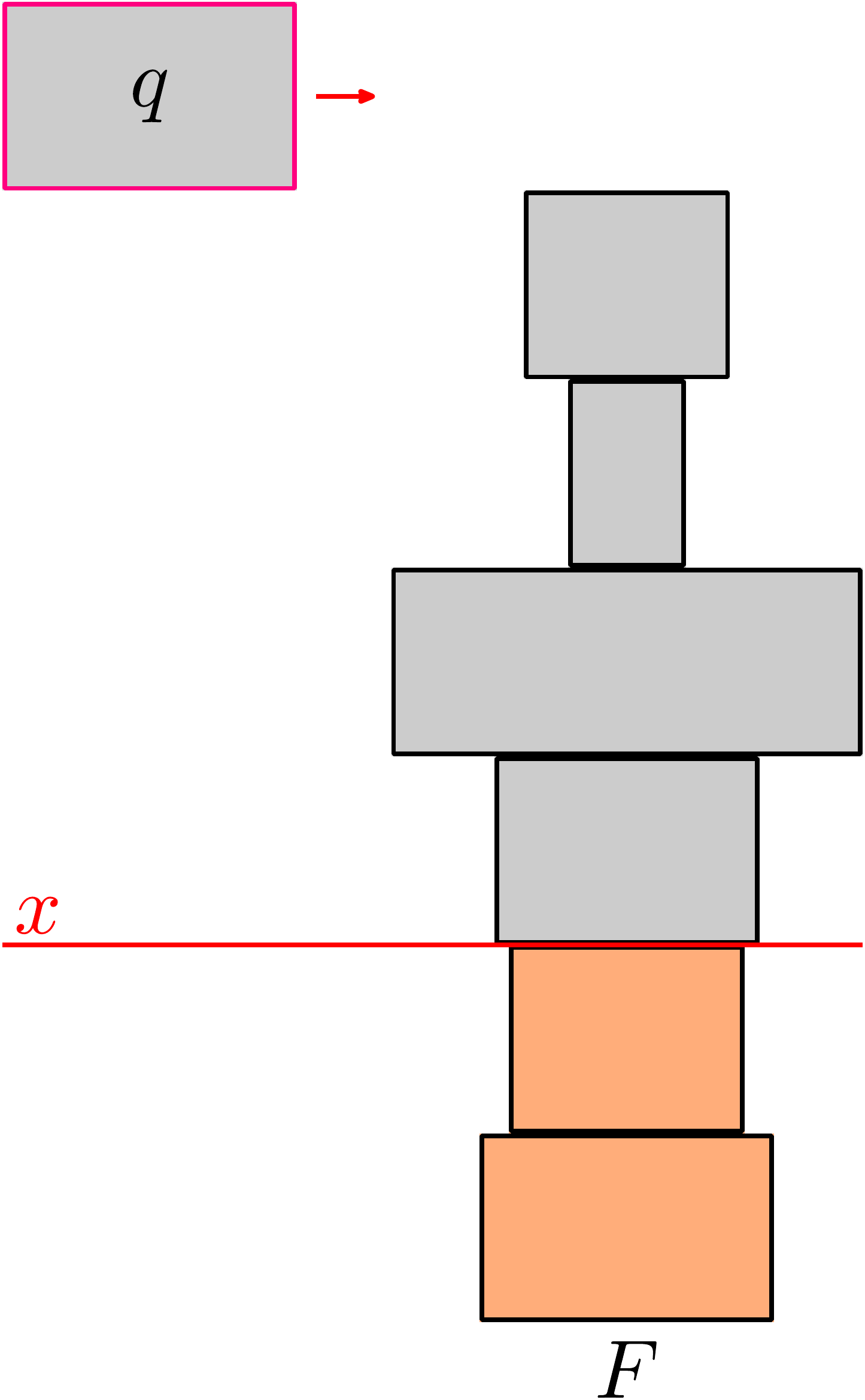}
        }
        \hfill
        \subfloat[][\label{subfig:UW_Relevant2}Time $t_0$.]{
        \hspace{0.1\textwidth}
        \includegraphics[height=0.3\textheight]{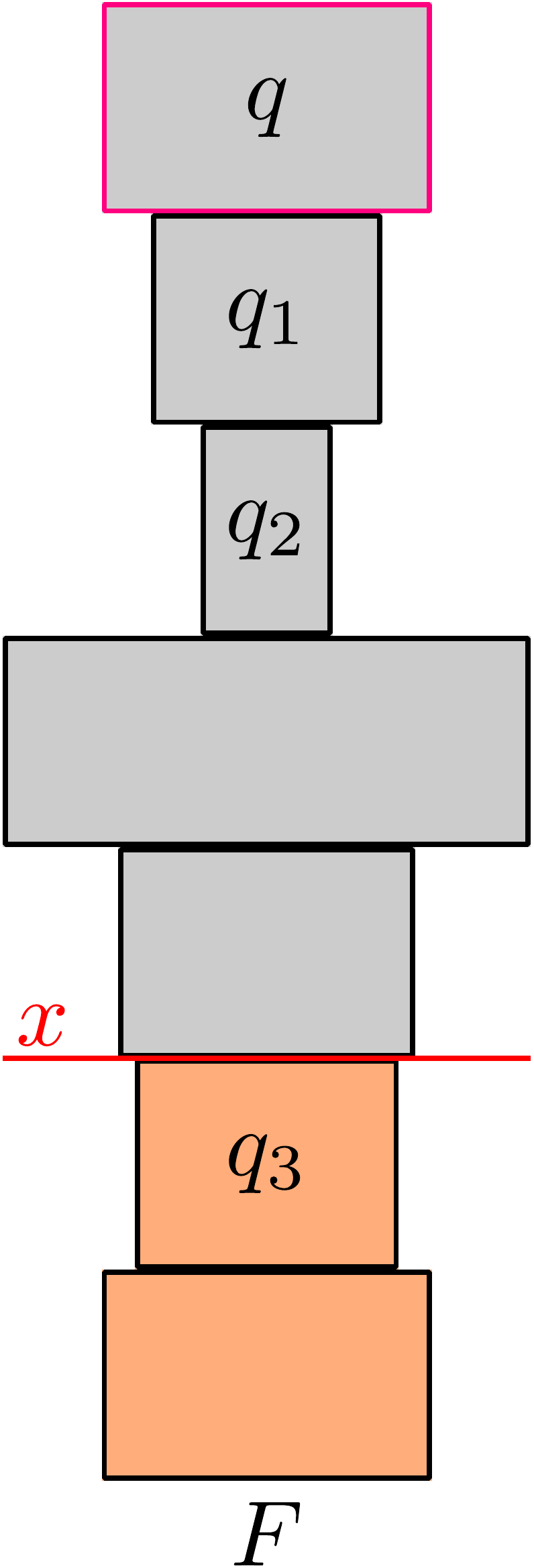}
        \hspace{0.1\textwidth}
        }
        \hfill
        \subfloat[][\label{subfig:UW_Relevant3}Time $t$.]{
        \hspace{0.05\textwidth}
        \includegraphics[height=0.3\textheight]{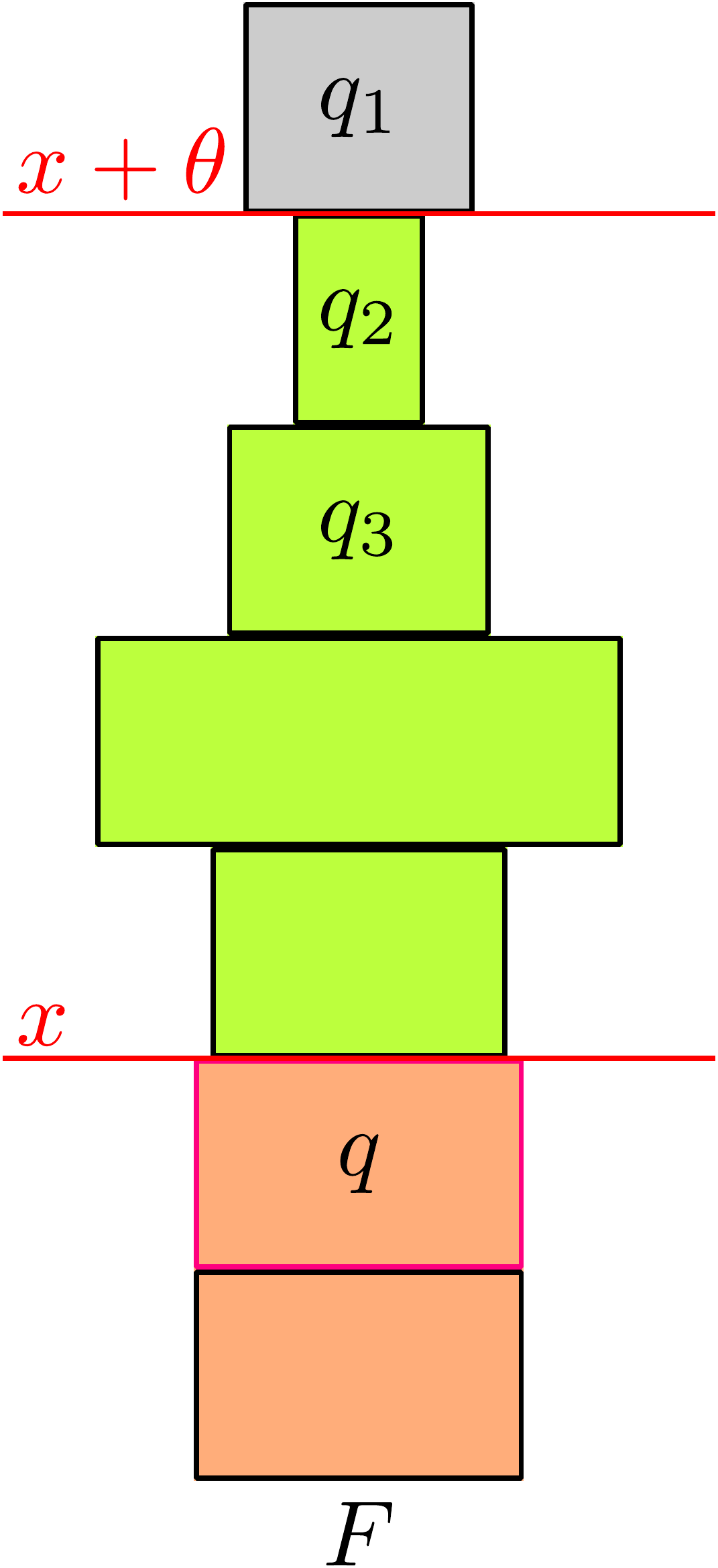}
        \hspace{0.05\textwidth}
        }
    \end{center}
    This figure visualizes \cref{prop:UW_RelevantArrival}, where \cref{subfig:UW_Relevant1,subfig:UW_Relevant2,subfig:UW_Relevant3} show the state of the algorithm at times $t^-,t_0$ and $t$, correspondingly (since no changes occur in bin $P$, it is omitted from the figure).
    Observe that the volume covered by $x$ decreased by $\pt{q_3}$ but increased by $\pt{q}$ from $t_0$ to $t$.
    The claim in the proof of \cref{prop:UW_RelevantArrival} is that the volume gained by raising the bar by $\cbc$ (the green volume in the figure) is at least $\pt{q_3}$.
    \caption[]{Visualization of \cref{prop:UW_RelevantArrival}}
\end{figure*}

\begin{prop}
    \label{prop:UW_RelevantArrival}
    Consider a call to $\UponJobRelease{q}$ at time $t$.
    For every $\xbar{x}$, it holds that
    \[ \cov{x+\cbc}{t} \ge \cov{x}{t^-} + \pt{q} \]
\end{prop}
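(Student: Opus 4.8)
The plan is to mirror the structure of the proof of \cref{prop:UW_IrrelevantArrival}, tracking how the covered volume changes through the release of $q$, but now exploiting the quasi-SRPT property (via the absence of violations, \cref{prop:UW_NoViolations}) to argue that raising the bar by $\cbc = \ceil{\dstr^2}$ more than compensates for the one place where volume is lost. As before, I would introduce the intermediate ``time'' $t_0$, the state right after $q$ has been appended to $\p{F}$ with top priority but before the rotation of \cref{line:UW_Rotate}. Claim~1 of \cref{prop:UW_IrrelevantArrival} already gives $\cov{x+\cbc}{t_0} \ge \cov{x+\cbc}{t^-} \ge \cov{x}{t^-}$ trivially; but that is too weak, since we need the extra $\pt{q}$. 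Instead I would track things exactly: since $q$ gets top priority at $t_0$, for every $q' \in Q_F(t^-)$ we have $\wbase{q'}{t_0} = \wbase{q'}{t^-}$, while $\wbase{q}{t_0} = \wts{Q_F(t_0)} = \wbase{q}{t^-}_{\text{(the old top bar)}} + \pt{q}$... wait, more carefully: in the unweighted setting $\wbase{q}{t_0} = \delta_F(t_0)$, the number of jobs in $F$. The key point is that $q$'s own volume $\pt{q}$ is covered by the bar $x$ at $t_0$ iff $x \ge \delta_F(t_0)$, which may fail, so $q$'s volume is not automatically counted.

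The real work is in handling the rotation from $t_0$ to $t$. Following the bookkeeping in \cref{prop:UW_IrrelevantArrival}: with $q_0 = q$ and $q_1,\dots,q_m$ the jobs in violation with $q$ (decreasing priority), after the cyclic permutation the only covered-volume change occurs when $x$ separates $Q' = \{q_0,\dots,q_m\}$, i.e.\ there is a smallest $i>0$ with $\p[t_0]{F}(q_i) \le x$; then the covered volume changes by $+\pt{q_0} - \pt{q_i}$, i.e.\ it \emph{loses} $\pt{q_i}$. The target inequality becomes: $\cov{x+\cbc}{t} \ge \cov{x}{t^-} + \pt{q}$. Compared to $\cov{x}{t^-}$, raising the bar from $x$ to $x+\cbc$ at time $t$ adds at least the volumes of the $\cbc$ jobs sitting in priority slots $x+1,\dots,x+\cbc$ of $F$ (those that exist); and we additionally gain $q$'s volume if $q$'s slot lies in this range or below. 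The crux: I want to show the green volume gained by raising the bar by $\cbc$ is at least $\pt{q_i} + (\text{whatever of }\pt q\text{ is still missing})$, so that the net change in covered volume at bar $x+\cbc$ relative to $t^-$ is at least $\pt q$.

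The main obstacle — and the heart of the argument — is the quantitative use of quasi-SRPT. Since there are no violations at $t_0$ among the old jobs and $q$ has top priority, and since $(q,q_i)$ \emph{is} a violation, we know $\dstr \ept{q_i} \le \ept{q}$, hence $\pt{q_i} < \dstr \ept{q_i} \le \ept{q} \le \pt{q}$, and more usefully $\ept{q_i} \le \ept{q}/\dstr$. Now consider the slots $x+1, x+2, \dots$ in $F$ at time $t$ (equivalently at $t_0$, up to the rotation): the job in slot $\p[t_0]{F}(q_i)$ is $q_i$ with $\ept{q_i} \le \ept{q}/\dstr$, while by the no-violation (quasi-SRPT) property every job at a priority strictly between $\p[t_0]{F}(q_i)$ and $\p[t_0]{F}(q_0)$ — in particular the $\cbc$ jobs just above slot $x$, if they lie below $q$'s slot — has predicted processing time at most $\dstr \cdot \ept{q_i} \le \ept{q}$ and real processing time at least its own prediction, hence at least $\ept{q_i} \ge \pt{q_i}/\dstr$... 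This is where I expect to need $\cbc = \ceil{\dstr^2}$: each of the $\cbc$ slot jobs has real processing time $\ge \pt{q_i}/\dstr^2$ or so after accounting for distortion in both directions, so their total is $\ge \cbc \cdot \pt{q_i}/\dstr^2 \ge \pt{q_i}$; and similarly their total plus the top slots dominates $\pt q$. I would split into the cases according to where $q$'s own slot $\delta_F(t_0)$ falls relative to $x$ and $x+\cbc$, and in each case verify that the $\cbc$ newly-covered slots (together with $q$ itself when its slot is $\le x+\cbc$) supply at least $\pt{q_i} + \pt{q} - (\text{volume of }q\text{ already covered at }t^-\text{-comparable bar})$. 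Closing this counting argument cleanly — making sure the $\cbc$ slots I use really exist in $F$ and really have the claimed lower bounds via the chain ``no violation $\Rightarrow$ prediction bounded by $\dstr \ept{q_i}$ $\Rightarrow$ real time $\ge \ept{\cdot} \ge \pt{q_i}/\dstr^2$'' — is the one delicate step; everything else is the same bar-bookkeeping as in \cref{prop:UW_IrrelevantArrival}.
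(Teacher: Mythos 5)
Your plan follows the same skeleton as the paper's proof (the intermediate state $t_0$, the same bar bookkeeping as in \cref{prop:UW_IrrelevantArrival}, the loss of $\pt{q_i}$ when $x$ separates the rotated set, and recovery of that loss from the $\cbc$ slots just above the bar), but the one step you yourself flag as delicate is exactly where your argument, as written, breaks. You lower-bound the slot jobs' processing times via the chain ``no violation $\Rightarrow$ prediction at most $\dstr\ept{q_i}$ $\Rightarrow$ real time at least $\ept{q_i}$,'' and this is a non sequitur: the absence of a violation $(q'',q_i)$ for a job $q''$ of \emph{higher} priority than $q_i$ only gives the upper bound $\ept{q''} < \dstr\ept{q_i}$, and an upper bound on a prediction yields no lower bound on a real processing time. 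The correct deduction uses the non-violation pair in the opposite direction: in the hard case $\wbase{q_i}{t} > x+\cbc$, the $\cbc$ jobs occupying slots $\floor{x+j}$, $1\le j\le\cbc$, at time $t$ have priority strictly \emph{below} $q_i$'s new priority, so by \cref{prop:UW_NoViolations} the pair $(q_i,q'_j)$ is not a violation, which forces $\dstr\ept{q'_j} > \ept{q_i}$ and hence $\pt{q'_j} \ge \ept{q'_j} > \ept{q_i}/\dstr > \pt{q_i}/\dstr^2$; summing over the $\cbc=\ceil{\dstr^2}$ slots recovers $\pt{q_i}$. (Your alternative of comparing against $q$ itself --- non-violation of $(q,q'')$ at $t_0$ gives $\ept{q''} > \ept{q}/\dstr \ge \ept{q_i}$ --- can also be made to work, but then you must additionally check that no member of the rotated set $Q'$ lands in the window $(x,x+\cbc]$, which does hold but is not free.)

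A second, smaller point: your worry about supplying ``$\pt{q_i}$ plus whatever of $\pt{q}$ is still missing'' dissolves if you order the cases as the paper does. First dispose of $x\ge\p[t_0]{F}(q)$: there $q$'s volume is already covered at $t_0$, and the rotation can only help (Claim 2 of \cref{prop:UW_IrrelevantArrival}), so $\cov{x+\cbc}{t}\ge\cov{x}{t}\ge\cov{x}{t^-}+\pt{q}$. In the remaining case $x<\p[t_0]{F}(q)$, whenever the bar separates $Q'$ the rotation itself already credits the bar with $+\pt{q}-\pt{q_i}$ (the new priority of $q$ drops below $x$), so the only debt the $+\cbc$ raise must pay is $\pt{q_i}$, i.e.\ the target reduces to $\cov{x+\cbc}{t}\ge\cov{x}{t}+\pt{q_i}$, with the easy sub-case $\wbase{q_i}{t}\le x+\cbc$ handled by covering $q_i$ directly. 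This removes the multi-case counting over where $q$'s own slot falls that your plan leaves open (you should still note the degenerate sub-case where no $q_i$ falls below the bar, e.g.\ $m=0$; there the same slot argument with $q$ in place of $q_i$ applies).
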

\begin{proof}
    As in the proofs of \cref{prop:UW_IrrelevantArrival,prop:UW_NoViolations} \cref{prop:UW_IrrelevantArrival,prop:UW_RelevantArrival}, we denote by $t_0$ the time immediately after $q$'s insertion into $F$, and before the possible rotation.
    Consider that  for every $q' \in Q(t) \backslash \{ q \}$ it holds that $\wbase{q'}{t_0} = \wbase{q'}{t^-}$, and thus $\cont{q'}{x}{t_0} = \cont{q'}{x}{t^-}$, which implies that $\cov{x}{t_0} \ge \cov{x}{t^-}$.

    If, in addition to the previous observation, it holds that $x \ge \p[t_0]{F}(q)$, then we have that $\cont{q}{x}{t_0} = \pt{q}$, and thus $ \cov{x}{t_0} \ge \cov{x}{t^-} + \pt{q} $.
    We thus have that
    \[
        \cov{x+\cbc}{t} \ge \cov{x}{t} \ge \cov{x}{t_0} \ge \cov{x}{t^-} + \pt{q}
    \]
    where the second inequality uses Claim 2 of the proof of \cref{prop:UW_IrrelevantArrival}.
    This completes the proof for the case that $x\ge \p[t_0]{F}(q)$, and we thus assume for the remainder of the proof that $x<\p[t_0]{F}(q)$.

    Denote $q_0 = q$, and let $q_1,\cdots,q_m$ be the jobs involved in violations with $q$ at $t_0$ (it is possible that $m=0$ if there are no violations).
    The algorithm performs applies the cyclic permutation $\pr{\p[t_0]{F}(q_m),\cdots,\p[t_0]{F}(q_1),\p[t_0]{F}(q_0)} $ to $\p[t_0]{F}$ to obtain $\p[t]{F}$.

    As in the proof of \cref{prop:UW_IrrelevantArrival}, let $i$ be the minimal index such that $\p[t_0]{F}(q_i) \le x$ (due to the assumption that $\p[t_0]{F}(q) > x$, we have that $i>0$).
    Due to the rotation, the volume covered by $x$ has decreased by $\pt{q_i}$ but increased by $\pt{q_0}$.
    That is, we have that
    \[
        \cov{x}{t} \ge \cov{x}{t_0} - \pt{q_i} + \pt{q_0}
    \]

    Now, it remains to show that
    \begin{equation}
        \label{eq:UW_RelevantArrivalClaim}
        \cov{x+\cbc}{t} \ge \cov{x}{t} + \pt{q_i}
    \end{equation}
    which completes the proof.

    We henceforth write $q^{\star}$ instead of $q_i$.
    It holds that $x < \wbase{q^{\star}}{t}$, and thus $\cont{q^{\star}}{x}{t} = 0$.
    If, in addition, we have that $\wbase{q^{\star}}{t} \le x+\cbc$, then $\cont{q^{\star}}{x+\cbc}{t} = \pt{q^{\star}}$, which implies \cref{eq:UW_RelevantArrivalClaim} and completes the proof.

    Otherwise, we have that $x+\cbc < \wbase{q^{\star}}{t}$.
    This implies that there exist $\cbc$ jobs $q'_1,\cdots,q'_{\cbc}\in Q_F(t)$, such that $ \wbase{q'_j}{t} = \lfloor x+j \rfloor $ for every index $1\le j \le \cbc$.
    Observe that for each index $j$ it holds that $\cont{q'_j}{x}{t} = 0$ and $\cont{q'_j}{x+\cbc}{t} = \pt{q'_j}$.
    Thus, it holds that
    \begin{equation}
        \label{eq:UW_RelevantArrivalRising}
        \cov{x+\cbc}{t} \ge \cov{x}{t} + \sum_{j=1}^{\cbc} \pt{q'_j}
    \end{equation}

    It holds that $\p[t]{F}(q'_j) \le \p[t]{F}(q^{\star})$ for each index $j$.
    Using \cref{prop:UW_NoViolations}, we have
    \[
        \pt{q'_j}
        \ge \ept{q'_j}
        \ge \frac{\ept{q^{\star}}}{\dstr}
        > \frac{\pt{q^{\star}}}{\dstr^2}
        \ge \frac{\pt{q^{\star}}}{\cbc}
    \]

    Plugging into \cref{eq:UW_RelevantArrivalRising} yields \cref{eq:UW_RelevantArrivalClaim}, completing the proof.
\end{proof}
\begin{prop}
    \label{prop:UW_Execution}
    For any $x$ and time $t$, if $\cov{x}{t}$ is decreasing due to the algorithm's processing of a job, then $\cov{x}{t}=V(t)$.
\end{prop}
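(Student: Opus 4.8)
The plan is to analyze what happens when the algorithm processes a job and show that if any bar $x$ loses covered volume, then $x$ must already cover \emph{all} pending volume. First I would recall that processing affects only the job currently being worked on, call it $q^{\star}$, which by the definition of \Process{} is the job in $Q_P$ with $\p{P}(q^{\star}) = \delta_P$ — i.e. the \emph{highest}-priority job in the partial bin. This means $\base{q^{\star}}{t} = Q_P(t)$, so $\wbase{q^{\star}}{t} = \delta_P(t)$. Processing $q^{\star}$ decreases $\rpt{q^{\star}}{t}$ continuously and leaves every other job's remaining volume and every job's priority untouched; hence the only way $\cov{x}{t}$ can decrease is if $\cont{q^{\star}}{x}{t} = \rpt{q^{\star}}{t}$, that is, if $x \ge \wbase{q^{\star}}{t} = \delta_P(t)$.

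Next I would argue that $x \ge \delta_P(t)$ forces $\cov{x}{t} = V(t)$. The point is that $\delta_P(t)$ is the largest possible value of $\wbase{q'}{t}$ over all $q' \in Q_P(t)$ (since $\p{P}$ is a bijection onto $\{1,\ldots,\delta_P\}$), and crucially, the algorithm maintains $\delta_F \le \delta_P$ at all times outside function calls — indeed $\UponHeavyF$ is triggered precisely to restore $\delta_F \le \delta_P$, and as noted in the proof of \cref{prop:UW_JobMoveRetainsVolume} the imbalance is never more than one. So for every job $q' \in Q_F(t)$ we have $\wbase{q'}{t} \le \delta_F(t) \le \delta_P(t) \le x$, and for every $q' \in Q_P(t)$ we have $\wbase{q'}{t} \le \delta_P(t) \le x$. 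Therefore $\cont{q'}{x}{t} = \rpt{q'}{t}$ for \emph{every} pending job $q'$, which gives $\cov{x}{t} = \sum_{q' \in Q(t)} \rpt{q'}{t} = V(t)$, as claimed.

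The one subtlety — and the step I expect to require the most care — is justifying the invariant $\delta_F \le \delta_P$ rigorously at the exact (continuous-time) instant where processing is happening, since function calls ($\UponJobRelease$, $\UponHeavyF$) momentarily perturb $\delta_F$ and $\delta_P$; I would phrase this as: between function calls the cardinalities do not change, and $\UponHeavyF$ is invoked the instant $\delta_F > \delta_P$ would occur, so outside function calls $\delta_F \le \delta_P$ holds, which suffices since processing occurs outside function calls. Everything else is a direct unwinding of the definitions of $\cont{\cdot}{\cdot}{\cdot}$, $\cov{\cdot}{\cdot}$, and $V(t)$.
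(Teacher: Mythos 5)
Your proposal is correct and follows essentially the same argument as the paper: the processed job has top priority in $P$, so its base weight equals $\delta_P(t)$, forcing $x \ge \delta_P(t)$ whenever the covered volume decreases, and the invariant $\delta_F \le \delta_P$ then implies every pending job (in both bins) is fully covered by $x$, giving $\cov{x}{t} = V(t)$. The extra care you devote to the invariant holding outside function calls is a minor elaboration of the same proof, not a different route.
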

\begin{proof}
    Suppose that $\cov{x}{t}$ decreased due to the processing of a job $q\in Q_P(t)$ by the algorithm.
    This implies that $\cont{q}{x}{t} > 0$, and thus $x \ge \wbase{q}{t}$.
    But from the choice of job in $\Process$, it holds that $\p[t]{P}(q) = \delta_P(t)$, and thus $\wbase{q}{t} = \delta_P(t)$.
    Therefore, for every $q'\in Q_P(t)$ it holds that $\cont{q'}{x}{t} = \rpt{q'}{t}$.
%
%	minimal job in $P$ with respect to $\prec_P$, we have that $\wbase{q}{t} = \delta_P(t)$, and thus $x \ge \delta_P(t)$. Therefore, we have that for every $q'\in P$ it holds that $\cont{q'}{x}{t} = \rpt{q'}{t}$.

    Moreover, since the algorithm maintains that $\delta_F(t) \le \delta_P(t)$, for every $q'\in Q_F(t)$ we also have that $\cont{q'}{x}{t} = \rpt{q'}{t}$.

    Overall, we have that $\cov{x}{t} = V(t)$, as required.
\end{proof}

\begin{figure}

    \begin{center}
        \includegraphics[height=0.3\textheight]{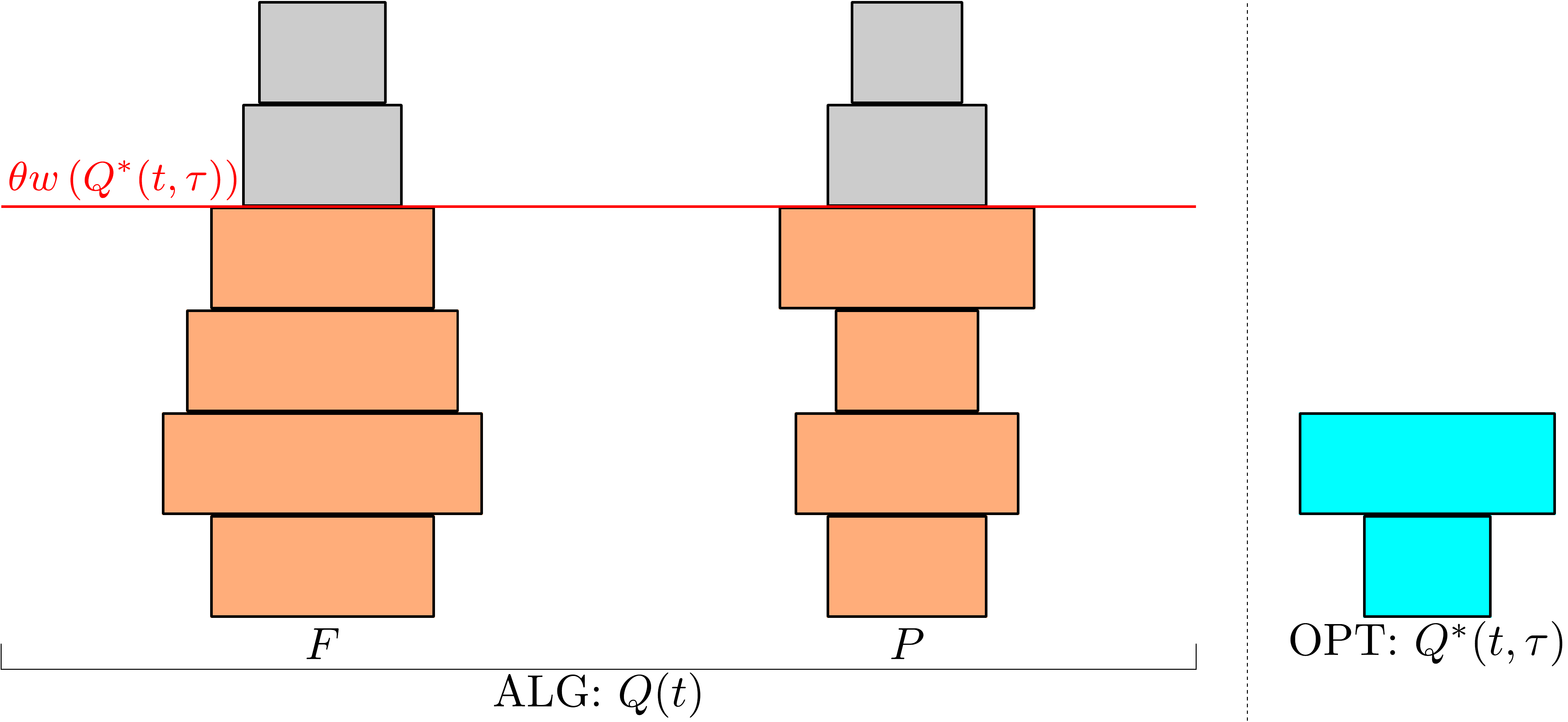}
    \end{center}
    This figure visualizes the inductive claim in the proof of \cref{lem:UW_VolumeLemma}, for some time $\tau \in [0,t]$.
    It shows the state of the algorithm at $\tau$, placed next to the state of $Q^*(t,\tau)$ in the optimal solution at $\tau$.
    The claim is that the total volume covered by $\cbc \wts{Q^*(t,\tau)}$ (the orange volume) is at least the volume of $Q^*(t,\tau)$ in $\opt$ at $\tau$ (the cyan volume).
    \caption[]{\cref{lem:UW_VolumeLemma}}
    \label{fig:UW_VolumeLemma}
\end{figure}

\begin{lem}
    \label{lem:UW_VolumeLemma}
    At any time $t$ it holds that $\cov{\cbc\delta^*(t)}{t} = V(t)$.
\end{lem}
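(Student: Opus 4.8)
The plan is to deduce the lemma from a single monotone invariant tracked forward over the fixed horizon $[0,t]$. First I would record two reductions. Since \cref{alg:UW_Algorithm} maintains $\delta_F\le\delta_P$, the bin $Q_P$ is nonempty whenever any job is pending, so $\Process$ never idles while work remains; hence the algorithm is work-conserving, and taking $\opt$ to be work-conserving as well gives $V(t)=V^{*}(t)$ at all times. Also $\cov{x}{t}\le V(t)$ for every $x$, since each pending job contributes to $\cov{x}{t}$ either its whole remaining volume or nothing. Therefore it suffices to establish the one-sided bound $\cov{\cbc\delta^{*}(t)}{t}\ge V^{*}(t)$.

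Fix $t$. For $\tau\in[0,t]$ define $R(\tau):=\cset{q\in Q^{*}(t)}{r_q\le\tau}$ -- the jobs pending in $\opt$ at time $t$ that have already been released by $\tau$. Then $\ps{R(\tau)}$ is nondecreasing in $\tau$, $R(t)=Q^{*}(t)$, and every $q\in R(\tau)$ has $r_q\le\tau\le t<c^{*}_q$ and is therefore $\opt$-pending at $\tau$, i.e. $R(\tau)\subseteq Q^{*}(\tau)$. I would then prove, by tracking its validity through time, the invariant
\[
  \cov{\cbc\ps{R(\tau)}}{\tau}\ \ge\ \sum_{q\in R(\tau)}y^{*}_q(\tau)\qquad\text{for all }\tau\in[0,t].
\]
Evaluated at $\tau=t$ this reads $\cov{\cbc\delta^{*}(t)}{t}\ge V^{*}(t)$, which with the previous paragraph gives the lemma. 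The invariant holds at $\tau=0$ with both sides $0$, so it remains only to see that it is preserved across calls to $\UponHeavyF$, releases, and time intervals containing neither.

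The discrete transitions are immediate. A call to $\UponHeavyF$ alters neither side -- the right side by definition, and the left side by \cref{prop:UW_JobMoveRetainsVolume}. The release of a job $q\notin Q^{*}(t)$ leaves $R$ unchanged while, by \cref{prop:UW_IrrelevantArrival}, not decreasing $\cov{\cbc\ps{R}}{\cdot}$. When $q\in Q^{*}(t)$ is released at time $\tau$, the cardinality $\ps{R}$ grows by exactly one, so the relevant bar height rises by exactly $\cbc$, while the right side grows by exactly $y^{*}_q(\tau)=\pt{q}$ because $\opt$ has not yet touched $q$; \cref{prop:UW_RelevantArrival} applied with $x=\cbc\ps{R(\tau^{-})}$ yields precisely $\cov{\cbc\ps{R(\tau)}}{\tau}\ge\cov{\cbc\ps{R(\tau^{-})}}{\tau^{-}}+\pt{q}$, which is the step needed.

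The substantive case -- and where I expect the real work to lie -- is a time interval $[\tau_1,\tau_2]\subseteq[0,t]$ containing no release and no call to $\UponHeavyF$. On such an interval $R(\cdot)$ is constant (no releases; no job of $R$ completes in $\opt$ before $t\ge\tau_2$), so the bar height $\cbc\ps{R}$ is fixed and both sides of the invariant are nonincreasing and $1$-Lipschitz in $\tau$. Suppose the invariant were violated at some $\tau'\in(\tau_1,\tau_2]$, and let $\tau^{\star}$ be the supremum of times in $[\tau_1,\tau']$ at which it still holds; then $\tau^{\star}\in[\tau_1,\tau')$, the two sides coincide at $\tau^{\star}$, and the left side lies strictly below the right on $(\tau^{\star},\tau']$. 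As $\sum_{q\in R}y^{*}_q(\cdot)$ is nonincreasing, at $\tau'$ the quantity $\cov{\cbc\ps{R}}{\cdot}$ has dropped strictly below its value at $\tau^{\star}$; being piecewise linear with nonpositive slope, it is strictly decreasing at some $\sigma\in(\tau^{\star},\tau']$. By \cref{prop:UW_Execution} that forces $\cov{\cbc\ps{R}}{\sigma}=V(\sigma)=V^{*}(\sigma)$, whereas the failure of the invariant at $\sigma$ together with $R(\sigma)\subseteq Q^{*}(\sigma)$ gives $\cov{\cbc\ps{R}}{\sigma}<\sum_{q\in R(\sigma)}y^{*}_q(\sigma)\le V^{*}(\sigma)$ -- a contradiction. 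So the invariant survives processing intervals, and the induction goes through. The crux is exactly this interplay: \cref{prop:UW_Execution} says the bar $\cbc\ps{R}$ can shed volume only when it already covers \emph{all} pending volume, i.e. equals $V=V^{*}$, and at that instant it cannot have fallen behind the $\opt$-volume of the sub-collection $R$.
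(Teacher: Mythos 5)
Your proof is correct and follows essentially the same route as the paper: the same induction over $[0,t]$ on the invariant $\cov{\cbc\ps{Q^*(t,\tau)}}{\tau}\ge\sum_{q\in Q^*(t,\tau)}\rpto{q}{\tau}$ (your $R(\tau)$ equals the paper's $Q^*(t,\tau)$), handling the discrete events via \cref{prop:UW_JobMoveRetainsVolume,prop:UW_IrrelevantArrival,prop:UW_RelevantArrival} and the processing case via \cref{prop:UW_Execution} together with work conservation. The only difference is presentational: you make explicit the continuity/supremum argument for processing intervals and the final step ($\cov{x}{t}\le V(t)$ and $V(t)=V^*(t)$) that the paper states more tersely.
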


\begin{proof}
    For any time $\tau$, let $Q^*(\tau)$ be the set of jobs pending in the optimal solution at time $\tau$.

    We now fix a time $t$.
    Let $\tau$ be any time in the range $[0,t]$.
    We also define $Q^*(t,\tau) = Q^*(t)\cap Q^*(\tau)$ -- i.e. the jobs alive at $t$ which were already released by $\tau$.
    For ease of notation we write $Y^*(\tau) = \sum_{q\in Q^*(t,\tau)} \rpto{q}{\tau}$.

    We prove, by induction on $\tau$, that for every $\tau\in [0,t]$ it holds that
    \begin{equation}
        \label{eq:UW_InductiveClaim}
        \cov{\cbc\cdot \ps{Q^*(t,\tau)}}{\tau} \ge Y^*(\tau)
    \end{equation}

    A visualization of the claim in \cref{eq:UW_InductiveClaim} appears in \cref{fig:UW_VolumeLemma}.

    Clearly, \cref{eq:UW_InductiveClaim} holds for $\tau=0$, as $Y^*(\tau)=0$.
    Now, we show that no possible event can break the inequality of \cref{eq:UW_InductiveClaim} as time $\tau$ progresses.
    When considering an event at $\tau$, we denote by $\tau^-$ the time immediately before the event.
    Consider the possible events:
    \begin{enumerate}
        \item \textbf{A job moves from $F$ to $P$ in the algorithm.} \cref{prop:UW_JobMoveRetainsVolume} implies that the left-hand side of \cref{eq:UW_InductiveClaim} does not decrease upon this event.
        Since the right-hand side does not change, the inequality continues to hold.

        \item \textbf{A job $q\notin Q^*(t)$ is released.} In this case, the right-hand side of \cref{eq:UW_InductiveClaim} remains the same. \cref{prop:UW_IrrelevantArrival} implies that the left-hand side does not decrease.

        \item \textbf{A job $q\in Q^*(t)$ is released.} In this case, the right-hand side of \cref{eq:UW_InductiveClaim} increases by $\pt{q}$, as $Q^*(t,\tau^-) = Q^*(t,\tau) \cup \{ q \}$. \cref{prop:UW_RelevantArrival} implies that
        \[
            \cov{\cbc\cdot \ps{Q^*(t,\tau)}}{\tau} =
            \cov{\cbc\cdot \ps{Q^*(t,\tau^-)}+ \cbc}{\tau} \ge
            \cov{\cbc\cdot \ps{Q^*(t,\tau^-)}}{\tau^-} + \pt{q}
        \]
        and thus the inequality of \cref{eq:UW_InductiveClaim} continues to hold.

        \item \textbf{A job $q\in Q(\tau)$ is processed.} \cref{prop:UW_Execution} implies that if the left-hand side decreases as a result of processing, then $\cov{\cbc\cdot \ps{Q^*(t,\tau)}}{\tau} = V(\tau)$.
        Since the algorithm is not lazy (i.e. always processes a pending job if there exists one), it holds that $V(\tau) = V^*(\tau) \ge Y^*(\tau) $, and thus the inequality holds.

    \end{enumerate}

    The proof of the induction claim is complete.
    Now observe that choosing $\tau=t$, we have that $\ps{Q^*(t,\tau)} = \delta^*(t)$ and that $Y^*(t) = V^*(t) = V(t)$.
    Thus, \cref{eq:UW_InductiveClaim} yields that
    \[ \cov{\cbc \delta^*(t)}{t} = V(t) \]
    completing the proof.
\end{proof}

\begin{proof}[Proof of \cref{thm:UW_Competitiveness}]
    \Cref{lem:UW_VolumeLemma} implies that at any time $t$ we have $\cov{\cbc \delta^*(t)}{t} = V(t) $.
    This implies that $\delta_F(t) \le \cbc\delta^*(t)$; otherwise, there would exist a pending job $q\in Q_F(t)$ such that $\wbase{q}{t} > \cbc\delta^*(t)$, the volume of which would not be counted in $\cov{\cbc \delta^*(t)}{t}$.
    The same argument applies to $P$ as well, yielding that $\delta_P(t) \le \cbc\delta^*(t)$.

    Overall, we get that $\delta(t)\le 2\cbc\delta^*(t)$.
    Integrating over $t$ completes the proof of the theorem.

\end{proof}

\subsection{Proof of \cref{thm:LogW_Competitiveness}}
\label{subsec:LogW_Proofs_W}

In the following analysis, we prove \cref{thm:LogW_Competitiveness}.

\begin{defn}[time-dependent variable values]
    For any time $t$, we define $Q_{F^i}(t)$ and $Q_{P^i}(t)$ to be the values of the variables $Q_{F^i}$ and $Q_{P^i}$ at time $t$, respectively.
    We define $Q_{A^i}(t) = Q_{F^i}(t)\cup Q_{P^i}(t)$, which is the set of pending jobs in superbin $A^i$ at time $t$.

    We also define $\p[t]{F^i}$ and $\p[t]{P^i}$ to be the values of the priority-mapping variables $\p{F^i}$ and $\p{P^i}$ at time $t$.
\end{defn}

\begin{defn}[base of a job]
    For every time $t$ and every job $q\in Q_{F^i}(t)$, we define the \emph{base} of $q$ at time $t$ as follows:
    \[
        \base{q}{t} = \pc {q' \in Q_{F^i}(t) \middle| \p[t]{F^i}(q')\le \p[t]{F^i}(q) }
    \]
    For ease of notation, we denote $w(\base{q}{t})$ by $\wbase{q}{t}$.
    For every job $q\in Q_{P^i}(t)$, we similarly define $\base{q}{t}$ and $\wbase{q}{t}$ according to $\p[t]{P^i}$.
\end{defn}

%\begin{defn}
%	Fix any point in time $t$, and for every index $i$, we define the strict total order $\prec^i_F$ on $Q^i_F(t)$, which is defined in an identical manner to $\prec_F$ defined in Section \ref{sec:UW}. That is, for every two distinct jobs $q_1 , q_2 \in Q^i_F(t)$ we have that $q_1 \prec^i_F q_2$ iff:blablabla
%	\begin{enumerate}
%		\item $\cls{q_1} < \cls{q_2}$, or
%
%		\item $\cls{q_1} = \cls{q_2}$, i.e. $q_1$ and $q_2$ are both in the same stack $F^i_j$, and $q_1$ is above $q_2$ in $F^i_j$.
%	\end{enumerate}
%
%	For every time $t$, index $i$, and job $q\in Q^i_F(t)$, we define $\base{q}{t} = \{ q' \in Q^i_F(t) | q \preceq^i_F q' \}$.
%
%	The same definitions also apply to $P$ rather than $F$ -- that is, an identical ordering $\prec^i_P$ on $Q^i_P(t)$, and the definition of $\base{q}{t}$ for $q\in Q^i_P(t)$.
%\end{defn}sdfsdfsd

We now redefine the volume covered by a bar $x$.
\begin{defn}[Volume covered by bar]
    \label{defn:LogW_CoveredVolume}
    For any $\xbar{x}$ and time $t$, and for any job $q\in Q(t)$, we define \[ \cont{q}{x}{t} =
    \begin{cases}
        \rpt{q}{t} & x\ge \wbase{q}{t}\\
        0 & \text{otherwise}

    \end{cases}
    \]
    called the volume of $q$ covered by $x$ at $t$.

    For every superbin index $i$, we define the total volume covered by $x$ in various bins:
    \begin{itemize}\itemsep0em
    \item $\cov[F^i]{x}{t} = \sum_{q\in Q_{F^i}(t)} \cont{q}{x}{t}$

    \item $\cov[P^i]{x}{t} = \sum_{q\in Q_{P^i}(t)} \cont{q}{x}{t}$

    \item $\cov[A^i]{x}{t} = \cov[F^i]{x}{t} + \cov[P^i]{x}{t}$

    \item $\cov{x}{t} = \sum_{i} \cov[A^i]{x}{t}$
    \end{itemize}
\end{defn}

\cref{fig:LogW_CoveredVolume} visualizes \cref{defn:LogW_CoveredVolume}, where $\cov{x}{t}$ is the total orange area in the figure.

\begin{figure}
    \begin{center}
        \includegraphics[height=0.3\textheight]{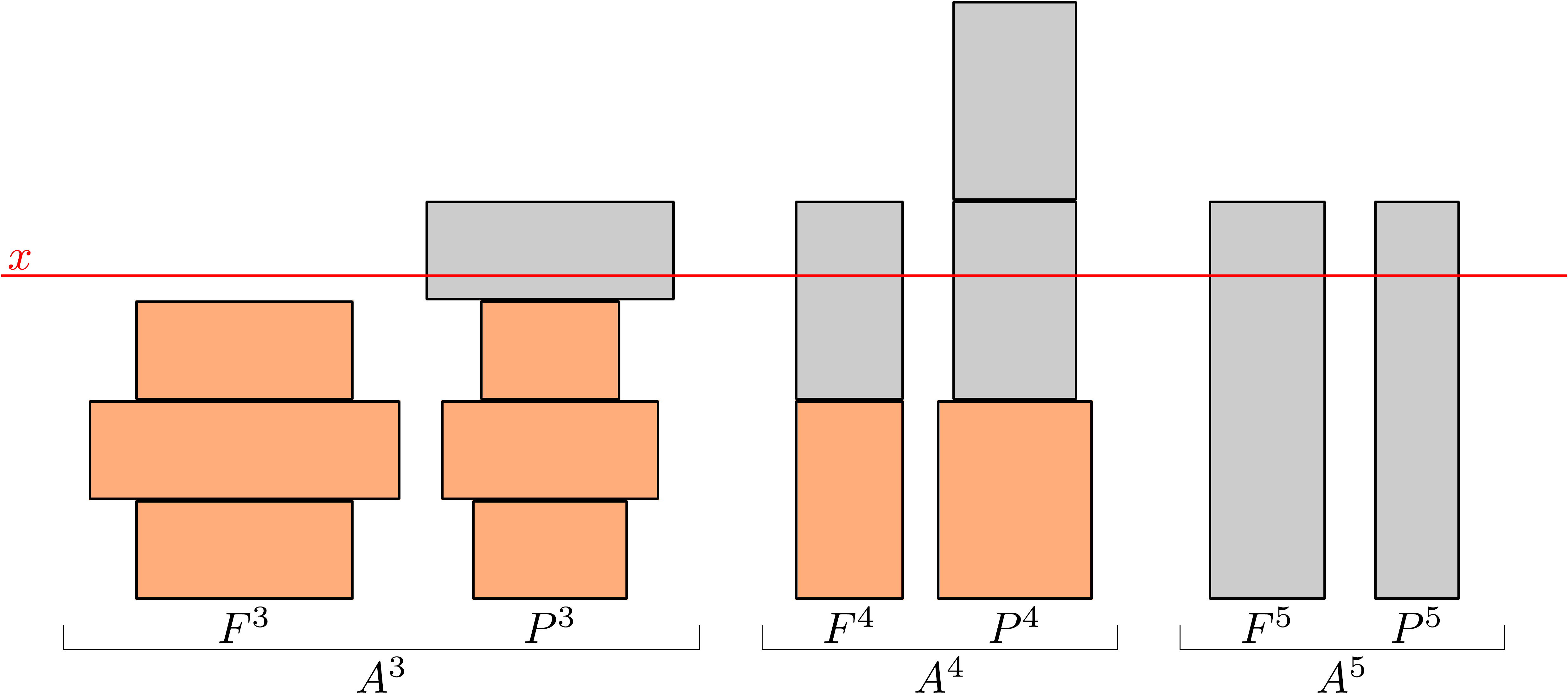}
    \end{center}

    \caption[]{Visualization of Covered Volume}
    \label{fig:LogW_CoveredVolume}
\end{figure}

For ease of notation, we again define $\cbc = \ceil{(\ope)^2}$.
\begin{prop}
    \label{prop:LogW_Arrivals}
    Upon the release of a job $q$ at time $t$, and for every $\xbar{x}$, it holds that
    \begin{itemize}\itemsep0em
    \item 	$ \cov{x}{t} \ge \cov{x}{t^-}  $

    \item 	$ \cov{x+\cbc\wt{q}}{t} \ge \cov{x}{t^-} + \pt{q}  $
    \end{itemize}
\end{prop}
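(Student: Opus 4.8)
The plan is to reduce this to the single-superbin analysis already carried out in \cref{prop:UW_IrrelevantArrival,prop:UW_RelevantArrival}. The released job $q$ has weight $\wt{q}=2^i$ for some $i$, and the call $A^i.\UponJobRelease{q}$ affects only the priorities and contents of the bins $F^i$ and $P^i$; every other superbin $A^{i'}$ with $i'\neq i$ is untouched, so $\cov[A^{i'}]{x}{t}=\cov[A^{i'}]{x}{t^-}$ for all such $i'$ and all bars $x$. Hence it suffices to prove both inequalities with $\cov{\cdot}{\cdot}$ replaced by $\cov[A^i]{\cdot}{\cdot}$, i.e. to rerun the arguments of \cref{prop:UW_IrrelevantArrival,prop:UW_RelevantArrival} inside the single superbin $A^i$.

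First I would re-examine \cref{prop:UW_NoViolations}: its proof used only the logic of $\UponJobRelease$ and $\UponHeavyF$ acting on one bin, so it applies verbatim to $F^i$, giving that $F^i$ has no violations outside function calls. Then, following Claim~1 of \cref{prop:UW_IrrelevantArrival}, inserting $q$ at the top priority of $F^i$ only raises $\wbase{\cdot}{\cdot}$-thresholds for the other jobs of $F^i$ (their bases are unchanged), so $\cov[A^i]{x}{t_0}\ge\cov[A^i]{x}{t^-}$; and Claim~2 shows the subsequent rotation does not decrease covered volume, using that for the pivot index $i$ the pair $(q,q_i)$ was a violation, so $\pt{q_i}<\dstr\ept{q_i}\le\ept{q}\le\pt{q}$. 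That yields the first bullet. For the second bullet I would replay \cref{prop:UW_RelevantArrival} inside $A^i$: if $x\ge\p[t_0]{F^i}(q)$ then $q$'s full volume $\pt{q}$ is already counted at $x$; otherwise, with $q^\star=q_i$ the pivot, either $\wbase{q^\star}{t}\le x+\cbc\wt{q}$ and we are done, or there are $\cbc\wt{q}$ jobs of $F^i$ with bases in $(x,x+\cbc\wt{q}]$, each of weight exactly $\wt{q}=2^i$ (all jobs in $A^i$ have this weight), hence each contributes volume $\ge\ept{q'_j}\ge\ept{q^\star}/\dstr>\pt{q^\star}/\dstr^2\ge\pt{q^\star}/\cbc$; there are $\cbc\wt{q}/\wt{q}=\cbc$ such distinct weight-$2^i$ jobs whose bases land in that interval, so their total covered volume gain is at least $\pt{q^\star}$, compensating the rotation loss.

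The one genuine subtlety — and the step I expect to require the most care — is the bookkeeping in the weighted setting: bases $\wbase{\cdot}{\cdot}$ are now \emph{weighted} cardinalities rather than plain counts, so a bar interval of length $\cbc\wt{q}$ corresponds to exactly $\cbc$ jobs of $A^i$ (since all jobs there have weight $\wt{q}$), and I must make sure the ``$\cbc$ jobs with consecutive bases'' argument of \cref{prop:UW_RelevantArrival} is restated with this scaling, i.e. bases at $x+j\wt{q}$ for $1\le j\le\cbc$ rather than $x+j$. Once that translation is pinned down, the inequality chain is identical to the unweighted case, and summing the per-superbin inequalities over all $i'$ (with equality for $i'\neq i$) gives both claimed bounds.
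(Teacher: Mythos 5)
Your proposal is correct and follows essentially the same route as the paper: the paper's proof also observes that only superbin $A^i$ changes, so the other superbins contribute equally at $t^-$ and $t$, and then invokes \cref{prop:UW_IrrelevantArrival,prop:UW_RelevantArrival} ``scaled up by $2^i$'' for the bin $A^i$. The only difference is that you spell out the scaling bookkeeping (bases at $x+j\wt{q}$, $\cbc$ jobs of weight $2^i$ in the bar interval) that the paper dismisses as easy to verify, which is a faithful elaboration rather than a different argument.
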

\begin{proof}
    Let $i$ be such that $\wt{q} = 2^i$.
    The job $q$ is thus sent to the superbin $A^i$.

    To prove the first claim, first note that the structure of every superbin $A^{i'}$ for $i'\neq i$ remains the same after $q$'s release.
    Thus, we have that $\cov[A^{i'}]{x}{t} = \cov[A^{i'}]{x}{t^-}$.
    As for the volume covered in $A^i$, we know that it does not decrease due to \cref{prop:UW_IrrelevantArrival} (it is easy to verify that this proposition holds when all weights are scaled up by $2^i$).

    To prove the second item, again note that for every $i'\neq i$ it holds that $\cov[A^{i'}]{x+\cbc\wt{q}}{t} \ge \cov[A^{i'}]{x}{t} = \cov[A^{i'}]{x}{t^-}$.
    As for the volume covered in $A^i$, using \cref{prop:UW_RelevantArrival} (scaled up by $2^i$) yields that $\cov[A^i]{x+\cbc\wt{q}}{t} \ge \cov[A^i]{x}{t^-} + \pt{q}$.
    This completes the proof.
\end{proof}
\begin{prop}
    \label{prop:LogW_Execution}
    For any $x$ and time $t$, if $\cov{x}{t}$ is decreasing due to the algorithm's processing of a job, then $\cov{x}{t}=V(t)$.
\end{prop}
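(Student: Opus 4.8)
The plan is to mimic the proof of \cref{prop:UW_Execution} for the single-superbin case, but now keeping track of which superbin is being processed. First I would fix the time $t$, the bar value $x$, and suppose that $\cov{x}{t}$ is decreasing due to the algorithm processing a job. By \cref{defn:LogW_CoveredVolume}, $\cov{x}{t} = \sum_i \cov[A^i]{x}{t}$, and since the algorithm processes exactly one job at a time, the decrease is caused by processing some job $q\in Q_{P^i}(t)$ in a single superbin $A^i$ — namely, the superbin with the heaviest partial bin $P^i$, as chosen in $\Process$ of \cref{alg:LogW_Algorithm}. In particular, $\cont{q}{x}{t}>0$, which forces $x\ge \wbase{q}{t}$.

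Next I would argue, exactly as in \cref{prop:UW_Execution}, that inside the processed superbin $A^i$ the bar $x$ covers \emph{all} pending jobs. From the choice of job in the inner $\Process$ call we have $\p[t]{P^i}(q)=\delta_{P^i}(t)$, hence $\wbase{q}{t}=\delta_{P^i}(t)\le x$; therefore $\cont{q'}{x}{t}=\rpt{q'}{t}$ for every $q'\in Q_{P^i}(t)$. Since the superbin algorithm maintains $\wts{Q_{F^i}}\le\wts{Q_{P^i}}$ (in the unweighted-per-superbin sense, $\delta_{F^i}(t)\le\delta_{P^i}(t)$), the partial bases dominate the full bases, so $\cont{q'}{x}{t}=\rpt{q'}{t}$ for every $q'\in Q_{F^i}(t)$ as well. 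Thus $\cov[A^i]{x}{t}$ equals the entire volume of $A^i$ at $t$.

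The one genuinely new point — and the step I expect to be the main obstacle — is handling the \emph{other} superbins $A^{i'}$ with $i'\neq i$. I need to show that $x$ also covers all of their volume. The key is that $A^i$ was selected because $\wts{P^i}$ is maximal among all superbins; combined with $x\ge\delta_{P^i}(t)$ and the balance invariant $\delta_{F^{i'}}(t)\le\delta_{P^{i'}}(t)$, one gets that the total number of jobs (equivalently, the relevant base values, suitably normalized) in any $A^{i'}$ is dominated by that in $A^i$, so no job of $A^{i'}$ has base exceeding $x$ either. I would need to be careful about how the per-superbin bases $\wbase{\cdot}{t}$ are scaled by the weight $2^{i'}$ versus the maximality of $\wts{P^{i'}}$; the cleanest route is to note that $\wts{P^{i'}}\le\wts{P^i}$ and that within each superbin the bases range up to $\delta_{P^{i'}}(t)\cdot 2^{i'}=\wts{Q_{P^{i'}}}$ type quantities, so $x$, being at least the top base in $A^i$, exceeds the top base in every $A^{i'}$. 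Concluding, every job in every superbin is fully covered, hence $\cov{x}{t}=\sum_i \cov[A^i]{x}{t}=\sum_i \rpts{Q_{A^i}(t)}{t}=V(t)$, as required.
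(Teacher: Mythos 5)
Your proposal is correct and follows essentially the same route as the paper's proof: the processed job's base equals the full weight $\wts{Q_{P^i}(t)}$ of the heaviest partial bin, and combining the maximality of $\wts{Q_{P^i}(t)}$ with the per-superbin invariant $\wts{Q_{F^{i'}}(t)}\le \wts{Q_{P^{i'}}(t)}$ shows $x$ dominates every bin's total weight, hence every job's base, giving $\cov{x}{t}=V(t)$. The only blemish is the early scaling slip $\wbase{q}{t}=\delta_{P^i}(t)$ (it should be $\delta_{P^i}(t)\cdot 2^i=\wts{Q_{P^i}(t)}$), which your own ``cleanest route'' remark already corrects.
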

\begin{proof}
    Consider the job $q$ being processed, where $q\in P^i$ for some $i$.
    Since $\cont{q}{x}{t} > 0$, we have that $x\ge \wbase{q}{t}$.
    Since the algorithm processes the job $q\in Q_{P^i}(t)$ such that $\p[t]{P^i}(q) = \delta_{P^i}(t)$, we have that $\wbase{q}{t} = \delta_{P^i}(t) \cdot 2^i = w(Q_{P^i}(t))$.

    Now, observe that since the algorithm chose superbin $A^i$ for processing, it must be that for every $i'$ we have $w(Q^i_P(t)) \ge w(Q^{i'}_P(t)) \ge w(Q^{i'}_F(t))$, where the second inequality is since the algorithm moves jobs from the full bins to the partial bins.

    Since $x$ is at least the weight of every bin in the system, it is at least $\wbase{q'}{t}$ for every $q'\in Q(t)$, and thus $\cov{x}{t} = V(t)$.
\end{proof}

\begin{lem}
    \label{lem:LogW_VolumeLemma}
    At any time $t$ it holds that $\cov{\cbc\cdot w(Q^*(t))}{t} = V(t)$.
\end{lem}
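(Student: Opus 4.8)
The plan is to mimic the inductive argument used to prove \cref{lem:UW_VolumeLemma}, but now accounting for the multiple superbins and the weighted volume. Fix a time $t$. For any time $\tau\in[0,t]$ let $Q^*(\tau)$ be the set of jobs pending in $\opt$ at $\tau$, let $Q^*(t,\tau) = Q^*(t)\cap Q^*(\tau)$ be those jobs pending at $t$ that were already released by $\tau$, and write $Y^*(\tau) = \sum_{q\in Q^*(t,\tau)} \rpto{q}{\tau}$. I will prove by induction on $\tau$ that
\[
    \cov{\cbc\cdot w(Q^*(t,\tau))}{\tau} \ge Y^*(\tau),
\]
and then specialize to $\tau=t$, where $w(Q^*(t,t)) = w(Q^*(t))$ and $Y^*(t) = V^*(t) = V(t)$ since neither $\opt$ nor the non-lazy algorithm has idle time while jobs are pending; this gives $\cov{\cbc\cdot w(Q^*(t))}{t} \ge V(t)$, and the reverse inequality is trivial since covered volume never exceeds total volume.

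For the base case $\tau=0$ the right-hand side is $0$, so the claim holds. For the inductive step I consider the same four events as in \cref{lem:UW_VolumeLemma}. (i) A job moves from some $F^i$ to $P^i$: \cref{prop:UW_JobMoveRetainsVolume}, applied inside superbin $A^i$ with all weights scaled by $2^i$, shows $\cov[A^i]{x}{\tau} = \cov[A^i]{x}{\tau^-}$ for all $x$, and the other superbins are untouched, so the left-hand side is unchanged while the right-hand side is unchanged. (ii) A job $q\notin Q^*(t)$ is released: the right-hand side stays the same, and the first bullet of \cref{prop:LogW_Arrivals} shows the left-hand side does not decrease. (iii) A job $q\in Q^*(t)$ is released: then $Q^*(t,\tau^-) = Q^*(t,\tau)\cup\{q\}$, so the right-hand side grows by $\pt{q}$, and the second bullet of \cref{prop:LogW_Arrivals} with $x = \cbc\cdot w(Q^*(t,\tau^-))$ gives
\[
    \cov{\cbc\cdot w(Q^*(t,\tau))}{\tau} = \cov{\cbc\cdot w(Q^*(t,\tau^-)) + \cbc\wt{q}}{\tau} \ge \cov{\cbc\cdot w(Q^*(t,\tau^-))}{\tau^-} + \pt{q},
\]
so the inequality is preserved. (iv) A job is processed: by \cref{prop:LogW_Execution}, if the left-hand side decreases then $\cov{\cbc\cdot w(Q^*(t,\tau))}{\tau} = V(\tau) = V^*(\tau) \ge Y^*(\tau)$, where the middle equality again uses that both schedules are non-lazy, so the inequality continues to hold.

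The one subtlety I expect to need care is event (iii): the increment to the bar is $\cbc\wt{q} = \cbc\cdot 2^i$, which must match exactly the scaling factor applied to \cref{prop:UW_RelevantArrival} when that proposition is transported into superbin $A^i$; I should double-check that $\cbc$ is defined consistently ($\cbc = \ceil{\dstr^2}$, independent of the bin index) so that the scaled version of \cref{prop:UW_RelevantArrival} raises the bar by exactly $\cbc$ in the bin's internal (weight-$1$) units, i.e. by $\cbc\cdot 2^i$ in global units. The rest is bookkeeping: since a released job of weight $2^i$ enters only $A^i$, all other superbins are inert during the event, and \cref{prop:LogW_Arrivals} already packages the per-superbin arguments, so no genuinely new inequality is needed beyond what is proved in \cref{subsec:UW}. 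The main obstacle, such as it is, is making sure the weighted bar $\cbc\cdot w(Q^*(t,\tau))$ behaves additively across arrivals in different superbins — but this is immediate from $\cov{x}{t} = \sum_i \cov[A^i]{x}{t}$ and the monotonicity of $\cont{q}{\cdot}{t}$ in the bar.
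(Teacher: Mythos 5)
Your proposal is correct and follows essentially the same route as the paper's proof: the same induction on $\tau$ over the four event types, invoking \cref{prop:UW_JobMoveRetainsVolume}, \cref{prop:LogW_Arrivals}, and \cref{prop:LogW_Execution} exactly as the paper does, with non-laziness giving $V(\tau)=V^*(\tau)$ and the final specialization at $\tau=t$. Your explicit remarks on the trivial reverse inequality and the per-superbin scaling of $\cbc\wt{q}$ are fine and consistent with the paper's (more tersely stated) argument.
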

\begin{proof}
    The proof of this lemma is nearly identical to that of \cref{lem:UW_VolumeLemma} of \cref{subsec:UW}, but uses the propositions of this subsection instead.

    We now fix a time $t$.
    Let $\tau$ be any time in the range $[0,t]$.
    We again define $Q^*(t,\tau)$ and $Y^*(\tau)$ as in the proof of \cref{lem:UW_VolumeLemma}.

    We prove, by induction on $\tau$, an identical claim to that in the proof of \cref{lem:UW_VolumeLemma}, which is that for every $\tau\in [0,t]$ it holds that
    \begin{equation}
        \label{eq:LogW_InductiveClaim}
        \cov{\cbc\cdot \ps{Q^*(t,\tau)}}{\tau} \ge Y^*(\tau)
    \end{equation}

    The claim trivially holds for time $\tau=0$, before any requests were released.
    We now consider the possible events which could possibly break the claim as $\tau$ progresses, in a similar way to \cref{lem:UW_VolumeLemma}.

    \begin{enumerate}
        \item \textbf{A job moves from $F$ to $P$ in the algorithm.} \Cref{prop:UW_JobMoveRetainsVolume} implies that the left-hand side of \cref{eq:LogW_InductiveClaim} does not decrease upon this event.
        Since the right-hand side does not change, the inequality continues to hold.

        \item \textbf{A job $q\notin Q^*(t)$ is released.} In this case, the right-hand side of \cref{eq:LogW_InductiveClaim} remains the same.
        The first claim in \cref{prop:LogW_Arrivals} implies that the left-hand side does not decrease.

        \item \textbf{A job $q\in Q^*(t)$ is released.} In this case, the right-hand side of \cref{eq:LogW_InductiveClaim} increases by $\pt{q}$, as $Q^*(t,\tau^-) = Q^*(t,\tau) \cup \{ q \}$.
        The second claim in \cref{prop:LogW_Arrivals} implies that
        \[
            \cov{\cbc\cdot \wts{Q^*(t,\tau)}}{\tau} =
            \cov{\cbc\cdot \wts{Q^*(t,\tau^-)} + \cbc\wt{q}}{\tau} \ge
            \cov{\cbc\cdot \wts{Q^*(t,\tau^-)}}{\tau^-} + \pt{q}
        \]
        and thus the inequality of \cref{eq:LogW_InductiveClaim} continues to hold.

        \item \textbf{A job $q\in Q(\tau)$ is processed.} \Cref{prop:LogW_Execution} implies that if the left-hand side decreases as a result of processing, then $\cov{\cbc\cdot \ps{Q^*(t,\tau)}}{\tau} = V(\tau)$.
        Since the algorithm is not lazy (i.e. always processes a pending job if there exists one), it holds that $V(\tau) = V^* (\tau) \ge Y^*(\tau) $, and thus the inequality holds.

    \end{enumerate}

    The proof of the induction claim is now complete, and choosing $\tau=t$ in \cref{eq:LogW_InductiveClaim} yields
    \[ \cov{\cbc \delta^*(t)}{t} = V(t) \]
    as required.
\end{proof}

\begin{proof}[Proof of \cref{thm:LogW_Competitiveness}]
    At any time $t$, the algorithms contains at most  $\ceil{\log W}+1$ nonempty superbins, each of which contains two bins (full and partial).
    From \cref{lem:LogW_VolumeLemma}, at any time $t$, there does not exist a pending job $q$ in the algorithm such that $\wbase{q}{t} > \cbc\cdot \wts{Q^*(t)}$.
    Thus, the total weight in each bin is at most $\cbc\cdot \wts{Q^*(t)}$, and thus $2\cbc\cdot \wts{Q^*(t)}$ for each superbin (which contains two bins).
    Overall, the total weight in the system is at most $2\cbc (\ceil {\log W} +1) \wts{Q^*(t)}$.

    Integrating over $t$ thus yields the desired result, which is that the algorithm is $O(\cbc\log W)$ competitive.
\end{proof}

\section{Lower Bound for Unweighted \prob}
\label{sec:UWLB}
In this section, we show a lower bound of $2$ on the competitive ratio of any algorithm for \prob, even when the distortion parameter is arbitrarily close to $1$.
This lower bound is perhaps somewhat surprising, as one would hope that as $\dstr$ approaches $1$, one would be able to approximate the optimal SRPT schedule, and thus approach $1$-competitiveness.

\begin{thm}
    \label{thm:UWLB_LowerBound}
    For any $\dstr > 1$, there is no deterministic $\dstr$-robust algorithm with competitive ratio less than $2$.
\end{thm}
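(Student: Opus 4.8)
The plan is to construct an adversarial instance consisting of two phases. In the first phase, we release a single "long" job at time $0$; in the second phase, starting at time $1$, we release a carefully timed stream of very short jobs, one every unit of time, which serves as the classic "bombardment." The key idea is that the algorithm, upon seeing the long job, does not know whether its true processing time equals the prediction (so it should be processed immediately, before the short jobs) or is larger (so we must still worry about it). We exploit a one-bit ambiguity: we will show that whatever the algorithm does with the long job during the first unit of time, an adversary choosing the distortion of that one job (within $[\,1,\dstr)\,$) forces a flow time at least $2-o(1)$ times that of the optimum.

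First I would set the long job to have predicted processing time $\ept{} = 1$, and release at time $0$ a batch of short jobs is not needed yet — instead release only the long job on $[0,1)$. During the interval $[0,1)$ the algorithm must decide how much of this long job to process; call the amount processed $a \in [0,1]$. Then, at integer times $1,2,\dots,N$ for large $N$, release unit-spaced short jobs, each of predicted (and true) processing time $\varepsilon$ for a tiny $\varepsilon>0$ — these are predicted with perfect accuracy, so they impose no robustness constraint. The optimum on the short-job stream alone keeps at most one short job pending at a time; the only question is when the long job gets completed. If the algorithm processed $a < 1$ of the long job in $[0,1)$, the adversary declares the long job's true processing time to be exactly $1$ (distortion $1<\dstr$, legal). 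Then $\opt$ could have finished the long job by time $1$ and paid flow time $\approx 1$ for it, while the algorithm still carries $1-a$ volume of the long job, which it must either finish later (delaying it, and being "bombarded" with short jobs that pile up behind it) or keep deferring (paying one extra unit of pending weight at every short-job release). A LIFO/SRPT-style accounting shows the algorithm accrues an extra additive term proportional to $N$ relative to $\opt$, driving the ratio to $2$ as $N\to\infty$.

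The complementary case is when the algorithm processes the entire long job in $[0,1)$ (i.e. $a=1$ essentially, committing to it): here the adversary instead declares the true processing time to be $\dstr - \varepsilon'$ (or just $>1$), so the algorithm is actually still stuck with the long job after time $1$ while $\opt$, knowing it should *not* rush the long job, would have... — wait, this direction needs care, since $\opt$ also must eventually finish the long job. The cleaner route is to note that $\opt$ always processes the long job first to completion (at cost equal to its true length, $\le \dstr$, a constant) and then handles the short stream with at most one job pending, paying total $\approx \dstr + N\varepsilon$. The algorithm, by the ambiguity argument above, is forced into carrying *two* "units" of pending work — the leftover of the long job plus the current short job — for a constant fraction of the $N$ steps, giving flow time $\approx 2N\varepsilon + O(1)$ vs. $\opt$'s $\approx N\varepsilon + O(1)$; taking $N\to\infty$ then $\varepsilon\to 0$ appropriately yields the ratio $2 - o(1)$, and since this holds for every deterministic algorithm and every $\dstr>1$, the bound follows. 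The main obstacle is pinning down the adversary's choice of the long job's true length as a function of the algorithm's behavior on $[0,1)$ so that in *both* branches the optimum's extra cost is only $O(1)$ while the algorithm's is $\Theta(N\varepsilon)$ larger — i.e., making the "one extra pending job for all time" lower bound airtight against an algorithm that might try to interleave cleverly; this is where the LIFO structure of the bombardment argument and a potential-function or direct charging argument must be invoked carefully.
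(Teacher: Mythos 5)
There is a genuine gap, and it is structural rather than a matter of missing detail. Your first phase releases a \emph{single} job with prediction $1$ and branches on the amount $a$ that the algorithm processes during $[0,1)$. But during $[0,1)$ there is only one job in the system, so $a$ is not a decision the algorithm makes: any work-conserving algorithm processes the unique pending job continuously, hence $a=\min(p,1)$ where $p$ is the realized processing time. This collapses both branches. If the adversary sets $p=1$, the algorithm finishes at time $1$ exactly as $\opt$ does; if the adversary sets $p>1$, then $\opt$ has also only been able to work on that same job during $[0,1)$ and sits at time $1$ with the identical leftover $p-1$, after which the remaining instance (accurately predicted short jobs) lets the algorithm stay within an additive constant of $\opt$. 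So against the trivial work-conserving algorithm your adversary forces ratio $1+o(1)$, not $2$. The ambiguity you need must be between \emph{two simultaneously pending jobs with the same prediction}: this is exactly what the paper's construction does, releasing in each phase two jobs with equal predicted time $\lambda^i$, waiting $\lambda^i$, and then declaring the job the algorithm processed \emph{more} to be the long one ($\dstr\lambda^i$) and the other to be short ($\lambda^i$), so that $\opt$ (dedicating the phase to the job that turns out short) completes one job per phase while the algorithm completes none.

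Two further points would still need repair even if the single-job gadget were salvageable. First, one round of the gadget at best leaves the algorithm with one extra pending job, and during bombardment that yields ratio $(k{+}1{+}1)/(k{+}1)$-type bounds; the paper needs $M$ phases with geometrically decreasing lengths (so leftovers can never be cleared in the remaining time) to reach $2M$ versus $M$ pending jobs and push the ratio to $2$ as $M\to\infty$. Second, your bombardment does not saturate the machine: jobs of length $\varepsilon$ released at unit spacing leave almost all of the time idle, during which the algorithm simply clears its backlog; the stream must release a job of length $x$ every $x$ time units, as in the paper, for the pending-job counts to be frozen.
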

\begin{proof}
    Fix some $\dstr > 1$.
    We assume that $\dstr \le 2$, as we are interested in the case of $\dstr$ approaching $1$ (clearly if the lower bound holds for small $\dstr$, it holds for large $\dstr$).

    Let $\lambda = \frac{\dstr+1}{\dstr - 1}$, and let $M$ be an arbitrarily large integer.
    The adversary we describe operates in $M$ phases, numbered from $M-1$ down to $0$.
    Phase $i$ takes $\lambda^i$ time units.
    Observe the following adversary:
    \begin{enumerate}
        \item for $i$ from $M-1$ to $0$:
        \begin{enumerate}
            \item release two jobs of predicted processing time $\lambda^i$.

            \item wait $\lambda^i$ time units.

            \item of the two jobs released in this phase, let $q_1^i$ be the job processed more by the algorithm during these $\lambda^i$ time units, and let $q_2^i$ be the other job.
            The adversary sets $\pt{q_1^i} = \dstr \lambda^i$ and $\pt{q_2^i} = \lambda^i$.
        \end{enumerate}
    \end{enumerate}

    During phase $i$, the algorithm has processed job $q_1^i$ for at most $\frac{\lambda^i}{2}$ time, and has processed job $q_2^i$ for at most $\lambda^i$ time.
    Thus, both $q_1^i$ and $q_2^i$ have remaining volume of $(\dstr-1)\lambda^i = (\dstr+1)\lambda^{i-1}$ at the end of the phase.
    Also observe that the total time of phases $i-1$ through $0$ is at most
    \[
        \sum_{i'=0}^{\infty}\lambda^{i-1-i'} = \lambda^{i-1} \cdot \frac{1}{1-\lambda^{-1}}=\lambda^{i-1} \cdot \frac{1-\dstr}{2}
    \]
    And thus, at the end of the last phase, it must be that the remaining processing time for both $q_1^i$ and $q_2^i$ in the algorithm is at least $\frac{1-\dstr}{2}\lambda^{i-1}$.

    Overall, we have that the algorithm has $2M$ living jobs at the end of the $M$ phases, each with at least $\frac{1-\dstr}{2} \lambda^0 = \frac{1-\dstr}{2}$ remaining volume.
    The optimal solution, on the other hand, has $M$ living jobs, as it would complete $q_2^i$ in each phase $i$.

    Now, after the $M$ phases, the adversary would begin the ``bombardment'', releasing a job of processing time $x = \frac{1-\dstr}{2}$ every $x$ time units.
    The algorithm would have no option which is better than serving the ``bombardment'' requests, and the optimal solution would do the same.
    During the bombardment, the algorithm has $2M+1$ living jobs at any time, and the optimal solution has $M+1$.
    Thus, the competitive ratio tends to $\frac{2M+1}{M+1}$ as the bombardment continues.
    Since $M$ was chosen arbitrarily, we can let $M$ tend to $\infty$, and thus the competitive ratio tends to $2$.
\end{proof}

\section{Semiclairvoyant Scheduling}
\label{sec:SCS}
%! Author = NoamTouitou
%! Date = 16/07/2020

In the semiclairvoyant model, we are given the logarithmic class of a job rather than an estimate for its processing time.
That is, for a job $q$ the algorithm is given $\cls{q} = \floor {\log_{\rho} \pt{q}} $ and not $\pt{q}$ (for a constant $\rho > 1$).

Clearly, this model can be reduced to the prediction model as follows.
Upon the arrival of $q$, define $\ept{q} = \rho^{\cls{q}}$ and $\dstr = \rho$.
From the definition of class, it is clear that indeed $\pt{q} \in \IR{\ept{q}}{\dstr\ept{q}}$.

Denote the semiclairvoyant model with the parameter $\rho$ by $\SC[\rho]$.
Applying the theorems of this paper for \prob, we immediately obtain the first results for weighted flow time in the semiclairvoyant setting.

\begin{cor}[of \cref{thm:LogW_Competitiveness,thm:CLP_Competitiveness,thm:CLP_LogDCompetitiveness}]
    In the $\SC[\rho]$ model, there exist:
    \begin{enumerate}
        \item An $O(\log P)$-competitive algorithm (using \cref{thm:CLP_Competitiveness})
        \item An $O(\log D)$-competitive algorithm (using \cref{thm:CLP_LogDCompetitiveness})
        \item An $O(\log W)$-competitive algorithm (using \cref{thm:LogW_Competitiveness})
    \end{enumerate}
\end{cor}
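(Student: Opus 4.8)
The plan is straightforward: make precise the reduction from $\SC[\rho]$ to \prob that was sketched just above the corollary, and then read off the three statements by specializing the cited theorems to the constant distortion $\dstr=\rho$.

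First I would make the reduction precise. Given a job $q$ whose class $\cls{q}=\floor{\log_{\rho}\pt{q}}$ is revealed on arrival, define the prediction $\ept{q}:=\rho^{\cls{q}}$ and set $\dstr:=\rho$. The defining inequality of the floor, $\rho^{\cls{q}}\le \pt{q}<\rho^{\cls{q}+1}$, translates directly into $\ept{q}\le\pt{q}<\dstr\,\ept{q}$, which is exactly the one-sided error bound defining a $\dstr$-robust instance. I would also note two small points: (i) $\rho$ is a fixed parameter of the semiclairvoyant model, hence known to the algorithm in advance, so running the $\rho$-robust variant of each algorithm is legitimate and the ``which value of $\dstr$?'' issue raised in \cref{sec:Disc} does not arise here; and (ii) the class $\cls{q}$ carries at least as much information as the prediction $\rho^{\cls{q}}$, so any algorithm for \prob can be simulated unchanged in the semiclairvoyant setting.

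Then I would instantiate the competitiveness bounds with $\dstr=\rho=\Theta(1)$. \Cref{thm:CLP_Competitiveness} yields a $\rho$-robust, $O(\rho^3\log(\rho P))$-competitive algorithm; since $\rho$ is a constant this is $O(\log(\rho P))=O(\log\rho+\log P)=O(\log P)$ under the usual convention that the parameter is at least a constant (i.e.\ the bound is really $O(1+\log P)$), giving item~1. Identically, \cref{thm:CLP_LogDCompetitiveness} gives $O(\rho^3\log(\rho D))=O(\log D)$ for item~2, and \cref{thm:LogW_Competitiveness} gives $O(\rho^2\log W)=O(\log W)$ for item~3 -- the last being the cleanest, since $\rho$ does not appear inside the logarithm there.

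I do not expect any genuine obstacle: essentially all the content lives in the three theorems, which are assumed. The only things needing any care are the elementary inequality coming from the definition of a logarithmic class, the remark that the known value of $\rho$ lets us deploy the correct robust algorithm, and the purely cosmetic step of absorbing the constant $\rho$ (both the $\rho^{2}$ or $\rho^{3}$ prefactor and the additive $\log\rho$ inside the logarithm) into the asymptotic notation.
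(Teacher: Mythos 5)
Your proposal is correct and matches the paper's argument: the paper proves this corollary exactly by the reduction $\ept{q}=\rho^{\cls{q}}$, $\dstr=\rho$ (so that $\pt{q}\in\IR{\ept{q}}{\dstr\ept{q}}$ by the definition of the class) and then specializes \cref{thm:CLP_Competitiveness,thm:CLP_LogDCompetitiveness,thm:LogW_Competitiveness} with the constant $\rho$ absorbed into the $O(\cdot)$ notation. Your added remarks about $\rho$ being known in advance and the class carrying the same information as the prediction are fine clarifications but not a different route.
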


Note that these results for the semiclairvoyant setting match the best known results for the clairvoyant setting, in terms of all three parameters $P,D,W$.

In addition, we consider the algorithm in \cref{subsec:UW} for unweighted \prob, and show that for the $\SC[\rho]$ model it obtains an improved competitive ratio, which is $2\ceil{\rho}$.
In particular, when $\rho = 2$ this algorithm is $4$ competitive, improving upon the $13$-competitive algorithm of~\cite{DBLP:journals/tcs/BecchettiLMP04}.

\subsection{Unweighted Semiclairvoyant Scheduling}
\label{subsec:SCS_UW}

We now show the following theorem for the competitiveness of the algorithm of \cref{subsec:UW} for the unweighted semiclairvoyant setting.
\begin{thm}
    \label{thm:SCS_UnweightedCompetitiveness}
    In the $\SC[\rho]$ model, the competitive ratio of the algorithm of \cref{subsec:UW} is $2\ceil{\rho}$.
\end{thm}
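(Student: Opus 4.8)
The plan is to revisit the proof of \cref{thm:UW_Competitiveness} and observe that the only place where the distortion parameter $\dstr$ enters quantitatively is through the constant $\cbc$, which is currently set to $\ceil{\dstr^2}$ because of the chain of inequalities $\pt{q'_j} \ge \ept{q'_j} \ge \ept{q^{\star}}/\dstr > \pt{q^{\star}}/\dstr^2$ inside the proof of \cref{prop:UW_RelevantArrival}, together with the notion of a \emph{violation} being defined via the factor $\dstr$. In the $\SC[\rho]$ model the situation is tighter: the reduction to \prob gives $\ept{q} = \rho^{\cls{q}}$, and crucially two jobs have the \emph{same} predicted processing time exactly when they are in the same logarithmic class, in which case their true processing times lie in a common interval $\IR{\rho^{\cls{}}}{\rho^{\cls{}+1}}$ and hence differ by a factor strictly less than $\rho$ (not $\rho^2$). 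So I would first redefine a \emph{violation} of the ordered pair $(q_1,q_2)$ in $Q_F$ to be the condition $\p{F}(q_1) > \p{F}(q_2)$ and $\cls{q_2} < \cls{q_1}$ (equivalently $\rho\cdot\ept{q_2} \le \ept{q_1}$, which is the same inequality as before since predictions are integer powers of $\rho$), and set $\cbc = \ceil{\rho}$.

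The second step is to check that \cref{prop:UW_NoViolations} still goes through verbatim: every step of its case analysis only used the facts that a violation $(q,q')$ implies $\ept{q'} < \ept{q}$ in the strict sense and that the $\sigma$-rotation moves priorities in the manner described; nothing there actually needed the squaring, so with the new (weaker) violation threshold the proof is unchanged, and outside of function calls the bin $F$ remains ``quasi-SRPT'' with respect to classes. The third step is to re-examine \cref{prop:UW_RelevantArrival}, which is the only place where $\cbc$ is forced to be large. In its final computation we have $\cbc$ jobs $q'_1,\dots,q'_{\cbc}$ in $Q_F(t)$ of priority at most that of $q^{\star}=q_i$, and by \cref{prop:UW_NoViolations} (in the new, class-based sense) none of $(q'_j, q^{\star})$ is a violation, so $\cls{q'_j} \ge \cls{q^{\star}}$, which gives $\pt{q'_j} \ge \rho^{\cls{q^{\star}}} \ge \pt{q^{\star}}/\rho \ge \pt{q^{\star}}/\cbc$. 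Hence $\sum_{j=1}^{\cbc}\pt{q'_j} \ge \pt{q^{\star}}$ and \cref{eq:UW_RelevantArrivalClaim} follows with the new $\cbc = \ceil{\rho}$ exactly as before. \cref{prop:UW_JobMoveRetainsVolume}, \cref{prop:UW_IrrelevantArrival}, and \cref{prop:UW_Execution} do not mention $\dstr$ at all, so they are untouched.

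Finally, with these propositions re-established for $\cbc=\ceil{\rho}$, \cref{lem:UW_VolumeLemma} and its proof carry over word for word (the induction over $\tau$ only invokes the four propositions above), yielding $\cov{\cbc\,\delta^*(t)}{t} = V(t)$ at every time $t$; and then the argument in the proof of \cref{thm:UW_Competitiveness} gives $\delta_F(t)\le \cbc\,\delta^*(t)$ and $\delta_P(t)\le \cbc\,\delta^*(t)$, so $\delta(t)\le 2\cbc\,\delta^*(t) = 2\ceil{\rho}\,\delta^*(t)$, and integrating over $t$ proves the theorem. The main obstacle — and really the only point demanding care — is the redefinition of ``violation'' in terms of classes rather than predicted-time ratios, and verifying that \cref{prop:UW_NoViolations} survives this change; once that is in place, the improvement from $2\ceil{\rho^2}$ to $2\ceil{\rho}$ is just the observation that the $\rho$-vs-$\rho^2$ slack in \cref{prop:UW_RelevantArrival} disappears because jobs sharing a prediction also share a single logarithmic interval of width $\rho$.
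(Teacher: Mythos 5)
Your proposal is correct and follows essentially the same route as the paper: the paper also keeps every ingredient of the proof of \cref{thm:UW_Competitiveness} intact and only strengthens \cref{prop:UW_RelevantArrival} to work with $\ceil{\rho}$ instead of $\ceil{\rho^2}$, using exactly your observation that absence of violations forces $\cls{q'_j}\ge\cls{q^{\star}}$ and hence $\pt{q'_j}\ge\pt{q^{\star}}/\rho$. Your ``redefinition'' of a violation via classes is, as you note yourself, equivalent to the original definition with $\dstr=\rho$ because predictions are integer powers of $\rho$, so the paper simply keeps the original definition and draws the class-based consequence inside the proposition's proof.
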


\begin{cor}
    \label{cor:SCS_UnweightedBase2Competitiveness}
    In the $\SC[2]$ model (i.e. base-2 classes) the algorithm of \cref{subsec:UW} is $4$-competitive.
\end{cor}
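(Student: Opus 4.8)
The corollary is immediate from \cref{thm:SCS_UnweightedCompetitiveness}: taking $\rho = 2$ gives competitive ratio $2\ceil{2} = 4$. So the plan is really a plan for \cref{thm:SCS_UnweightedCompetitiveness}, from which the corollary drops out by substitution.

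\textbf{The reduction, and why the naive bound is not enough.} Recall the reduction from $\SC[\rho]$ to \prob: on arrival of $q$ we set $\ept{q} = \rho^{\cls{q}}$ and take distortion parameter $\dstr = \rho$, so that $\pt{q} \in \IR{\ept{q}}{\rho\ept{q}}$. The feature to exploit is that \emph{every predicted processing time is an integer power of $\rho$}. Plugging $\dstr = \rho$ directly into \cref{thm:UW_Competitiveness} only yields $2\ceil{\rho^2}$-competitiveness; to get down to $2\ceil{\rho}$ I will revisit the analysis of \cref{subsec:UW} and replace the bar-height parameter $\cbc = \ceil{\dstr^2}$ by the smaller value $\cbc = \ceil{\rho}$.

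\textbf{Isolating what changes.} I claim that the whole analysis of \cref{subsec:UW} -- \cref{prop:UW_JobMoveRetainsVolume,prop:UW_IrrelevantArrival,prop:UW_NoViolations,prop:UW_Execution}, \cref{lem:UW_VolumeLemma}, and the concluding argument of \cref{thm:UW_Competitiveness} -- is valid for \emph{any} value of $\cbc$ for which \cref{prop:UW_RelevantArrival} holds, and that the competitive ratio it produces is exactly $2\cbc$ (the final step deduces $\delta_F(t),\delta_P(t) \le \cbc\,\delta^*(t)$, hence $\delta(t) \le 2\cbc\,\delta^*(t)$). Indeed, those statements either do not mention $\cbc$ at all or invoke it solely through \cref{prop:UW_RelevantArrival}. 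So it suffices to re-establish \cref{prop:UW_RelevantArrival} in the $\SC[\rho]$ model with $\cbc = \ceil{\rho}$. Its proof is verbatim up to the inequality bounding $\pt{q'_j}$, where $q'_1,\dots,q'_{\cbc} \in Q_F(t)$ are the jobs sitting strictly below $q^{\star}$ with $\wbase{q'_j}{t} = \floor{x+j}$ (so $\p[t]{F}(q'_j) < \p[t]{F}(q^{\star})$). By \cref{prop:UW_NoViolations} the pair $(q^{\star}, q'_j)$ is not a violation, so $\rho\,\ept{q'_j} > \ept{q^{\star}}$; but now $\ept{q'_j}$ and $\ept{q^{\star}}$ are both powers of $\rho$, so this strict inequality sharpens to $\ept{q'_j} \ge \ept{q^{\star}}$. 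Hence $\pt{q'_j} \ge \ept{q'_j} \ge \ept{q^{\star}} > \pt{q^{\star}}/\rho \ge \pt{q^{\star}}/\cbc$, giving $\sum_{j=1}^{\cbc}\pt{q'_j} \ge \pt{q^{\star}}$ -- precisely the quantity the rest of the proof feeds into \cref{eq:UW_RelevantArrivalRising,eq:UW_RelevantArrivalClaim}. Everything downstream goes through unchanged, yielding $\delta(t) \le 2\ceil{\rho}\,\delta^*(t)$; integrating over $t$ proves \cref{thm:SCS_UnweightedCompetitiveness}, and the corollary follows by setting $\rho = 2$.

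\textbf{Main obstacle.} There is no deep difficulty here; the delicate point is confirming that the power-of-$\rho$ structure of the predictions is the \emph{only} place where the $\dstr^2$ loss can be tightened to $\rho$ -- that is, that none of \cref{prop:UW_JobMoveRetainsVolume,prop:UW_IrrelevantArrival,prop:UW_NoViolations,prop:UW_Execution} or \cref{lem:UW_VolumeLemma} secretly needs $\cbc$ as large as $\ceil{\dstr^2}$, so that the single inequality rewritten above is genuinely the whole story.
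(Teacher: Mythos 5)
Your proposal is correct and follows essentially the same route as the paper: the corollary is obtained from \cref{thm:SCS_UnweightedCompetitiveness} by setting $\rho=2$, and your sketch of that theorem mirrors the paper's argument, which re-runs the analysis of \cref{subsec:UW} verbatim after replacing \cref{prop:UW_RelevantArrival} by the strengthened version with bar increment $\ceil{\rho}$ (\cref{prop:SCS_RelevantArrival}), using exactly the observation that predictions are powers of $\rho$ to upgrade the no-violation inequality $\rho\,\ept{q'_j} > \ept{q^{\star}}$ to $\ept{q'_j} \ge \ept{q^{\star}}$. Your accounting of where $\cbc$ enters the remaining propositions and the final $2\cbc$ bound matches the paper's reasoning.
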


As observed before, we have that $\dstr = \rho$.

For ease of notation, we define $\cbp = \ceil{\dstr} = \ceil{\rho}$.
The proof of \cref{thm:SCS_UnweightedCompetitiveness} is identical to that of \cref{thm:UW_Competitiveness}, except for replacing \cref{prop:UW_RelevantArrival} with the following proposition.

\begin{prop}[stronger version of \cref{prop:UW_RelevantArrival}]
    \label{prop:SCS_RelevantArrival}
    Consider a call to $\UponJobRelease{q}$ at time $t$.
    For every $\xbar{x}$, it holds that
    \[ \cov{x+\cbp}{t} \ge \cov{x}{t^-} + \pt{q} \]
\end{prop}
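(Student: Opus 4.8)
The plan is to mimic the proof of \cref{prop:UW_RelevantArrival}, but exploit the fact that in the semiclairvoyant model a violation is detected already when predicted processing times differ by a \emph{single} class, rather than by a factor of $\dstr$. First I would recall the setup: we write $t_0$ for the time immediately after inserting $q$ into $F$ with maximal priority and before the possible rotation, and note (exactly as in the original proof) that $\cov{x}{t_0}\ge\cov{x}{t^-}$, since the bases of all old jobs are unchanged. The case $x\ge\p[t_0]{F}(q)$ is handled verbatim: then $\cont{q}{x}{t_0}=\pt{q}$, and combining with Claim~2 of \cref{prop:UW_IrrelevantArrival} (which gives $\cov{x}{t}\ge\cov{x}{t_0}$) together with monotonicity of $\cov{\cdot}{t}$ in the bar height yields $\cov{x+\cbp}{t}\ge\cov{x}{t}\ge\cov{x}{t^-}+\pt{q}$. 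So we may assume $x<\p[t_0]{F}(q)$.

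Next I would replay the rotation analysis. Let $q_0=q,q_1,\dots,q_m$ be the jobs in violation with $q$ at $t_0$, ordered by decreasing priority, and let $i$ be the minimal index with $\p[t_0]{F}(q_i)\le x$; as before $i>0$. After the rotation, the volume covered by $x$ drops by $\pt{q_i}$ and rises by $\pt{q_0}$, so $\cov{x}{t}\ge\cov{x}{t_0}-\pt{q_i}+\pt{q_0}\ge\cov{x}{t^-}-\pt{q_i}+\pt{q}$. It therefore suffices to establish the analogue of \cref{eq:UW_RelevantArrivalClaim}, namely $\cov{x+\cbp}{t}\ge\cov{x}{t}+\pt{q_i}$. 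Writing $q^\star=q_i$, if $\wbase{q^\star}{t}\le x+\cbp$ we are done immediately since then $\cont{q^\star}{x+\cbp}{t}=\pt{q^\star}$ while $\cont{q^\star}{x}{t}=0$. Otherwise there are $\cbp$ jobs $q'_1,\dots,q'_{\cbp}$ in $Q_F(t)$ with $\wbase{q'_j}{t}=\lfloor x+j\rfloor$, each with $\p[t]{F}(q'_j)\le\p[t]{F}(q^\star)$, and $\cov{x+\cbp}{t}\ge\cov{x}{t}+\sum_{j=1}^{\cbp}\pt{q'_j}$.

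The crux — and the one place where the semiclairvoyant setting gives something stronger — is the lower bound on each $\pt{q'_j}$. Here I would use \cref{prop:UW_NoViolations}: since $(q'_j,q^\star)$ is not a violation, we have $\dstr\cdot\ept{q'_j}>\ept{q^\star}$. In the semiclairvoyant reduction the predictions are powers of $\rho=\dstr$, namely $\ept{q}=\rho^{\cls{q}}$, so the inequality $\dstr\cdot\ept{q'_j}>\ept{q^\star}$ forces $\ept{q'_j}\ge\ept{q^\star}$ outright (a strict $\rho$-factor gap between two powers of $\rho$ cannot be bridged), hence $\pt{q'_j}\ge\ept{q'_j}\ge\ept{q^\star}$. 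Combined with $\pt{q^\star}<\dstr\,\ept{q^\star}\le\ceil{\dstr}\,\ept{q^\star}=\cbp\,\ept{q^\star}$, this gives $\pt{q'_j}\ge\pt{q^\star}/\cbp$, so $\sum_{j=1}^{\cbp}\pt{q'_j}\ge\pt{q^\star}$, which is exactly what is needed. (By contrast, the generic bound in \cref{prop:UW_RelevantArrival} only yields $\ept{q'_j}\ge\ept{q^\star}/\dstr$ and hence needs $\cbc=\ceil{\dstr^2}$ jobs rather than $\cbp=\ceil{\dstr}$.) The main obstacle is purely that last step — making sure the no-violation condition, when specialized to predictions that are integer powers of $\rho$, really upgrades from ``$\ept{q'_j}\ge\ept{q^\star}/\dstr$'' to ``$\ept{q'_j}\ge\ept{q^\star}$''; everything else is a transcription of the earlier argument with $\cbc$ replaced by $\cbp$.
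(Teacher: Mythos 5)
Your proposal is correct and follows essentially the same route as the paper's proof: transcribe the argument of \cref{prop:UW_RelevantArrival} with $\cbc$ replaced by $\cbp$, and use \cref{prop:UW_NoViolations} together with the fact that predictions are integer powers of $\rho$ to upgrade the per-job bound from $\ept{q'_j}\ge\ept{q^\star}/\dstr$ to $\ept{q'_j}\ge\ept{q^\star}$, so that $\cbp$ jobs suffice. One cosmetic slip: since $q^\star$ has the higher priority, the ordered pair whose non-violation you invoke should be $(q^\star,q'_j)$, not $(q'_j,q^\star)$, but the inequality $\dstr\,\ept{q'_j}>\ept{q^\star}$ you derive from it is exactly right.
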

\begin{proof}
    The proof is identical to the proof of \cref{prop:UW_RelevantArrival}, except for replacing the claim that
    \[ \cov{x+\cbc}{t} \ge \cov{x}{t} + \pt{q'}  \]
    with the stronger claim that
    \[ \cov{x+\cbp}{t} \ge \cov{x}{t} + \pt{q'}  \]
    with $q'$ defined as in the proof of \cref{prop:UW_RelevantArrival}.

    From the definition of $q'$ in the original proof, it holds that $\wbase{q'}{t} > x$, and thus $\cont{q'}{x}{t} = 0$.
    If $\wbase{q'}{t} \le x+\cbp$, then $\cont{q'}{x+\cbp}{t} = \pt{q'}$, and the proof is complete.

    Otherwise, it holds that $\wbase{q'}{t} > x+\cbp$.
    Consider the $\cbp$ distinct jobs $q_1,\cdots,q_{\cbp} \in Q_F(t)$ such that $\wbase{q_j}{t} = x+j$ for each index $j$.
    Observe that for each index $j$ it holds that $\cont{q_j}{x}{t} = 0$ and $\cont{q_j}{x+\cbp}{t} = \pt{q_j}$.
    Thus, it holds that
    \begin{equation}
        \label{eq:SCS_RelevantArrivalRising}
        \cov{x+\cbp}{t} \ge \cov{x}{t} + \sum_{j=1}^{\cbp} \pt{q_j}
    \end{equation}

    It holds that $\p[t]{F}(q') > \p[t]{F}(q_j)$ for each $j$.
    Using \cref{prop:UW_NoViolations}, we have that the state of the algorithm at time $t$ contains no violations.
    However, observe that if $\cls{q_j} <\cls{q'}$, then we have that

    \[
        \rho\ept{q_j} = \rho\cdot \rho^{\cls{q_j}}=\rho^{\cls{q_j}+1} \le \rho^{\cls{q'}} = \ept{q'}
    \]
    which is a violation, in contradiction to \cref{prop:UW_NoViolations}.
    Thus, for every $j$ it holds that $\cls{q_j} \ge \cls{q}$, and thus $\pt{q_j} \ge \ept{q_j} \ge \rho^{\cls{q_j}} \ge \rho^{\cls{q'}} \ge \frac{\pt{q'}}{\rho} \ge \frac{\pt{q'}}{\cbp}$.

    Plugging into \cref{eq:SCS_RelevantArrivalRising}, we get
    \[ \cov{x+\cbp}{t} \ge \cov{x}{t} + \pt{q'} \]
    completing the proof.
\end{proof}

This completes the proof of \cref{thm:SCS_UnweightedCompetitiveness}.

\section{Proof of \cref{thm:CLP_LogDCompetitiveness}}
\label{sec:LogD_Proofs}
This section proves \cref{thm:CLP_LogDCompetitiveness}.

Observe that the maximum ratio of \emph{estimated} densities (i.e.~$\frac{\wt{q}}{\ept{q}}$) with respect to the original weight (before rounding) is at most $\dstr D$.
After rounding up to powers of $\bround$, which is $\Theta(\dstr)$, we have that the maximum ratio of densities becomes $O(\dstr^2 D)$.
Hence, the maximum ratio of estimated \id values is also $O(\dstr^2 D)$.
We thus define the number of \id classes to be $\nclass'$, and observe that
  $\nclass' = O(\log (\dstr^2 D) = O(\log (\dstr D))$.

The following lemma immediately implies \cref{thm:CLP_LogDCompetitiveness}, in the same manner in which \cref{lem:CLP_LocalCompetitiveness} implied \cref{thm:CLP_Competitiveness}.

\begin{lem}[analogue of \cref{lem:CLP_LocalCompetitiveness}]
    \label{lem:LogD_LocalCompetitiveness}
    At any point in time $t$, it holds that $\Wt{t} \le O(\nclass' \dstr^2) \cdot \Wt*{t}$.
\end{lem}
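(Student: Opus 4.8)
The plan is to mimic the proof of \cref{lem:CLP_LocalCompetitiveness} essentially verbatim, tracking where the bound $\nclass = \ceil{\log(\dstr P)}+1$ was used and replacing it with the new bound $\nclass'$. The first step is to verify that every ingredient of the $P$-based proof depends on $P$ \emph{only} through the number of \id classes relevant to a fixed weight. Indeed, \cref{lem:CLP_VolumeConversion} and \cref{prop:CLP_DeltaAndWeightBounds} are stated in terms of $\nclass$, where $\nclass$ was defined precisely as the number of \id classes a job of a given weight can occupy. The key observation, which I would establish first, is that in the $D$-bounded setting the \emph{estimated} densities $\frac{\wt{q}}{\ept{q}}$ span a ratio of at most $O(\dstr^2 D)$ after rounding weights to powers of $\bround = \Theta(\dstr)$ (a factor of $\dstr$ from the one-sided error already folded into the predictions, another factor of $\bround = \Theta(\dstr)$ from rounding). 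Hence the number of distinct \id classes is globally $\nclass' = O(\log(\dstr^2 D)) = O(\log(\dstr D))$, which in particular bounds the number of \id classes to which a job of any fixed weight can belong.

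Next I would re-examine \cref{lem:CLP_VolumeConversion}: its proof uses $\nclass$ in exactly one place — bounding $\sum_j \WtpP{\le i}{=j}{t}$ by $4\qw{i}\nclass$ using the fact that each weight class spans at most $\nclass$ \id classes. Since $\nclass' $ also upper-bounds the number of \id classes per weight class, the identical geometric-sum argument yields the analogue of \cref{lem:CLP_VolumeConversion} with $\nclass$ replaced by $\nclass'$; the $\min\{2(j_2-j_1+1),4\nclass\}$ term becomes $\min\{2(j_2-j_1+1),4\nclass'\}$, and the remaining terms (involving $2\dstr\Wt*{t}$ and the $\Delta\Vp{}{}{}$ corrections) are untouched since they never referenced $P$ or $D$. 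Similarly, the proof of \cref{prop:CLP_DeltaAndWeightBounds} uses "$h$ is one of the $\nclass$ \id classes to which $\clw{\lastt{j}}$ can belong" — again this is a statement about a fixed weight class spanning few \id classes, so it goes through with $\nclass'$. The definitions of $\cwc{t},\clw{t},\owc[t]$ and the \cref{obs1} about $\opt$ are parameter-free and need no change.

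Finally, the proof of \cref{lem:CLP_LocalCompetitiveness} itself — showing $\clw{t}\le\owc$, then summing $\Wtp{\le\owc}{=j}{t}$ over $j$, applying the volume-conversion lemma, then \cref{prop:CLP_DeltaAndWeightBounds}, and finally the geometric collapse $8\dstr\nclass\sum_k \qw{\owc-k}\le 16\dstr\nclass\qw{\owc}$ together with $\qw{\owc}\le 4\dstr\Wt*{t}$ — uses $\nclass$ only as "the number of \id classes per weight class" and as the multiplicative constant in the final bound. Replacing $\nclass$ by $\nclass'$ throughout gives $\Wt{t}\le O(\nclass'\dstr^2)\Wt*{t}$, which is \cref{lem:LogD_LocalCompetitiveness}. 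I do not anticipate any genuine obstacle here: the only thing to get right is the bookkeeping on why the estimated-density ratio is $O(\dstr^2 D)$ rather than, say, $O(\dstr D)$ — i.e.\ correctly accounting for \emph{both} the prediction distortion and the weight rounding — and confirming that $O(\log(\dstr^2 D)) = O(\log(\dstr D))$, which is immediate since $\log(\dstr^2 D) \le 2\log(\dstr D)$ for $\dstr, D \ge 1$. Everything else is a transcription of the $P$-based argument with $\nclass \mapsto \nclass'$.
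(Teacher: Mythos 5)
Your proposal is correct and follows essentially the same route as the paper: both rest on the observation that, after accounting for the one-sided prediction error and the rounding of weights to powers of $\bround=\Theta(\dstr)$, the estimated \id values span a ratio of $O(\dstr^2 D)$, so there are only $\nclass' = O(\log(\dstr D))$ \id classes, and then the computation of \cref{lem:CLP_LocalCompetitiveness} is rerun with $\nclass'$ in place of $\nclass$. The paper's write-up is marginally simpler at the end---since $\nclass'$ bounds the \emph{total} number of \id classes it invokes the $2(j_{\max}-j_{\min}+1)$ branch of \cref{lem:CLP_VolumeConversion} and bounds each $\qw{\clw{\lastt{j}}}$ by $\qw{\owc}$ directly via \cref{prop:CLP_DeltaAndWeightBounds} (whose proof, incidentally, never used $\nclass$; the per-weight-class counting you cite lives in the proof of \cref{lem:CLP_LocalCompetitiveness})---but your transfer of the original counting and geometric collapse is also valid, precisely because $\nclass'$ in particular bounds the number of \id classes available to any fixed weight class.
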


\begin{proof}[Proof of \cref{lem:LogD_LocalCompetitiveness}]
    As noted in the proof of \cref{lem:CLP_LocalCompetitiveness}, we have $\clw{t}\le \owc$.
    Define $j_{\min}$ and $j_{\max}$ to be the minimum and maximum \id classes, respectively.
    We have
    \begin{align*}
        \Wt{t}
        &= \sum_{j=j_{\min}}^{j_{\max}} \Wtp{\le \owc}{=j}{t} \\
        &\le 2(j_{\max}-j_{\min}+1) \qw{\owc} + 2\dstr\Wt*{t} + \sum_{j=j_{\min}}^{j_{\max}} 2\max\pc{0, \frac{\Delta \Vp{\le \owc}{\le j}{t}}{2^j}}\\
        &\le 2\nclass' \qw{\owc} + 2\dstr\Wt*{t} + \sum_{j=j_{\min}}^{j_{\max}} 8\dstr\cdot \qw{\clw{\lastt{j}}}
    \end{align*}
    using the same arguments as in the proof of \cref{lem:CLP_LocalCompetitiveness}.
    Now, observe that for every \id class $j$ we have $\clw{\lastt{j}} \le \owc$ from \cref{prop:CLP_DeltaAndWeightBounds}.
    Thus
    \begin{align*}
        \Wt{t} &\le (8\dstr+2)\nclass' \qw{\owc}+ 2\dstr \Wt*{t} \le O(\nclass'\dstr^2)\Wt*{t}\qedhere
    \end{align*}
\end{proof}

%\section{Special Rule Stuff}
%\label{sec:UWS}
%\input unweighted_special.tex

\end{document}